\providecommand{\U}[1]{\protect\rule{.1in}{.1in}}
\newtheorem{theorem}{Theorem}
\newtheorem{assumption}{Assumption}
\newtheorem{corollary}{Corollary}[theorem]
\newenvironment{example}[1][]{%
  \par\medskip\noindent\textbf{Example%
  \ifx\relax#1\relax\else\ (#1)\fi.}\ \itshape%
}{%
  \par\medskip
}
\newtheorem{lemma}{Lemma}
\newtheorem{proposition}{Proposition}
\newtheorem{remark}{Remark}[section]
\newenvironment{proof}[1][Proof]{\noindent \textbf{#1:} }{\  \rule{0.5em}{0.5em}}
\newcommand{\urltilde}{\kern -.15em\lower .7ex\hbox{~}\kern .04em}
\def \@seccntformat#1{\csname the#1\endcsname.\quad}
\numberwithin{equation}{section}
\begin{document}
\begin{bibunit}[ectabib]   % main text bibunit
\title{Debiased Machine Learning U-Statistics\thanks{\textbf{Escanciano}
(corresponding author): Universidad Carlos III de Madrid
(jescanci@eco.uc3m.es). \textbf{Terschuur}: Technical University Munich
(joel.terschuur@tum.de). Research supported by MICIN/AEI/10.13039/501100011033, grant
CEX2021-001181-M, Comunidad de Madrid, grants EPUC3M11 (V PRICIT) and
H2019/HUM-5891, and grant PID2021-127794NB-I00 (MCI/AEI/FEDER, UE). We thank conference participants at numerous institutions, the Editor and three anonymous referees
for useful comments. This paper is a revised version of the arXiv preprint ``Machine learning inference on inequality of opportunity'', arXiv:2206.05235.}}
\author{Juan Carlos Escanciano
\and Jo\"{e}l R. Terschuur}
\date{\today}
\maketitle
\begin{abstract}
We propose a method to debias estimators based on U-statistics with machine-learning (ML) first steps. Standard plug-in estimators often suffer from regularization and model-selection biases, leading to invalid inference. We characterize orthogonal adjustment terms for two-step U-statistics, construct Debiased Machine Learning (DML) U-estimators, develop a cross-fitting algorithm, and establish a general asymptotic theory for inference with ML first steps. We illustrate the methodology with applications to Inequality of Opportunity (IOp), the Area Under the ROC Curve, and conditional moment restrictions. For the latter setting, we introduce Kernel-DML, a class of orthogonal estimators based on identification-preserving kernel distance criteria, and apply it to semiparametric production function estimation. Using European survey data, we provide debiased ML-based estimates of income IOp and show that the widely used Conditional Inference Forest plug-in approach can substantially underestimate IOp. Moreover, plug-in results vary substantially across ML methods, whereas the debiased estimates are considerably more stable.

\bigskip
\textbf{JEL Classification:} C13; C14; C21; D31; D63

\textbf{Keywords:} local robustness, orthogonal moments, machine learning, U-statistics, Inequality of Opportunity, AUC, conditional moment restrictions.

\textbf{R package:} \url{https://joelters.github.io/home/code/}
\end{abstract}

\section{Introduction}

A fast-growing literature in economics and other sciences uses Machine Learning (ML) to estimate high-dimensional first-step nuisance functions in two-step procedures. While ML improves flexibility, regularization and model selection can introduce first-step bias that contaminates downstream inference.

% This concern has motivated a recent and expanding literature on Debiased
% Machine Learning (DML); see, among many others,
% \cite{chernozhukov2015valid}, \cite{chernozhukov2018double}, and
% \cite{chernozhukov2022locally}. DML methods address
% second-step bias by constructing Neyman-orthogonal moments---identifying moments whose derivatives with respect to the first-step limit vanish. Combined with cross-fitting---a form of sample splitting---these methods have proven effective in settings where the target parameter is
% a population average of a function of the observed data and the first-step ML
% limit. Importantly, the existing DML literature focuses almost exclusively
% on parameters of this form---such as average treatment effects or other
% policy-relevant expectations.

This concern has motivated a large literature on Debiased Machine Learning (DML); see, among many others, \cite{chernozhukov2015valid, chernozhukov2018double, chernozhukov2022locally}. DML methods reduce second-step bias by constructing Neyman-orthogonal moments, whose derivatives with respect to the first-step limit vanish. Combined with cross-fitting, these methods are effective when the target parameter is a population average of a function of the observed data and the ML first-step limit, as in average treatment effects and related policy parameters.

% However, many parameters of interest in economics are more complex functionals
% of the data than simple population averages. This paper takes the next natural step in
% complexity by considering target parameters that involve pairs of independent observations. These ``double population averages'' give rise to U-statistics
% (see \cite{lee2019u} for a review). We show that
% U-statistics involving functions of the data and an ML first-step estimator provide a flexible framework for inference in a variety
% of relevant settings in economics. These include the measurement of inequality
% of opportunity, the evaluation of ML predictions in
% classification, and identification-preserving estimation in conditional moment restrictions with ML first-steps.

Many parameters of economic interest, however, are more complex than population averages over a single observation. This paper takes the next natural step by considering targets involving pairs of independent observations. These ``double population averages'' give rise to U-statistics (see \cite{lee2019u}) and cover relevant problems such as inequality of opportunity, ML-based classification evaluation metrics, and conditional moment restrictions with ML first-steps such as those used in production function estimation.

% To introduce the main problem, let $W_{i}$ denote a data observation for unit
% $i$, and let $\gamma$ denote a high-dimensional or ML population first-step. The DML literature has considered target parameters such as
% \[
% \theta(\gamma)=\mathbb{E}[m(W_{i},\gamma)],
% \]
% for a known moment function $m$ of the data and the first-step $\gamma$. In contrast, this
% paper considers parameters of the form
% \[
% \theta(\gamma)=\mathbb{E}[g(W_{i},W_{j},\gamma)],
% \]
% for a symmetric and known moment function $g$ involving pairs of
% independent observations.\footnote{Note that our setting nests the previous setting as a special case since we can always
% write $\theta(\gamma)=\mathbb{E}[m(W_{i},\gamma)]=\mathbb{E}[g(W_{i}%
% ,W_{j},\gamma)],$ for $g(W_{i},W_{j},\gamma)=\left[  m(W_{i},\gamma
% )+m(W_{j},\gamma)\right]  /2,$ provided $W_{i}$ and $W_{j}$ have the same
% distribution.}

To introduce the main problem, let $W_i$ denote an observation and let $\gamma$ denote a high-dimensional or ML population first-step. Standard DML settings consider targets such as
\[
\theta(\gamma)=E[m(W_i,\gamma)],
\]
for a known function $m$. In contrast, this paper considers pairwise targets of the form
\[
\theta(\gamma)=E[g(W_i,W_j,\gamma)],
\]
for a symmetric and known moment function $g$ involving pairs of
independent observations.\footnote{Note that our setting nests the previous setting as a special case since we can always
write $\theta(\gamma)=\mathbb{E}[m(W_{i},\gamma)]=\mathbb{E}[g(W_{i}%
,W_{j},\gamma)],$ for $g(W_{i},W_{j},\gamma)=\left[  m(W_{i},\gamma
)+m(W_{j},\gamma)\right]  /2,$ provided $W_{i}$ and $W_{j}$ have the same
distribution.} 

% In both cases, the main econometric challenge is that the asymptotic
% behaviour of two-step estimators $\hat{\theta}(\hat{\gamma})$ depends
% critically on the term
% \begin{equation}
% \frac{\partial\theta(\gamma_{0})}{\partial\gamma}\left[  \hat{\gamma}%
% -\gamma_{0}\right]  ,
% \label{main}%
% \end{equation}
% where $\hat{\gamma}$ is the first-step estimator and $\gamma_{0}$ its
% probabilistic limit.\footnote{Heuristically, $\hat{\theta}(\hat{\gamma})\approx\hat{\theta}(\gamma_{0})+\frac{\partial\theta(\gamma_{0})}{\partial\gamma}\left[  \hat{\gamma}%
% -\gamma_{0}\right]$, see Section 6 in \cite{newey1994large}.} The precise sense of the derivative in (\ref{main}) is defined in Section \ref{Construction}.
% A key feature of an ML estimator $\hat{\gamma}$ is that
% its estimation error $\hat{\gamma}-\gamma_{0},$ and, in particular, its bias
% $\mathbb{E}[\hat{\gamma}-\gamma_{0}],$ is often very large. This is a common consequence of regularization and model selection in high-dimensional ML procedures.
% This means that, unless (\ref{main}) is zero, the large bias of the first-step will propagate into the second-step, invalidating inference and leading to potentially misleading empirical conclusions.

In both cases, the main econometric challenge is that the asymptotic behavior of two-step estimators $\hat{\theta}(\hat{\gamma})$ depends critically on
\begin{equation}
\frac{\partial\theta(\gamma_{0})}{\partial\gamma}\left[  \hat{\gamma}%
-\gamma_{0}\right]  ,
\label{main}%
\end{equation}
where $\hat{\gamma}$ is the first-step estimator and $\gamma_{0}$ its probabilistic limit.\footnote{Heuristically, $\hat{\theta}(\hat{\gamma})\approx\hat{\theta}(\gamma_{0})+\frac{\partial\theta(\gamma_{0})}{\partial\gamma}\left[  \hat{\gamma}%
-\gamma_{0}\right]$, see Section 6 in \cite{newey1994large}.} The derivative is defined formally in Section 2.1. Because ML first-steps often have non-negligible bias from regularization and model selection, this term can invalidate inference unless it is zero or sufficiently small and lead to potentially misleading empirical conclusions.

% Moment functions $m$ yielding a zero-derivative property in (\ref{main}) are called
% Neyman-orthogonal moment functions.\footnote{We use the terms Neyman-orthogonal, orthogonal, locally robust or debiased interchangeably. } \cite{chernozhukov2022locally} have
% shown how to construct such locally robust moment functions for parameters
% $\theta(\gamma)$ that solve Generalized Method of Moments (GMM) criteria. Our contribution extends these ideas to quadratic or U-statistic functionals of the form
% $\theta(\gamma)=\mathbb{E}[g(W_i,W_j,\gamma)].$ Moment functions $g$ satisfying the
% zero-derivative property in (\ref{main}) will also be called \textit{debiased},
% and their corresponding sample analog (or cross-fitted version) $\hat{\theta}(\hat{\gamma})$ based on an ML first-step
% $\hat{\gamma}$ will be referred to as a DML U-estimator.

Moment functions $m$ satisfying this zero-derivative property are called Neyman-orthogonal.\footnote{We use the terms Neyman-orthogonal, orthogonal, locally robust or debiased interchangeably. } \cite{chernozhukov2022locally} show how to construct such locally robust moment functions for parameters that solve Generalized Method of Moments (GMM) criteria. We adapt this construction to quadratic, or U-statistic, functionals of the form $\theta(\gamma)=\mathbb{E}[g(W_i,W_j,\gamma)]$. We refer to the resulting cross-fitted estimators as DML U-estimators.

% In this paper, we construct debiased moment functions for two-step U-estimators, analyze the asymptotic properties of the resulting DML U-estimators, and illustrate the framework with applications to
% some relevant problems in economics, including inequality of opportunity, ML-based
% prediction evaluation metrics for classification, and identification-preserving estimation of parameters defined by conditional moment restrictions with ML first-steps.

% Our main theoretical contribution is to extend orthogonal/debiased machine learning methods from standard population-average settings to U-statistics. We develop a general construction of orthogonal moments for pairwise estimators, introduce a cross-fitting procedure adapted to U-statistics, and establish asymptotic normality and valid inference under high-level conditions allowing for high-dimensional ML nuisance parameters.

Our main contribution is to extend DML from population-average moments to U-statistics. We construct debiased moments for two-step U-estimators, introduce a cross-fitting procedure adapted to pairwise estimators, and establish asymptotic normality and valid inference under high-level conditions allowing for high-dimensional ML first steps. The framework applies to a broad class of U-statistic problems arising in economics, including inequality of opportunity, ML-based classification metrics, and conditional moment restrictions. 

% Our first application is to Inequality of Opportunity (IOp), i.e., the share of inequality explained by circumstances beyond an individual's control, such as parental income, education, sex, or social origin. While intergenerational mobility measures remain more widely used in empirical work because of their simpler econometric structure, IOp measures allow for richer circumstance sets and have increasingly been estimated using flexible ML methods (e.g., \cite{brunori2019inequality, brunori2021evolution, brunori2021roots, carranza2022, hufe2022lower, salas2022inheritances, bernardo2025model}). We study one of the most widely used IOp measures---the Gini coefficient of predicted income from circumstances---and propose a simple debiased estimator that is valid across a broad class of ML methods.\footnote{The Online Appendix \ref{app_Othermeasures} extends the analysis to other popular inequality measures, including the Atkinson and Generalized Entropy indices.} We establish asymptotic normality for the debiased estimator, which requires a novel analysis to control the nonlinear remainder terms arising from the limited smoothness of the Gini functional. In particular, we uncover a trade-off between the smoothness properties and the required rates for the ML estimators.

Our first application is to Inequality of Opportunity (IOp): the share of inequality explained by circumstances beyond individual control, such as parental income, education, sex, or social origin. Although intergenerational mobility measures remain more common because of their simpler econometric structure, IOp measures allow for richer circumstance sets and are increasingly estimated with flexible ML methods; see, for example, \cite{brunori2019inequality, brunori2021evolution, brunori2021roots, carranza2022, hufe2022lower, salas2022inheritances, bernardo2025model}.

We focus on one of the most widely used IOp measures: the Gini coefficient of predicted income from circumstances. We propose a simple debiased estimator that is valid for a broad class of ML first-steps and establish its asymptotic normality. The main technical challenge is the non-smoothness of the Gini functional, which complicates the control of nonlinear remainder terms. To address this challenge, we develop a novel analysis that reveals a trade-off between the smoothness of the target functional and the convergence rates required of the ML first-steps.

% A second application concerns the evaluation of ML classifiers through the Area Under the Receiver Operating Characteristic Curve (AUC), one of the most widely used measures of predictive accuracy across fields, including economics.\footnote{See, among many others, \cite{einav2018predictive, kleinberg2018human, herrera2020political, baron2021banking, jorda2021bank, aydin2022consumption, bazzi2022promise, broner2022fiscal, chan2022selection, dellavigna2024bottlenecks, ludwig2024machine, muller2024credit}.} Despite its widespread empirical use, the theoretical validity of standard AUC inference procedures remains largely unresolved when the classifier is estimated using ML methods. We show that the AUC admits a locally robust representation as a functional of our debiased Gini estimand and the proportion of observations in the positive class, thereby providing a theoretical justification for asymptotically valid inference on the AUC for ML classifiers.

Our second application provides inference on the Area Under the Receiver Operating Characteristic Curve (AUC), arguably the leading scalar performance measure for binary classifiers in economics and, more broadly, in empirical work using machine learning.\footnote{See, among many others, \cite{einav2018predictive, kleinberg2018human, herrera2020political, baron2021banking, jorda2021bank, aydin2022consumption, bazzi2022promise, broner2022fiscal, chan2022selection, dellavigna2024bottlenecks, ludwig2024machine, muller2024credit}.} Despite the ubiquity of AUC in ML classification, existing inference theory largely treats the classifier score as fixed or observed, or accounts for score estimation only in parametric and low-dimensional settings (see \cite{hanley1982meaning,newson2006confidence,lahiri2018confidence,hsu2021inference,feng2026maximum}). Thus, existing results do not cover the canonical modern use case. We show that the AUC admits a locally robust representation as a functional of our debiased Gini estimand and the share of observations in the positive class. This yields, to the best of our knowledge, the first general inference theory for AUC with generic nonparametric ML classifiers.

% Our third application considers the estimation of parameters identified by conditional moment restrictions with ML first-steps. For this setting, we introduce Kernel-DML, a class of identification-preserving orthogonal estimators based on kernel distance criteria. We illustrate the methodology with an application to semiparametric production function estimation, using the proxy-variable approach of \citet{olley1996dynamics}, \citet{levinsohn2003estimating}, or \citet{ackerberg2015identification}, but where flexible ML methods are used to estimate high-dimensional nuisance components and identification is preserved through our Kernel-DML approach. Our estimator complements debiased production function estimators in \cite{arganaraz2025automatic} based on a finite number of instruments.

Our third application considers parameters identified by conditional moment restrictions with ML first-steps. Classical inference for such models has largely relied on sieve, kernel smoothing, minimum-distance, and two-step GMM methods; see, among others, \citet{ai2003efficient}, \citet{ackerberg2014asymptotic}, \citet{escanciano2014uniform}  and references therein. We introduce Kernel-DML, a class of orthogonal estimators based on identification-preserving kernel distance criteria. We illustrate the method in semiparametric production function estimation with proxy variables, as in \citet{olley1996dynamics}, \citet{levinsohn2003estimating}, and \citet{ackerberg2015identification}, but allowing flexible ML methods to estimate high-dimensional nuisance components. The kernel criterion preserves the identifying content of the conditional moment restriction, avoiding the potential identification loss that arises when it is reduced to finitely many instruments \citep{dominguez2004consistent}. Our estimator complements the finite-instrument debiased production function approach of \cite{arganaraz2025automatic}.
%Our third application considers parameters identified by conditional moment restrictions with ML first-steps. We introduce Kernel-DML, a class of identification-preserving orthogonal estimators based on kernel distance criteria. We illustrate the method in semiparametric production function estimation with proxy variables, as in \citet{olley1996dynamics}, \citet{levinsohn2003estimating}, and \citet{ackerberg2015identification}, but allow flexible ML methods to estimate high-dimensional nuisance components. The kernel criterion preserves the identifying content of the conditional moment restriction, avoiding the potential identification loss that arises when it is reduced to finitely many instruments. Our estimator complements the finite-instrument debiased production function approach of \cite{arganaraz2025automatic}.

The simulations and empirical application demonstrate that the gains from debiasing can be substantial in practice. Monte Carlo simulations for the IOp application show that conventional ML-based plug-in estimators suffer from substantial regularization and model-selection biases, whereas our debiased IOp estimator substantially reduces these distortions and delivers valid inference. Using the 2019 wave of the EU-SILC survey, we provide the first debiased ML-based estimates of IOp across 29 European countries and find that commonly employed plug-in estimators systematically underestimate IOp, with differences reaching up to 10 percentage points in some cases. Our debiased estimates are also substantially more stable across ML methods.
% Beyond the applications studied here, U-statistics arise naturally in pairwise estimators, higher-order orthogonality, fairness measures, causal inference, dyadic data, multiway clustering and policy learning. The proposed cross-fitting construction may also be useful more broadly for higher-order orthogonal estimators. Existing implementations of higher-order orthogonality methods often rely on standard sample-splitting schemes designed for first-order moments, whereas U-statistic and higher-order estimators require more careful partitioning structures to preserve independence while ensuring exhaustive coverage of interaction terms. We expect the cross-fitting methods developed in this paper to be useful in future work on higher-order debiasing and orthogonalization methods.

Beyond the applications studied here, U-statistics arise naturally in a variety of econometric problems, including pairwise difference estimators, fairness measures, causal inference, dyadic data, and policy learning. We expect the orthogonalization and cross-fitting ideas developed in this paper to be useful in some of these settings, as well as in future work on higher-order debiasing.

%Beyond the applications studied here, U-statistics arise naturally in pairwise difference estimators, higher-order orthogonality, fairness measures, causal inference, dyadic data, multiway clustering, and policy learning. Our cross-fitting construction based on pairs is also useful for higher-order debiasing, where standard sample-splitting schemes designed for first-order moments make an inefficient use of the sample information. We expect the cross-fitting methods developed in this paper to be useful in higher-order debiasing and other orthogonalization methods involving U-statistics.

The remainder of the paper is organized as follows. Section \ref{sec_methodology} presents the methodology and cross-fitting procedure. Section \ref{sec_ufsif_construction} introduces the construction of orthogonal adjustment terms. Section \ref{sec_AN} establishes the asymptotic theory. Section \ref{sec_App} presents applications to inequality measurement, predictive accuracy evaluation, and conditional moment restrictions. Sections \ref{sec_simulations} and \ref{sec_empapp} report simulation results and the IOp empirical application, respectively. Proofs and additional results are collected in the appendices and Online Appendix.
\section{Methodology}

\label{sec_methodology}

We now introduce the methodology in a more general setting. Let $W_{i}%
=(Y_{i},X_{i})$, for $i=1,...,n$, be independent and identically distributed (i.i.d.) data with unknown distribution
$F_{0}$. Let $\gamma_{0}$ be an unknown first-step function and $\theta_0 \in \Theta \subseteq \mathbb{R}^d$ a finite-dimensional
parameter of interest. We form pairs $(W_{i},W_{j}),$ where $W_{j}$ is an independent copy of
$W_{i}$, with realizations $(w_{i},w_{j})$. Let $\mathbb{E}[\cdot]$ denote the
expectation under $F_{0}$. We assume there is
a vector $g(w_{i},w_{j},\gamma,\theta)$ of $p$ known identifying moment
functions such that
\[
\mathbb{E}[g(W_{i},W_{j},\gamma_{0},\theta)]=0\iff\theta=\theta_{0}\in\Theta.
\]
Without loss of generality, we can assume that $g$ is a symmetric function in
$w_{i}$ and $w_{j}$.\footnote{Otherwise, replace $g(w_{i},w_{j},\gamma
_{0},\theta)$ by $g^{*}(w_{i},w_{j},\gamma_{0}%
,\theta)=(1/2)[g(w_{i},w_{j},\gamma_{0},\theta)+g(w_{j},w_{i},\gamma
_{0},\theta)]$.} By independence between $W_{i}$
and $W_{j}$,
\begin{equation}
\label{quadratic}
\mathbb{E}[g(W_{i},W_{j},\gamma_{0},\theta)]=\int\int g(w_{i},w_{j},\gamma
_{0},\theta)F_{0}(dw_{i})F_{0}(dw_{j}).
\end{equation}
A key feature of (\ref{quadratic}) is that it is a \emph{quadratic functional} of $F_{0}$.
This is what differentiates our analysis from the standard debiasing
literature (e.g., \cite{chernozhukov2018double}). It naturally gives rise 
to the U-statistic\footnote{This is called a second-order U-statistic. A discussion on the extension of our results to $r-$order U-statistics, $r\geq2$, is provided in the Online Appendix \ref{app_r_order}.}
\[
U_{n}g(\cdot,\gamma_{0},\theta)=\frac{2}{n(n-1)}\sum_{i<j}g(W_{i},W_{j}%
,\gamma_{0},\theta),
\]
where, henceforth, we use the short notation $\sum_{i<j}=\sum_{i=1}^{n-1}%
\sum_{j=i+1}^{n}$. Plugging in a first-step estimator $\hat{\gamma}$ for $\gamma_{0}$ leads to the plug-in estimator $\hat{\theta}^{P}$ solving
\begin{equation}
\label{U-estimator}
U_{n}g(\cdot,\hat{\gamma},\hat{\theta}^{P})=0.
\end{equation}
Plug-in estimators are generally biased by model selection and/or regularization in the first-step. To explain the source of this bias heuristically, consider a first-order expansion of the estimating equation (\ref{U-estimator}) around $(\gamma_{0},\theta_{0})$:
\[
0\approx U_{n}g(\cdot,\gamma_{0},\theta_{0})
+\frac{\partial U_{n}g(\cdot,\gamma_{0},\theta_{0})}{\partial\theta}
(\hat{\theta}^{P}-\theta_{0})
+\frac{\partial U_{n}g(\cdot,\gamma_{0},\theta_{0})}{\partial\gamma}
[\hat{\gamma}-\gamma_{0}].
\]
Rearranging,
\[
\hat{\theta}^{P}-\theta_{0}
\approx
-\left(
\frac{\partial U_{n}g(\cdot,\gamma_{0},\theta_{0})}{\partial\theta}
\right)^{-1}
\left[
U_{n}g(\cdot,\gamma_{0},\theta_{0})
+
\frac{\partial U_{n}g(\cdot,\gamma_{0},\theta_{0})}{\partial\gamma}
[\hat{\gamma}-\gamma_{0}]
\right].
\]
The first term in brackets is the infeasible estimator using the true first-step $\gamma_{0}$. The second term captures the propagation of first-step estimation error into the second-step estimator. Due to regularization bias, this second term can dominate the asymptotic behavior of the plug-in estimator unless the derivative with respect to $\gamma$ is zero or sufficiently small.\footnote{Intuitively, ML algorithms trade off bias and variance, allowing some first-step bias if it improves prediction accuracy. This first-step bias can then propagate into the second-step estimator, distorting the estimation of the target parameter.
} In the population, the same bias propagation can be seen by differentiating the identifying restriction $\mathbb{E}[g(W_{i},W_{j},\gamma,\theta(\gamma))]=0$ with respect to $\gamma$ at $\gamma_{0}$, yielding
\begin{equation}
\label{deriv}
\frac{\partial\theta(\gamma_{0})}{\partial\gamma}=-\left[  \frac{\partial
\mathbb{E}[g(W_{i},W_{j},\gamma_{0},\theta_0)]}{\partial\theta}\right]
^{-1} \frac{\partial}{\partial\gamma}%
\mathbb{E}[g(W_{i},W_{j},\gamma_{0},\theta_0)].
\end{equation}
The first factor in (\ref{deriv}) is a standard derivative, whereas the second factor is a functional derivative whose formal definition requires some care. The next section formalizes these heuristics and shows how to construct identifying moments with a zero functional derivative.

\subsection{Construction of debiased moments}
\label{Construction}

To reduce the first-step bias, we aim to find an adjustment term $\phi$ for the original moment function $g$ such that $\psi=g+\phi$ satisfies $\mathbb{E}[\psi(W_{i},W_{j},\gamma_{0},\theta)]=0 \Longleftrightarrow \theta=\theta_0$, and
\begin{equation}
\label{deriv0}
\frac{\partial}{\partial\gamma}\mathbb{E}[\psi(W_{i},W_{j},\gamma_{0},\theta_0)]=0,
\end{equation}
so the derivative in (\ref{deriv}) with $g$ replaced by $\psi$ is zero. We now formalize this statement and show how such $\phi$ can be found. Let $F_0$ again
be the cumulative distribution function of $W_{i}$ and let $\gamma(F)$ denote the probabilistic limit of the first-step $\hat{\gamma}$ when $W_{i}$ is distributed as $F$, as in
\cite{newey1994asymptotic}, with $F$ unrestricted except for regularity conditions like existence of certain moments. For
example, for $\gamma(F)=\mathbb{E}_{F}[Y|X]$ we require $\mathbb{E}%
_{F}[|Y|]<\infty$, where $\mathbb{E}_{F}$ denotes expectation under $F.$  Let $F_{\tau}=F_{0}+\tau(H-F_{0})$ be a deviation from $F_{0}$ along some
alternative distribution $H$, with $\tau\in\lbrack0,1]$. $H$ is unrestricted except that $\gamma_{\tau}=\gamma(F_{\tau})$ exists
for $\tau$ small enough and other regularity conditions are satisfied, as in \cite{ichimura2022influence}. We aim to find a function $\phi$ such that
\begin{equation}
\mathbb{E}_{F_{\tau}}[\phi(W_{i},W_{j},\gamma(F_{\tau}),\alpha(F_{\tau
},\theta),\theta)]=0\text{ for all }\tau\in\lbrack0,\bar{\tau}),\text{ }\bar{\tau}\in[0,1],
\label{1}%
\end{equation}
and
\begin{equation}
\frac{d}{d\tau}\mathbb{E}[g(W_{i},W_{j},\gamma(F_{\tau}),\theta)]
=
\int\int
\phi(w_{i},w_{j},\gamma_{0},\alpha_{0}(\theta),\theta)
K_{H}\left(dw_{i},dw_{j}\right),
\label{2}%
\end{equation}
for all $\theta$ and all $H$, with $K_{H}\left(  dw_{i},dw_{j}\right)  =F_{0}(dw_{i})H(dw_{j})+H(dw_{i}%
)F_{0}(dw_{j}).$ Here, $\alpha_{0}$ is an additional nuisance parameter which
satisfies (\ref{2}) and might depend on $\theta$, and henceforth, $d/d\tau$ denotes the derivative from the right with respect to
$\tau$ evaluated at $\tau=0$. We assume without loss of generality that $\phi$ is
symmetric in $w_{i},w_{j}$.\footnote{As with $g,$ setting $\phi^{\ast}%
(w_{i},w_{j},\gamma_{0},\alpha_{0}(\theta),\theta)=(1/2)[\phi(w_{i},w_{j},\gamma
_{0},\alpha_{0}(\theta),\theta)+\phi(w_{j},w_{i},\gamma_{0},\alpha_{0}(\theta),\theta)]$, we
have $\int\int\phi(w_{i},w_{j},\gamma_{0},\alpha_{0}(\theta),\theta)K_{H}%
(dw_{i},dw_{j})=\int\int\phi^{\ast}(w_{i},w_{j},\gamma_{0},\alpha_{0}(\theta)%
,\theta)K_{H}(dw_{i},dw_{j})$ and $\phi^{\ast}$ is symmetric.} 

Equation (\ref{1}) is a zero mean condition on $\phi$. The pathwise derivative in
(\ref{2}) formalizes the heuristic derivative
$\partial \mathbb{E}[g(W_i,W_j,\gamma_0,\theta_0)]/\partial\gamma$
of the previous section. A function $\phi$ can be found by solving the functional equation (\ref{2}). Such a function is generally not unique, since the same pathwise derivative may admit multiple orthogonal representations.\footnote{An extended discussion of this non-uniqueness and alternative representations can be found in a previous version of this paper \citep{escanciano_terschuur_2023}.} Any solution gives the same first-order asymptotic effect of the first-step $\gamma(F)$ on the functional
$\mu(F)=\mathbb{E}[g(W_i,W_j,\gamma(F),\theta)]$
as $F$ varies away from $F_0$ in any direction $H$. In Section
\ref{sec_ufsif_construction} we present new results identifying $\phi$ for
first-steps satisfying orthogonality conditions, including high-dimensional regressions.

To motivate our definition of the adjustment term $\phi$, note that equation
(\ref{1}) is a zero mean condition and it implies by the chain rule
\begin{align}
\frac{d}{d\tau}\mathbb{E}[\phi(W_{i},W_{j},\gamma(F_{\tau}),\alpha(F_{\tau
},\theta),\theta)]  &  =-\frac{d}{d\tau}\mathbb{E}_{F_{\tau}}[\phi(W_{i},W_{j}%
,\gamma(F_{0}),\alpha(F_{0},\theta),\theta)]\nonumber\label{3}\\
&  =-\int\int\phi(w_{i},w_{j},\gamma_{0},\alpha_{0}(\theta),\theta)K_{H}(dw_{i}%
,dw_{j}),
\end{align}
for all $\theta$ and all $H$. Equation (\ref{3}) shows that the effect the first-steps have on $\phi$ \textquotedblleft cancels out\textquotedblright\ with
the effect they have on the original identifying moment in (\ref{2}). Thus, letting $\psi(w_{i},w_{j},\gamma,\alpha(\theta),\theta)=g(w_{i},w_{j}%
,\gamma,\theta)+\phi(w_{i},w_{j},\gamma,\alpha(\theta),\theta)$, the total first-step effect on $\psi$ is zero, i.e., we have local robustness (by (\ref{2}) and (\ref{3}))
\[
\frac{d}{d\tau}\mathbb{E}[\psi(W_{i},W_{j},\gamma(F_{\tau}),\alpha(F_{\tau
},\theta),\theta)]=0.
\]
This local robustness motivates estimators based on the orthogonal moments
$\psi$. 

\subsection{DML U-estimators}
\label{DMLsec}
\subsubsection{Cross-fitting U-estimators}
\label{DMLsec_crossfitting}

%Estimation of nuisance parameters and moment conditions with the same observations can induce overfitting bias. Moreover, ML first-steps typically fail to satisfy Donsker conditions; see \citet{chernozhukov2018double}. We therefore use cross-fitting (e.g., \citealp{bickel1982adaptive,schick1986asymptotically,klaassen1987consistent,chernozhukov2018double,chernozhukov2022locally}), adapted to the U-statistic setting.

Estimation of nuisance parameters and moment conditions with the same
observations can induce an \textquotedblleft overfitting\textquotedblright%
\ bias. Also, ML first-steps usually do not satisfy Donsker
conditions (see \cite{chernozhukov2018double} for discussion). We use cross-fitting to overcome these issues (see
\cite{bickel1982adaptive}, \cite{schick1986asymptotically},
\cite{klaassen1987consistent}, \cite{chernozhukov2018double} and
\cite{chernozhukov2022locally}), but adapted to the U-statistics setting.

In standard cross-fitting, the sample is divided into $K$ folds, each using $n(K-1)/K$ observations for nuisance estimation and $n/K$ observations for moment evaluation. This ensures equal nuisance-sample sizes across folds and that every observation contributes once to the second step. Extending these features to U-statistics is nontrivial because the second step is indexed by pairs rather than observations, as the following example illustrates.

%In standard cross-fitting, the sample is divided into $K$ folds. Each fold uses $n(K-1)/K$ observations for nuisance estimation and $n/K$ observations for moment evaluation. This ensures that all folds employ the same nuisance-sample size while every observation contributes exactly once to the second step. Extending these features to U-statistics is nontrivial because the second step is indexed by pairs rather than observations. The following example illustrates why standard cross-fitting is inefficient in this setting.
%In the usual cross-fitting, the sample is divided into $K$ folds. Each fold contains $n/K$ observations and leaves out $n(K-1)/K$ observations for nuisance estimation. A key quantity in cross-fitting is the number of observations used to estimate nuisance parameters. A desirable feature is that all folds use the same number of observations for nuisance estimation. Otherwise, poor nuisance estimation in one fold can lead to undesirable behaviour. Additionally, all observations are used in the second step exactly once. With U-statistics, designing a cross-fitting procedure satisfying all these properties is less straightforward. The following numeric example illustrates why standard cross-fitting fails in a U-statistics setting.
\begin{example}[Failure of standard cross-fitting for U-statistics]
    Let $n = 6$ and split the sample in two folds $G_1 = \{1,2,3\}$ and $G_2 = \{4,5,6\}$. Estimate the moment in $G_1$ plugging in $(\hat{\gamma}_1, \hat{\alpha}_1)$ trained with observations in $G_2$. Then, the second step in this fold uses pairs $(1,2)$, $(1,3)$ and $(2,3)$. Estimating the moment in $G_2$ using $G_1$ for nuisance estimation, we use pairs $(4,5)$, $(4,6)$, and $(5,6)$ in the second step. Pairs $(1,4)$, $(1,5)$, $(1,6)$, $(2,4)$, $(2,5)$, $(2,6)$, $(3,4)$, $(3,5)$, $(3,6)$ are not used to compute the sample moment. Hence, standard cross-fitting leaves many pairs unused and fails to exploit the U-statistic structure.
\end{example}
\noindent Instead of partitioning the set of observations $\mathcal{N}=\{1,\ldots,n\}$, we partition the set of pairs
\[
\mathcal{P}=\{(i,j)\in\mathcal{N}^{2}:i<j\}.
\]
Our goal is to construct a cross-fitting scheme satisfying three properties:
(i) each pair in $\mathcal{P}$ is used exactly once in the second step;
(ii) the nuisance estimators used for a given block are independent of all pairs in that block; and
(iii) all blocks leave the same number of observations available for nuisance estimation.

We partition $\mathcal N$ into $T\le n/2$ equally sized \emph{groups} $G_1,\ldots,G_T$ and then partition the pair set $\mathcal P$ into \emph{blocks}.\footnote{For simplicity, we assume that $n$ is divisible by $2T$. Online Appendix \ref{app_CF} provides an algorithm for arbitrary $n$.} The cardinality of a set $A$ is denoted by $|A|$. There are two types of blocks: for $u,v=1,\ldots,T$,
\[
\text{(Triangles):} \qquad \{(i,j)\in\mathcal P:i<j,\ i,j\in G_u\},
\]
and
\[
\text{(Rectangles):} \qquad \{(i,j)\in\mathcal P:i\in G_u,\ j\in G_v,\ u\neq v\}.
\]
For each block $I_l$, let
\[
S_l=\{i\in\mathcal N:i \text{ appears in some pair in } I_l\},
\]
and estimate $(\hat\gamma_l,\hat\alpha_l(\theta))$ using observations in $S_l^c=\mathcal N\setminus S_l$. Triangles use only observations of one group $G_u$, hence they pair observations within group $G_u$. Rectangles form pairs with observations from two different groups $G_u$ and $G_v$ with $u \neq v$, i.e. rectangles form pairs across groups. Hence, rectangles use twice as many observations as triangles, i.e. $|G_u| + |G_v| = 2|G_u|$. Therefore, to equalize $|S_l|$ across all $l = 1,...,L$, we split the groups used by the rectangle in half so that the rectangle is subdivided into four equal sub-rectangles. Now each sub-rectangle uses $|G_u|/2 + |G_v|/2 = |G_u|$ observations. This yields
\[
L=T(2T-1)
\]
blocks. Although block sizes differ in the number of pairs, each block involves exactly $|S_l| = n/T$ observations and therefore leaves $|S_l^c|=n(T-1)/T$ observations for nuisance estimation. Thus, $T$ plays the role of the number of folds in standard cross-fitting. Algorithm \ref{alg:cf_u} summarizes the procedure.
\begin{algorithm}[H]
\footnotesize
\caption{Cross-Fitting for U-Statistics (Divisible Case)}
\label{alg:cf_u}
\begin{algorithmic}[1]

\Require Sample $\{W_i\}_{i=1}^n$, integer $T \leq n/2$ with $n$ divisible by $2T$
\Ensure Cross-fitted estimator $\hat\theta$

\State Partition $\{1,\dots,n\}$ into $T$ parts $G_1,\dots,G_T$ with $|G_t| = n/T$

\State Initialize block index $l \gets 0$

\For{$t=1$ to $T$}  \Comment{Within-group blocks}
    \State $l \gets l+1$
    \State $I_l \gets \{(i,j): i<j,\ i,j \in G_t\}$
\EndFor

\For{$1 \le u < v \le T$}  \Comment{Between-group blocks}
    \State Partition $G_u$ and $G_v$ into two equal subsets
    \Statex \hspace{1.5em}
    $G_u = A_{uv}^{(1)} \cup A_{uv}^{(2)}$,
    \quad
    $G_v = B_{uv}^{(1)} \cup B_{uv}^{(2)}$

    \For{each combination $(A,B)\in
    \{A_{uv}^{(1)},A_{uv}^{(2)}\}
    \times
    \{B_{uv}^{(1)},B_{uv}^{(2)}\}$}
        \State $l \gets l+1$
        \State $I_l \gets \{(i,j): i\in A,\ j\in B\}$
    \EndFor
\EndFor

\State Let $L \gets l$

\For{$l=1$ to $L$}
    \State Let $S_l$ be the set of unique indices appearing in $I_l$
    \State Estimate $(\hat\gamma_l,\hat\alpha_l(\theta))$ using $\{W_i:i\notin S_l\}$
\EndFor

\State Return $\hat\theta$ solving
\[
\sum_{l=1}^{L}
\sum_{(i,j)\in I_l}
\Bigl[
g(W_i,W_j,\hat\gamma_l,\hat\theta)
+
\phi(W_i,W_j,\hat\gamma_l,\hat\alpha_l(\hat\theta),\hat\theta)
\Bigr]
=0.
\]

\end{algorithmic}
\end{algorithm}
\noindent The next result formalizes the key properties required for valid cross-fitting in the U-statistic setting. Its proof can be found in the Online Appendix \ref{app_CF}.

\begin{proposition}
\label{prop_UCF}
Algorithm \ref{alg:cf_u} satisfies:
(i) Each pair in $\mathcal{P}$ is used exactly once;
(ii) $(\hat\gamma_l,\hat\alpha_l)$ is independent of all pairs in $I_l$ for $l = 1,...,L$;
(iii) each block excludes the same number of observations.
\end{proposition}
While our proposed construction is not unique in satisfying (i)--(iii) in Proposition \ref{prop_UCF}, it provides a transparent algorithm where the choice of $T$ allows for exact control of the number of observations used for nuisance estimation. For an illustration with $n = 12$ and $T = 2$, see Figure \ref{fig_CF}.
\begin{figure}[h]
\centering
\includegraphics[scale = 0.4]{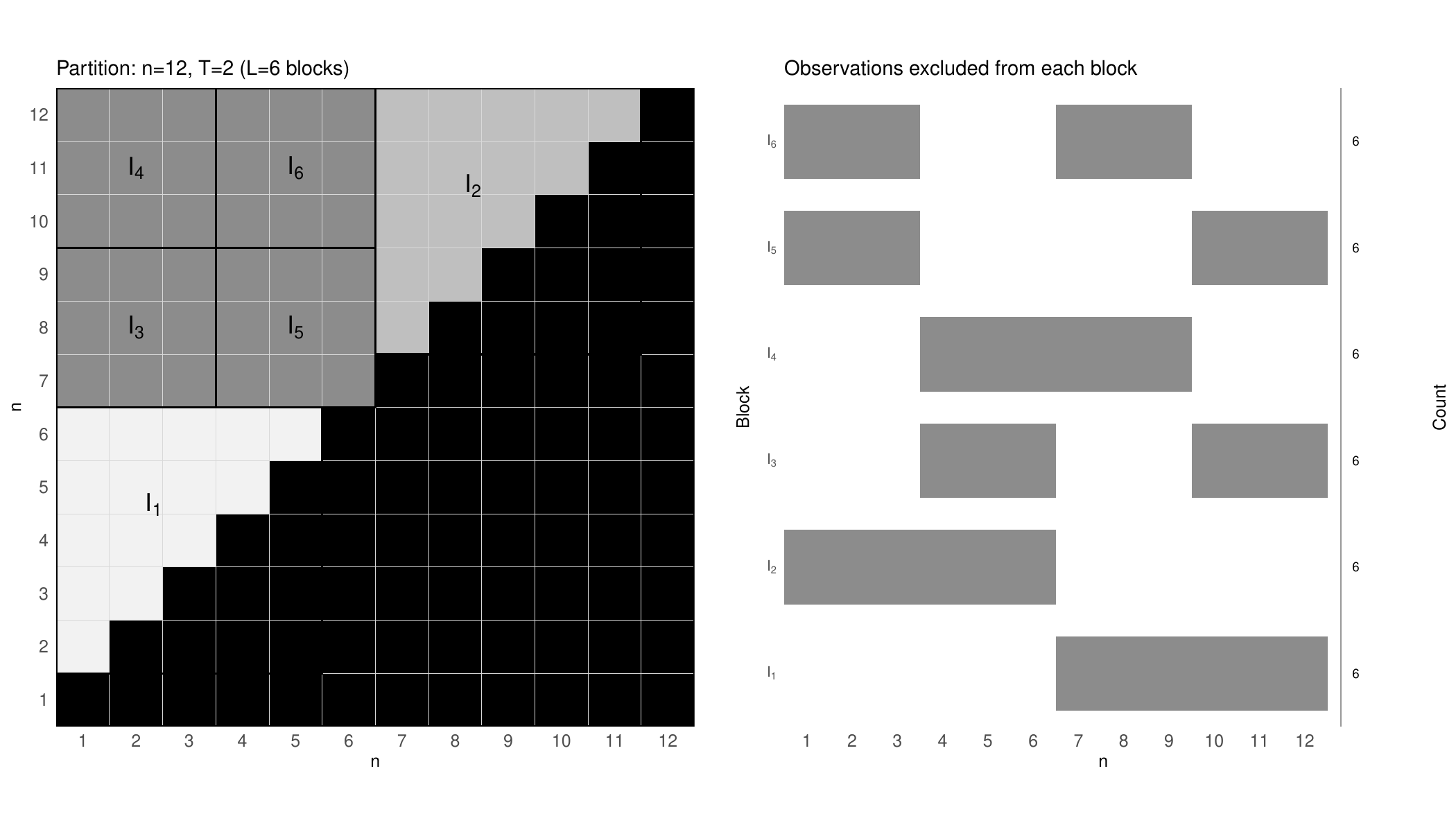}
\caption{\scriptsize Illustration of the proposed cross-fitting partition for $n=12$ and $T=2$. Each pair is used exactly once, while nuisance estimators for a given block are estimated using observations outside that block (shadowed region in the right panel). See Online Appendix \ref{app_CF} for the general construction.}
\label{fig_CF}%
\end{figure}
\subsubsection{DML U-estimators}
\label{DMLsec_DMLU-estimators}
\noindent The debiased sample moment is $\hat{\psi}(\theta)=\hat{g}(\theta)+\hat{\phi}(\theta)$, where 
\[
\hat{g}(\theta)
=\binom{n}{2}^{-1}\sum_{l=1}^{L}\sum_{(i,j)\in I_{l}} g(W_{i},W_{j},\hat{\gamma}_{l},\theta),
\quad
\hat{\phi}(\theta)
=\binom{n}{2}^{-1}\sum_{l=1}^{L}\sum_{(i,j)\in I_{l}} \phi(W_{i},W_{j},\hat{\gamma}_{l},\hat{\alpha}_{l}(\theta),\theta).
\]
Any estimator $\hat\theta$ solving $\hat{\psi}(\hat{\theta})=0$ is called a DML U-estimator.\footnote{\cite{chernozhukov2018double} recommend repeatedly estimating with different random cross-fitting splits and taking the median to improve robustness to the partition, see also \cite{ritzwoller2023reproducible}. These ideas are possible here but may be computationally expensive.} 
\begin{remark}
We could also implement the debiased sample moment as
\[
\hat{\psi}(\theta)=\hat{g}(\theta)+\tilde{\phi},
\]
where $\tilde{\phi}$ replaces $\phi(W_{i},W_{j},\hat{\gamma}_{l},\hat{\alpha}_{l}(\theta),\theta)$ by $\phi(W_{i},W_{j},\hat{\gamma}_{l},\hat{\alpha}_{l}(\tilde{\theta}_{l}),\tilde{\theta}_{l})$ in $\hat{\phi}(\theta)$, with $\tilde{\theta}_{l}$ a preliminary estimator of $\theta_0$ not using observations in $I_{l}$, for example, a cross-fitted plug-in consistent estimator. The asymptotic analysis for this estimator is carried out in Corollary \ref{corollary_preliminary_estimator} in the Online Appendix \ref{Alter_implem}. This is the implementation proposed in \cite{chernozhukov2022locally} for GMM. The alternative implementation proposed in this paper is more convenient in many applications, such as our IOp application, the partially linear model, etc.
\end{remark}
We give sufficient conditions in Section \ref{sec_AN} so that $\sqrt{n}(\hat{\theta}-\theta_{0})$ is asymptotically normally distributed with mean zero and an asymptotic variance $V=B^{-1}\Sigma B^{\prime-1},$ where, henceforth, $B'$ denotes the transpose of a vector or matrix $B$, and
\[
B=\frac{\partial\mathbb{E}[g(W_{i},W_{j},\gamma_{0},\theta_0
)]}{\partial\theta},\quad\Sigma=4\mathbb{V}ar\biggl(\mathbb{E}[\psi
(W_{i},W_{j},\gamma_{0},\alpha_{0},\theta_{0})|W_{i}]\biggr),
\]
and $\alpha_0 \equiv \alpha_0(\theta_0)$. The asymptotic variance $V$ can be estimated by $\hat{V}=\hat{B}^{-1}%
\hat{\Sigma}\hat{B}^{\prime-1},$ where%
\begin{align*}
\hat{B}  &  =\binom{n}{2}^{-1}\sum_{i<j}\frac{\partial}{\partial\theta}%
g(W_{i},W_{j},\hat{\gamma},\hat{\theta}),\\
\hat{\Sigma}  &  =\frac{4}{n(n-1)^{2}}\sum_{i=1}^{n}\biggl[\sum_{j\neq i}%
\psi(W_{i},W_{j},\hat{\gamma},\hat{\alpha}(\hat{\theta}),\hat{\theta})\biggr]\biggl[\sum
_{j\neq i}\psi(W_{i},W_{j},\hat{\gamma},\hat{\alpha}(\hat{\theta)},\hat{\theta
})\biggr]^{\prime}.
\end{align*}
For variance estimation, we can use the whole sample for $\hat{\gamma}$ and $\hat{\alpha}$. It is also possible to cross-fit, but it is computationally more expensive.

The construction of the orthogonal moment requires deriving an adjustment term $\phi$ for the
given identifying moment $g$ satisfying (\ref{1}) and (\ref{2}). The next section develops a general construction of orthogonal adjustment terms $\phi$ for a broad class of two-step U-statistics.

\section{Construction of Adjustment Terms}

\label{sec_ufsif_construction}

The construction of the adjustment term $\phi$ proceeds in two steps. First, we linearize the effect of $\gamma(F)$ on the identifying moment function $g$. Second, we project this linearized effect onto a suitable function space associated with the orthogonality condition defining $\gamma(F)$. This projection removes the first-order effect of estimating $\gamma$.

Henceforth, let $L_2(\mu)$ denote the set of square-integrable functions with respect to a measure $\mu$. Let $P_X$ denote the marginal distribution of $X_i$ under $F_0$. We shall apply the notation $L_2(\mu)$ with $\mu=P_X$ or $\mu=P_X\times P_X$, but when no confusion arises, we simply write $L_2$. We consider functions $\gamma(F)\in\Gamma\subseteq L_2(P_X)$ satisfying
\begin{equation}
\mathbb{E}_{F}\left[\nu(X_i)\left(Y_i-\gamma(F)(X_i)\right)\right]=0
\qquad\text{for all }\nu\in\Gamma.
\label{orth1}%
\end{equation}
This setting is quite general and covers linear, nonparametric, additive, sieve, and high-dimensional regressions, among others. For example, if $\Gamma$ is the mean-square closure of finite linear combinations of a dictionary $(b_k)_{k=1}^\infty$ of square-integrable functions, then $\gamma(F)(X_i)$ corresponds to a high-dimensional regression fit.\footnote{Following the DML literature, see, e.g., \citet{chernozhukov2022locally}, we keep the probability limit of the first-step learner fixed for simplicity. Allowing this limit to vary with the sample size would accommodate settings where the number of active coefficients in the true regression grows with $n$. High-dimensional regressions are nevertheless accommodated through approximate sparsity in a fixed dictionary.}

We impose the following assumption on $\Gamma$.

\begin{assumption}
\label{Ass_S}
$\Gamma\subseteq L_2(P_X)$ is a closed linear subspace that contains the constant functions.
\end{assumption}

For simplicity, we take $\gamma(F)$ to be real-valued. Extensions to settings with multiple nuisance functions,
\[
g(W_i,W_j,\gamma_1(F),\ldots,\gamma_K(F),\theta),
\]
follow by linearity of the pathwise derivative, since the derivative decomposes additively across nuisance components.

We next study the linearization and projection steps separately; see \citet{newey1994asymptotic} for the analogous construction in the GMM setting.

\subsection{Linearization}
\label{linearization}

The following is a standard pathwise smoothness condition. It requires that the first-order effect of perturbing the nuisance function can be represented through finitely many linear functionals of the nuisance path.

\begin{assumption}
\label{Ass_linearization}
There exist functions $\delta_m\in L_2(P_X\times P_X)$, constants $(c_{1m},c_{2m})$, $m=1,\ldots,M$, and an integer $M\geq1$, such that for all $\theta$ and $H$,
\begin{equation}
\frac{d}{d\tau}\mathbb{E}[g(W_i,W_j,\gamma(F_\tau),\theta)]
=
\sum_{m=1}^M
\frac{d}{d\tau}
\mathbb{E}[
\delta_m(X_i,X_j,\gamma_0,\theta)
(c_{1m}\gamma_\tau(X_i)+c_{2m}\gamma_\tau(X_j))
]
\label{linearization}
\end{equation}
where $\gamma_\tau(x)=\gamma(F_\tau)(x)$ and $F_\tau=F_0+\tau(H-F_0)$.
\end{assumption}
For simplicity of notation, we suppress the possible dependence of $c_{1m}$ and $c_{2m}$ on $(\gamma_0,\theta)$. Henceforth, we also use the shorthand notation
\[
\delta_{ij,m}(\gamma,\theta)\equiv\delta_m(X_i,X_j,\gamma,\theta),
\qquad
\delta_{ij,m}\equiv\delta_{ij,m}(\gamma_0,\theta_0).
\]
\subsection{Projection}
\label{RR}

After the linearization step, we construct the adjustment term $\phi$ through a projection argument that extends \citet[p.~1360--1361]{newey1994asymptotic} to the present U-statistic framework. The key idea is to express the linearized effect of the first step through a projection onto a suitable function space associated with the orthogonality condition defining the first-step. This projection delivers an adjustment term that removes the first-order effect of estimating $\gamma$.

We consider two cases. In case (i), the researcher adopts a fully nonparametric specification and sets $\Gamma=L_2(P_X)$ and $\mathcal{S}=L_2(P_X\times P_X)$. In case (ii), $\Gamma\subseteq L_2(P_X)$ satisfies Assumption \ref{Ass_S} and
\[
\mathcal{S}
=
\Gamma+\Gamma
=
\Big\{
(x_i,x_j)\mapsto \gamma_1(x_i)+\gamma_2(x_j):
\gamma_1,\gamma_2\in\Gamma
\Big\}
\subseteq L_2(P_X\times P_X).
\]
Thus, $\mathcal S$ is the class of additive bivariate functions whose components belong to $\Gamma$. Case (ii) is useful when fully nonparametric estimation is infeasible, for example, because $X$ is high-dimensional. Each case gives rise to a nuisance parameter $\alpha_0(\theta) \in\mathcal S$. Henceforth, $\Pi_V(\cdot)$ denotes the orthogonal projection operator onto a closed linear set $V$. We note that $\mathcal{S}$ is a closed linear set, so $\Pi_\mathcal S(\cdot)$ is well-defined.

\begin{lemma}
    \label{lemma_S_closed_linear_set}
    Under Assumption \ref{Ass_S}, $\mathcal{S}$ is a closed linear subspace of $L_2(P_X \times P_X)$.
\end{lemma}
Define the nuisance parameter $\alpha_{0m}$ as the orthogonal projection of $\delta_{ij,m}$ onto $\mathcal S$:
\[
\alpha_{0m}
=
\Pi_{\mathcal S}(\delta_{ij,m}),
\qquad
m=1,\ldots,M.
\]
The vector of nuisance parameters is $\alpha_0=(\alpha_{01},\ldots,\alpha_{0M})'$ and the corresponding adjustment term are characterized in the following lemma. For simplicity of notation, we sometimes suppress the dependence of $\alpha_0(\theta)$ on $\gamma_0$.

\begin{lemma}
\label{qorth2}
The first-step is identified by (\ref{orth1}). If Assumptions \ref{Ass_S}-\ref{Ass_linearization} hold, then
\begin{equation}
\phi(W_i,W_j,\gamma_0,\alpha_0,\theta)
=
\sum_{m=1}^M
\alpha_{0m}(X_i,X_j,\theta)
\left(
c_{1m}Y_i+c_{2m}Y_j
-c_{1m}\gamma_0(X_i)
-c_{2m}\gamma_0(X_j)
\right)
\label{phi}%
\end{equation}
satisfies (\ref{1})--(\ref{2}), where:
\begin{description}
\item[(i)] if $\mathcal S=L_2(P_X\times P_X)$,
\[
\alpha_{0m}(\theta)
=
\delta_{ij,m}(\gamma_0,\theta),
\]

\item[(ii)] if $\mathcal S=\Gamma+\Gamma$,
\begin{equation}
\alpha_{0m}(X_i,X_j,\theta)
=
\Pi_\Gamma \mu_{1m}(X_i,\theta)
+
\Pi_\Gamma \mu_{2m}(X_j,\theta)
-
\mathbb E[\delta_{ij,m}(\gamma_0,\theta)],
\label{alpha}%
\end{equation}
where
\[
\mu_{1m}(x,\theta)
=
\mathbb E[\delta_{ij,m}(\gamma_0,\theta)\mid X_i=x], \quad \mu_{2m}(x,\theta)
=
\mathbb E[\delta_{ij,m}(\gamma_0,\theta)\mid X_j=x].
\]
In many cases, see Applications in Section \ref{sec_App}, $\delta_{ij,m}$ is symmetric or antisymmetric in $(i,j)$ and hence, either $\mu_{1m} = \mu_{2m}$ or $\mu_{1m} = -\mu_{2m}$.
\end{description}
\end{lemma}
The nuisance $\alpha_0$ and its corresponding cross-fitted estimator $\hat\alpha_l$ depend on the specification of $\mathcal S$. When $\alpha_0$ is known up to $\gamma_0$, we can simply set
\[
\hat\alpha_l(\theta)
=
\delta(\hat\gamma_l,\theta).
\]

More generally, $\alpha_0$ can be estimated using any ML method for projection or conditional expectation estimation, including Lasso, neural networks, or sieve methods. For example, using (\ref{alpha}), one may first estimate $(\mu_{1m}(x,\theta), \mu_{2m}(x,\theta))$ for $m=1,...,M$ by
\[
\hat{\mu}_{1ml}(x,\theta)
=
\frac{1}{|S_l^c|}
\sum_{j\in S_l^c}
\delta_{m}(x,X_j,\hat{\gamma}_l,\theta), \qquad 
\hat{\mu}_{2ml}(x,\theta)
=
\frac{1}{|S_l^c|}
\sum_{i\in S_l^c}
\delta_{m}(X_i,x,\hat{\gamma}_l,\theta),
\]
using observations outside block $I_l$, and then project the resulting estimator onto $\Gamma$. Finally, if $\alpha_{0m}=0$ for all $m=1,\ldots,M$, then the original identifying moment is already locally robust and no adjustment term is required ($\phi\equiv0$).

\section{Asymptotic Theory}

\label{sec_AN}
Let $\left\vert \cdot\right\vert $ and $||\cdot||_q$ be the Euclidean and $L_{q}$ norms, respectively. Specifically, for a measurable function $g$ of $W_i$, $||g||_q =(\mathbb{E}[|g(W_i)|^q])^{1/q}$, for $q\geq1$. For simplicity of notation, we drop the subindex for $q=2$, and simply write $||g||$. Also define the supremum norm, $||g||_\infty =\sup_{x\in\mathcal{X}}|g(x)|$, where $\mathcal{X}$ denotes the support of $X_i$. The terms $\rightarrow_{p}$ and $\rightarrow_{d}$ denote convergence in probability and distribution, respectively. Henceforth, $C$ and $a$ are generic positive constants that may change
from expression to expression.

We provide general conditions for the asymptotic normality of DML U-estimators, and verify these conditions for the applications in the next section. To streamline notation, we sometimes use $\phi_{ij}(\gamma,\alpha(\theta),\theta)\equiv\phi(w_{i},w_{j},\gamma,\alpha(\theta),\theta)$. Define the following interaction term
\[
\hat{\xi}_{l}(w_{i},w_{j},\theta)=\phi_{ij}(\hat{\gamma}_{l},\hat{\alpha}%
_{l}(\theta),\theta)-\phi_{ij}(\gamma_{0},\hat{\alpha}_{l}(\theta),\theta
)-\phi_{ij}(\hat{\gamma}_{l},\alpha_{0}(\theta),\theta)+\phi_{ij}(\gamma_{0},\alpha_{0}(\theta),\theta),
\]
and the following norm
\[
\|\alpha - \alpha_0\|^2_{\mathcal{A}} = \sup_{\theta \in \Theta} \int \int |\alpha(x_i,x_j,\theta) - \alpha_0(x_i,x_j,\theta)|^2 F_0(dw_i)F_0(dw_j).
\]
The next theorem provides sufficient conditions for consistency.
\begin{assumption}
\label{Ass_AT_Consist}
(i) For all $\theta\in\Theta$ and $l=1,\ldots,L$,
\[
\int\!\!\int |g(w_i,w_j,\hat\gamma_l,\theta)-g(w_i,w_j,\gamma_0,\theta)|
\,F_0(dw_i)F_0(dw_j)\to_p0,
\]
\[
\int\!\!\int |\phi(w_i,w_j,\hat\gamma_l,\alpha_0(\theta),\theta)
-\phi(w_i,w_j,\gamma_0,\alpha_0(\theta),\theta)|
\,F_0(dw_i)F_0(dw_j)\to_p0,
\]
and
\[
\int\!\!\int |\hat\xi_l(w_i,w_j,\theta)|
\,F_0(dw_i)F_0(dw_j)\to_p0;
\]
(ii) $\Theta$ is compact;
(iii) $\|\hat\gamma_l-\gamma_0\|\to_p0$ and
$\|\hat\alpha_l-\alpha_0\|_{\mathcal A}\to_p0$;
(iv) there exist $a>0$ and $C<\infty$ such that for all $(\gamma,\alpha)$ sufficiently close to $(\gamma_0,\alpha_0)$ and all $\theta,\tilde\theta\in\Theta$,
\begin{align*}
|g(W_i,W_j,\gamma,\tilde\theta)-g(W_i,W_j,\gamma,\theta)|
&\le d_g(W_i,W_j,\gamma)|\tilde\theta-\theta|^a,\\
|\phi(W_i,W_j,\gamma,\alpha(\tilde\theta),\tilde\theta)
-\phi(W_i,W_j,\gamma,\alpha(\theta),\theta)|
&\le d_\phi(W_i,W_j,\gamma,\alpha)|\tilde\theta-\theta|^a,
\end{align*}
with
\[
\mathbb E[d_g(W_i,W_j,\gamma)]<C,
\qquad
\mathbb E[d_\phi(W_i,W_j,\gamma,\alpha)]<C.
\]
\end{assumption}

\begin{theorem}
\label{Thm_AT_Consist}
Under Assumption \ref{Ass_AT_Consist},
\[
\hat\theta\to_p\theta_0.
\]
\end{theorem}
Theorem \ref{Thm_AT_Consist} imposes mild consistency conditions on the first-step estimators and Lipschitz conditions that, together with compactness of $\Theta$,  allow for the uniform convergence needed for consistency of Z-estimators. In many cases, $\phi$ depends on $\theta$ only through an $\alpha_0$ of the form $\alpha_0(x_i,x_j,\theta) = h(\theta)\eta_0(x_i,x_j)$, so that the Lipschitz condition on $\phi$ in Assumption \ref{Ass_AT_Consist}(iv) reduces to a Lipschitz condition on $h(\theta)$ and the consistency condition on $\hat{\alpha}_l$ reduces to continuity of $h(\theta)$, compactness of $\Theta$ and mild consistency of some ML estimator $\hat \eta_l$ of $\eta_0$. This is the case for all our applications. Now we proceed to asymptotic normality.

\begin{assumption}
\label{Ass_AT_equiv} $\mathbb{E}[|\psi(W_{i},W_{j},\gamma_{0},\alpha
_{0},\theta_{0})|^{2}]<\infty$ and for each $l=1,...,L,$

\begin{enumerate}
[(i)]

\item $\int\int|g(w_{i},w_{j},\hat{\gamma}_{l},\theta_0)-g(w_{i}%
,w_{j},\gamma_{0},\theta_0)|^{2}F_{0}(dw_{i})F_{0}(dw_{j})\rightarrow_{p}0$;

\item $\int\int|\phi(w_{i},w_{j},\hat{\gamma}_{l},\alpha_{0},\theta_0%
)-\phi(w_{i},w_{j},\gamma_{0},\alpha_{0},\theta_0)|^{2}F_{0}(dw_{i}%
)F_{0}(dw_{j})\rightarrow_{p}0$;

\item $\int\int|\phi(w_{i},w_{j},\gamma_{0},\hat{\alpha}_{l}(\theta_0),\theta_0%
)-\phi(w_{i},w_{j},\gamma_{0},\alpha_{0},\theta_0)|^{2}F_{0}(dw_{i}%
)F_{0}(dw_{j})\rightarrow_{p}0$.
\end{enumerate}
\end{assumption}
These are mild mean-square consistency conditions for $\hat{\gamma}_{l}$ and
$\hat{\alpha}_{l}$ separately. Unlike for consistency, here it is enough that they hold at $\theta_0$. A linearization argument like equation (3.2)
often implies that the left hand sides of Assumption \ref{Ass_AT_equiv}(i)-(ii)
are bounded above by a constant times $\left\Vert \hat{\gamma}_{l}-\gamma
_{0}\right\Vert,$ so $L_{2}$ consistency suffices. Assumption
\ref{Ass_AT_equiv} (iii) typically follows from $L_{2}$ consistency of
$\hat{\alpha}_{l}$. 
Let $\hat{\xi}_{l}(w_{i},w_{j})\equiv\hat{\xi}_{l}(w_{i},w_{j},\theta_{0})$.
\begin{assumption}
\label{Ass_AT_interaction} For each $l=1,...,L$, either i)
\[
\sqrt{n}\int\int\hat{\xi}_{l}(w_{i},w_{j})F_{0}(dw_{i})F_{0}(dw_{j}%
)\rightarrow_{p}0,\int\int|\hat{\xi}_{l}(w_{i},w_{j})|^{2}F_{0}(dw_{i}%
)F_{0}(dw_{j})\rightarrow_{p}0,
\]
or (ii) $\sqrt{n}\binom{n}{2}^{-1}%
\sum_{(i,j)\in I_{l}}\hat{\xi}_{l}(W_{i},W_{j})\rightarrow_{p}0$.
\end{assumption}
These are rate conditions on the remainder term $\hat{\xi}_{l}(w_{i},w_{j})$.
For $\phi$ in (\ref{phi}), the interaction term is a sum over $m$ of terms of the
form
\[
\hat{\xi}_{lm}(w_{i},w_{j})=\left(  \hat{\alpha}_{lm}(x_{i},x_{j})-\alpha
_{0m}(x_{i},x_{j})\right)  \left(  c_{1m}\hat{\gamma}_{l}(X_{i})+c_{2m}%
\hat{\gamma}_l(X_{j})-c_{1m}\gamma_{0}(X_{i})-c_{2m}\gamma_{0}(X_{j})\right)  .
\]
Therefore, Assumption \ref{Ass_AT_interaction} follows for first-steps
satisfying orthogonality restrictions if $\sqrt{n}||\hat{\alpha}_{lm}%
-\alpha_{0m}||||\hat{\gamma}_{l}-\gamma_{0}||=o_{p}(1).$ This is a product rate
condition which allows for nuisance estimators to converge at slower rates as
long as the product converges faster to zero than the $n^{-1/2}$-rate. Define now $\bar{\psi
}(\gamma,\alpha,\theta) \equiv\mathbb{E}[\psi(W_{i},W_{j},\gamma,\alpha(\theta)
,\theta)]$, i.e. the expectation is only over $(W_i,W_j)$. 
\begin{assumption}
\label{Ass_AT_GRsmallbias} For each $l=1,...,L$ and $\theta$, i) $\int\int%
\phi(w_{i},w_{j},\gamma_{0},\hat{\alpha}_{l}(\theta),\theta)F_{0}(dw_{i})F_{0}%
(dw_{j})=0$ with probability approaching one; and ii) $\sqrt{n}\bar{\psi}(\hat{\gamma
}_{l},\alpha_{0},\theta_{0})\rightarrow_{p}0$.
\end{assumption}
Assumption \ref{Ass_AT_GRsmallbias} (i) incorporates the global robustness
property of $\alpha$ and is in most cases easy to check by inspection of
$\phi$. For example, it holds for $\phi$ in (\ref{phi}).
Assumption \ref{Ass_AT_GRsmallbias} (ii) is a small bias condition. A sufficient condition for Assumption \ref{Ass_AT_GRsmallbias} (ii) is that $\bar{\psi}(\gamma
,\alpha_{0},\theta_{0}) \leq C || \gamma - \gamma_0||_q^\rho$ for all $\gamma$ with $||\gamma - \gamma_0||$ small enough and that $|| \hat{\gamma}_l - \gamma_0||_q = o_p(n^{-1/2\rho})$, for $\rho>0$. 

\begin{remark}
There is a large literature establishing $L_q$-convergence rates for machine-learning estimators under low-level sparsity or smoothness conditions, including kernel and series methods, penalized estimators such as Lasso, boosting, random forests, and deep neural networks; see, e.g., \citet{tsybakov2009introduction,chen2007large,belloni2011,belloni2013least,wager2015adaptive,athey2019generalized,farrell2021deep}. For further general results on $L_q$ rates, see \citet{el_hanchi2023optimal} and \citet{peng2025adversarial}.
\end{remark}

We also need conditions on the Jacobian of the moment condition.

\begin{assumption}
\label{Ass_AT_Jacobianhat}(i) $B = \partial \mathbb{E}[g(W_i,W_j,\gamma_0,\theta_0)]/ \partial \theta$ exists and is invertible; ii) $\hat \psi(\theta)$ is differentiable in a neighborhood $\mathcal{N}$ of $\theta_0$ with probability approaching one; iii) $||\hat{\gamma}_l-\gamma
_{0}||\rightarrow_{p}0$ and $\|\hat{\alpha}_l - \alpha_0\|_\mathcal{A} \to_p 0$; iv) there exists $a > 0$, $C < \infty$ such that for all $(\gamma,\alpha)$ with $\|\gamma - \gamma_0\|$ and $\|\alpha - \alpha_0\|_{\mathcal{A}}$ small enough and $\theta$ in a neighborhood $\mathcal{N}$ of $\theta_0$ 
    \begin{align*}
        \left| \frac{\partial \psi(W_i,W_j, \gamma, \alpha(\theta), \theta)}{\partial \theta} - \frac{\partial \psi(W_i,W_j, \gamma, \alpha(\theta_0), \theta_0)}{\partial \theta_0} \right| \leq d(W_i,W_j,\gamma,\alpha)|\theta - \theta_0|^a, 
    \end{align*}
    with $d(W_i,W_j,\gamma,\alpha)$ not depending on $(\theta, \theta_0)$ and $\mathbb{E}[d(W_i,W_j,\gamma,\alpha(\cdot,\cdot))] < C$; \\
    v) For all $l = 1,...,L$: $\int \int |\partial g(w_i,w_j,\hat \gamma_l, \theta_0)/\partial \theta - \partial g(w_i,w_j, \gamma_0, \theta_0)/\partial \theta| F_0(dw_i) F_0(dw_j) \to_p 0$, $\int \int |\partial \phi(w_i,w_j,\hat \gamma_l, \alpha_0(\theta_0), \theta_0)/\partial \theta - \partial \phi(w_i,w_j, \gamma_0, \alpha_0(\theta_0), \theta_0)/\partial \theta| F_0(dw_i) F_0(dw_j) \to_p 0$, \\ $\int \int |\partial \hat{\xi}_l(w_i,w_j,\theta_0)/\partial \theta| F_0(dw_i) F_0(dw_j) \to_p 0$; (vi) for all $\alpha$ with $\|\alpha - \alpha_0\|_{\mathcal{A}}$ small enough, $\mathbb{E}[|\partial \psi(W_{i},W_{j},\gamma_0,\alpha(\theta_0),\theta_0)/\partial \theta |] < C$.
\end{assumption}

\noindent Now we are ready to give the main asymptotic result of the paper,

\begin{theorem}
\label{Thm_AT_AN} If Assumptions \ref{Ass_AT_equiv}-\ref{Ass_AT_Jacobianhat} hold, $\hat{\theta
}\rightarrow_{p}\theta_{0}$ and $V=B^{-1}\Sigma B^{\prime-1}$ is non-singular,
then
\[
\sqrt{n}(\hat{\theta}-\theta_{0})\rightarrow_{d}\mathcal{N}(0,V).
\]
\end{theorem}
For valid inference, we also need convergence of the asymptotic variance
estimators. To simplify the computation, we implement the variance estimator
without cross-fitting. Define $\hat{g}_{ij}=g(W_{i},W_{j},\hat{\gamma},\hat{\theta})$ and $g_{ij}=g(W_{i},W_{j},\gamma_{0},\theta_{0}).$
Then define the following leave-one out average $\hat{g}_{-i} = (n-1)^{-1}%
\sum_{j\neq i}\hat{g}_{ij}$ and $g_{-i} = (n-1)^{-1}\sum_{j\neq i}g_{ij}$, let
$\hat{\phi}_{-i}$ and $\phi_{-i}$ be defined in the same way.

\begin{assumption}
\label{ass_AT_Sigmahat} (i) $\mathbb{E}[|\psi(W_{i},W_{j},\gamma_{0}%
,\alpha_{0},\theta_{0})|^{2}]<\infty$; (ii) $n^{-1}\sum_{i=1}^{n}|\hat{g}_{-i} -
g_{-i}|^{2} \to_{p} 0$ and $n^{-1}\sum_{i=1}^{n}|\hat{\phi}_{-i} - \phi
_{-i}|^{2} \to_{p} 0$; iii) $\|\hat{\gamma} - \gamma_0\| \to_p 0$.
\end{assumption}

\begin{proposition}
    \label{prop_Sigmahat_consistency}
    Under Assumption \ref{ass_AT_Sigmahat} and (i) $B = \partial \mathbb{E}[g(W_i,W_j,\gamma_0,\theta_0)]/ \partial \theta$ exists and is invertible; (ii) $\binom{n}{2}^{-1} \sum_{i<j } g(W_i,W_j,\hat \gamma, \theta)$ is differentiable in a neighborhood $\mathcal{N}$ of $\theta_0$ with probability approaching one; (iii) there exists $a > 0$, $C < \infty$ such that for all $\gamma$ with $\|\gamma - \gamma_0\|$ small enough and $\theta$ in a neighborhood $\mathcal{N}$ of $\theta_0$ 
    \begin{align*}
        \left| \frac{\partial g(W_i,W_j, \gamma,\theta)}{\partial \theta} - \frac{\partial g(W_i,W_j, \gamma, \theta_0)}{\partial \theta_0} \right| \leq d(W_i,W_j,\gamma)|\theta - \theta_0|^a, \quad \mathbb{E}[d(W_i,W_j,\gamma)] < C;
    \end{align*}
    (iv) $\mathbb{E} [|\partial g(w_i,w_j,\hat \gamma, \theta_0)/\partial \theta - \partial g(w_i,w_j, \gamma_0, \theta_0)/\partial \theta|] \to_p 0$; (v) $\mathbb{E}[|\partial g(W_{i},W_{j},\gamma_0,\theta_0)/\partial \theta |] < C$; then $\hat V \to_p V$.
\end{proposition}
In many cases, such as the IOp and AUC examples below, $\alpha_0$ is known up to $\theta_0$ and $\gamma_0$. In such cases, a much simpler asymptotic theory follows from dropping the dependence on $\alpha_0$ of $\psi(W_i,W_j,\gamma_0,\alpha_0,\theta)$ and treating $\psi(W_i,W_j,\gamma_0,\theta)$ as an identifying function that is already orthogonal without the need of extra nuisance parameters.

\begin{corollary}
    \label{corollary_AT_alphaknownuptogamma}
    If $\phi = 0$, then the results of Theorems \ref{Thm_AT_Consist}, \ref{Thm_AT_AN} and Proposition \ref{prop_Sigmahat_consistency} continue to hold after replacing $g$ by $\psi$ throughout and dropping all conditions involving $\phi$, $\alpha$, and $\xi_l$.
\end{corollary}

\begin{remark}[On degeneracy of orthogonal moments]. When
\[
\mathbb{E}[\psi(W_{i},W_{j},\gamma_{0},\alpha_{0},\theta_{0})|W_{i}]=0 \text{ a.s},
\]
then $\Sigma=0$ and $V=0$. In this case, we say that the kernel $\psi$ or the U-statistic is degenerate. In this situation, our results show the estimator converges to $\theta_0$
faster than $n^{-1/2}$. Our asymptotic distribution theory provides distributional results only for
non-degenerate orthogonal moments. We focus on this case because it is the
most common one in the applications we are interested in. A complete asymptotic distributional analysis of the
degenerate case is beyond the scope of this paper and is deferred to future
research.
\end{remark}

\section{Applications}
\label{sec_App}
\subsection{Inequality of Opportunity}
\label{sec_IOp}

The leading measure of IOp is given by the Gini coefficient of income's predictions from a set of circumstances,
which can be expressed in the population as\footnote{The primitive definition of the Gini is twice the area between the Lorenz curve and the 45-degree line. This definition works for both discrete and continuous variables. The U-statistic expression we use in this paper is equivalent to the primitive Lorenz-based expression even with discrete variables. The Online Appendix \ref{app_Othermeasures} extends the analysis to other popular inequality measures, including the Atkinson and Generalized Entropy indices.}
\begin{equation}
\theta_{0}=\frac{\mathbb{E}[|\gamma_{0}(X_{i})-\gamma_{0}(X_{j})|]}%
{\mathbb{E}[\gamma_{0}(X_{i})+\gamma_{0}(X_{j})]}, \label{Gini}%
\end{equation}
where $\gamma_0(x)=\mathbb{E}[Y_{i}|X_{i}=x]$, $x\in\mathcal{X}$, $Y_{i}$ is income, and $X_{i}$ is a vector of circumstances individual $i$ did not choose, such
as parental wealth/income, parental education, sex, color of the skin or
social origin. Rearranging (\ref{Gini}), we obtain an
identifying moment function
\begin{equation}
g(w_{i},w_{j},\gamma,\theta)=(\gamma(x_{i})+\gamma(x_{j}))\theta-|\gamma
(x_{i})-\gamma(x_{j})|. \label{eq_g_IOp}%
\end{equation}
The plug-in sample Gini coefficient $\hat{\theta}^{P}$ solves 
$U_{n}g(\cdot,\hat{\gamma},\hat{\theta}^{P})=0$, i.e.
\[
\hat{\theta}^{P}=\frac{\sum_{i<j}|\hat{\gamma}(X_{i})-\hat{\gamma}(X_{j}%
)|}{\sum_{i<j}\left(  \hat{\gamma}(X_{i})+\hat{\gamma}(X_{j})\right)  },
\]
for a first-step $\hat{\gamma}$ such as the Conditional Inference Forests (CIF),
which is a popular ML method in the IOp literature, see
\cite{brunori2021roots}, \cite{brunori2019inequality} or
\cite{brunori2021evolution}. Monte Carlo simulations and an empirical application show that
$\hat{\theta}^{P}$ is highly biased when ML first-steps, such as the CIF, are used.

A key technical challenge in this example is that the identifying moment involves the non-smooth map $a\mapsto |a|$, whose derivative is discontinuous through the sign function. As a result, standard smooth orthogonality arguments do not apply directly and controlling the nonlinear remainder terms requires additional analysis. Despite this non-smoothness, the orthogonal adjustment admits a simple closed-form representation.

\begin{proposition}
\label{prop_UFSIF_IOp} Under $0<\mathbb{E}[Y]<\infty$, the following function
$\phi$ satisfies (\ref{1}) and (\ref{2}) for the Gini of fitted values, 
\begin{equation}
\phi(w_{i},w_{j},\gamma_{0},\alpha_{0}(\theta),\theta)=\alpha_{01}(\theta)(y_{i}+y_{j}-\gamma
_{0}(x_{i})-\gamma_{0}(x_{j}))+\alpha_{02}(x_{i},x_{j})(y_{i}-y_{j}-\gamma
_{0}(x_{i})+\gamma_{0}(x_{j})), \label{eq_iop_ufsif}%
\end{equation}
where $\alpha_{0}=(\alpha_{01},\alpha_{02})$, $\alpha_{01}(\theta)=\theta$ and $\alpha_{02}(x_{i},x_{j})=-sgn(\gamma_{0}(x_{i})-\gamma_{0}(x_{j}))$.
\end{proposition}
The proof of Proposition \ref{prop_UFSIF_IOp} is given in Online Appendix \ref{app_pfsIOp}. Adding $\phi$ from Proposition \ref{prop_UFSIF_IOp} to $g$ in (\ref{eq_g_IOp}) yields, after cancellation of terms and dropping the dependence of $\psi$ on $\alpha_0$ (see discussion before Corollary \ref{corollary_AT_alphaknownuptogamma})
\[
\psi(W_i,W_j,\gamma_0,\theta)
=
\theta(Y_i+Y_j)
-
sgn(\gamma_0(X_i)-\gamma_0(X_j))(Y_i-Y_j).
\]
The resulting orthogonal moment function admits a particularly simple representation. Let $\hat{\gamma}_{l}(x)$ denote cross-fitted ML predictions of
income given circumstances $x$. Solving the debiased orthogonal sample moment we get the DML IOp U-estimator
\[
\hat{\theta}=\frac{\sum_{l=1}^{L}\sum_{(i,j)\in I_{l}}sgn(\hat{\gamma}_{l}(X_{i})-\hat{\gamma
}_{l}(X_{j}))(Y_{i}-Y_{j})}{\sum_{i<j}(Y_{i}+Y_{j})}.
\]
The debiased estimator $\hat{\theta}$ resembles the Gini coefficient for income,
but weights by the sign of the difference in predictions rather than by the sign of the differences in income. Whenever two individuals have the same fitted values their
difference in incomes cannot be attributed to inequality of opportunity.\footnote{Our estimator can be interpreted as an ML extension of the Lorenz regression approach independently proposed by \cite{heuchenne2022inference}.} The asymptotic variance of the debiased IOp estimator is
\[
V=\frac{\Sigma}{B^2},
\qquad
\Sigma=4\mathbb{V}ar(h(W_i)),
\qquad
B=2\mathbb{E}[Y_i],
\]
where
\[
h(w)
=
\mathbb{E}[\psi(W_i,W_j,\gamma_0,\alpha_0,\theta)\mid W_i=w]
=
\mathbb{E}\!\left[
\theta(y+Y_j)
-
sgn(\gamma_0(x)-\gamma_0(X_j))(y-Y_j)
\right].
\]
A consistent estimator is $\hat V=\hat\Sigma/(4\bar Y^2)$, where
$\bar Y=n^{-1}\sum_{i=1}^n Y_i$ and
\begin{equation}
\hat{\Sigma}
=
\frac{4}{n}
\sum_{i=1}^n
\left(
\frac{1}{n-1}
\sum_{j\neq i}
\hat{\theta}(Y_i+Y_j)
-
sgn(\hat{\gamma}(X_i)-\hat{\gamma}(X_j))(Y_i-Y_j)
\right)^2.
\label{VhatIOp}%
\end{equation}

A practitioner-oriented guide to implementing the proposed estimator, including the construction of the debiased estimator, variance estimation, confidence intervals, and practical recommendations for selecting ML first steps, is provided in Online Appendix \ref{app_practitioner_iop}.
\subsubsection{Asymptotic properties of the debiased IOp}

We give lower-level sufficient conditions for the asymptotic normality of the debiased IOp estimator based on the general assumptions previously introduced. To simplify the notation, define $\Delta_{\gamma}(X_{i},X_{j})=\gamma(X_{i})-\gamma(X_{j})$ and $\Delta_{0}\equiv\Delta_{\gamma_{0}}(X_{i},X_{j})$. 

\begin{assumption}
(i) $||\hat{\gamma}_{l} - \gamma_{0}||_1 = o_p(1)$; (ii) $ P(\Delta_0=0| X_i \neq X_j) = 0$; (iii) $P(0 < |\Delta_0| \leq t ) \leq C t^\beta$   \text{ for } $\beta > 0$, and for all $t>0$ sufficiently small.
\label{Ass_AT_iop}
\end{assumption}
Assumption \ref{Ass_AT_iop} (i) requires $\hat{\gamma}_{l}$ to be $L_1$ consistent. Assumption \ref{Ass_AT_iop} (ii) requires that, given $X_i \neq X_j$, the mass at zero of $|\Delta_0|$ is zero. This holds if $\Delta_0$ has an absolutely continuous distribution or in the finitely discrete case if $\gamma_0$ is injective. Assumption \ref{Ass_AT_iop} (iii) requires that the mass of $|\Delta_0|$ strictly above zero disappears at a given rate. If $\Delta_0$ has a Lebesgue density that is bounded at $0$ then it holds with $\beta = 1$. For $\beta = \infty$, the mass around zero disappears. This is the case when $\Delta_0$ is discrete with finitely many mass points, since there exists $\tilde t>0$ such that
\[
P(0<|\Delta_0|\le t)=0
\]
for all $t\le \tilde t$. Hence, the condition holds for every $\beta>0$, which we summarize by writing $\beta=\infty$. The next result uses (i) and (ii) to show mean-square consistency of the sign difference and (iii) to control the smoothness of the functional $\gamma \rightarrow\mu(\Delta_{\gamma})=\mathbb{E}[(sgn(\Delta_{\gamma}) - sgn(\Delta_0))\Delta_0]$.
\begin{proposition}
\label{prop_sgn_consistency_smoothness}
Under Assumption \ref{Ass_AT_iop} (i) and (ii),
\[
\int\int|sgn(\Delta_{\hat{\gamma}_l}) - sgn(\Delta_0)|^{2} F_{0}(dw_{i})F_{0}(dw_{j}) \to_{p} 0,
\]
and under Assumption \ref{Ass_AT_iop} (iii),
\[
\mu(\Delta_{\gamma})  \leq 
\begin{cases}
    C ||\gamma - \gamma_0||_q^{\frac{q(1+\beta)}{q+ \beta}} &\text{ if } q \in [1,\infty),\\
    C ||\gamma - \gamma_0||_\infty^{1+\beta} &\text{ if } q = \infty.
\end{cases}
\]
%where
%\[
%C(\beta,q) = 4 \cdot 2^{q-1}\cdot  C_{\Delta}\left[ \left( \frac{q}{\beta} \right)^{\frac{\beta}{\beta+q}} + \left( \frac{q}{\beta} \right)^{\frac{-q}{\beta+q}}\right].
%\]
\end{proposition}
In $L_2$ norm (i.e., $q=2$), we have that $\mu(\Delta_{\gamma}) \leq C ||\gamma - \gamma_0||^{2 \frac{1+\beta}{2+\beta}}$, so in the continuous case with bounded density at 0 (i.e., $\beta=1$) the smoothness exponent is $4/3$ while in the finitely discrete case ($\beta = \infty$) we achieve a quadratic bound. This proposition is an improvement upon the results in \cite{clemenccon2011minimax}.\footnote{They assume $P(|\gamma_0(X_i) - \gamma_0(x)| \leq t) \leq C t^\beta$ for all $x$. This implies Assumption \ref{Ass_AT_iop}(iii) for $\beta \leq 1$. In contrast, we allow for $\beta > 1$, which is key for the rates in Assumption \ref{ass_rates_iop}. As a result, we recover Massart’s margin condition from classification (\cite{massart2000some}; \cite{audibert2007fast}) in a ranking setting.} The next assumption relates to the trade-off between smoothness properties and the required rates of the ML estimators.
\begin{assumption}
    \label{ass_rates_iop}
    $||\hat{\gamma}_{l} - \gamma_{0}||_q = o_p(n^{-\rho_{q\beta}})$ where $\rho_{q\beta} \geq \frac{1}{2q} \cdot \frac{q+\beta}{1+\beta} \text{ if } q \in [1,\infty)$, and $\rho_{q\beta} \geq \frac{1}{2(1+\beta)} \text{ if } q = \infty$. 
\end{assumption}
There are several ways in which one can achieve the asymptotic normality results, depending on the margin parameter $\beta$ and the rate of the ML estimator under different norms. Table \ref{tab_ML_rates_margin} shows the most common configurations. In the finitely discrete case ($\beta = \infty$) and under the supremum norm, we allow for slow convergence rates, even slower than $n^{-1/4}$. 

\begin{table}[h!]
\centering
\begin{tabular}{@{}ccc@{}}
\toprule
\textbf{\(q\)} & \textbf{\(\beta = 1\)} & \textbf{\(\beta = \infty\)} \\
\midrule
\(2\) & \(\displaystyle 3/8 \) & \(\displaystyle 1/4\) \\
\(\infty\) & \(\displaystyle 1/4 \) & \(\displaystyle  0\) \\
\bottomrule
\end{tabular}
\caption{\footnotesize ML rates required for different margin parameters in $L_2$ and $L_\infty$ norms.}
\label{tab_ML_rates_margin}
\end{table}
\noindent In our empirical application $\beta = \infty$ so the $L_2$ nonparametric rates of $n^{-1/4}$ typically imposed in the DML literature suffice. For consistent estimation of the variance, we assume the following.
\begin{assumption}
\label{ass_Vhatcons_IOp} $\mathbb{E}[(\hat{\gamma}(X_{i}) - \gamma
_{0}(X_{i}))^{2}] = o(1)$.
\end{assumption}
This assumption strengthens Assumption \ref{Ass_AT_iop} (i) to mean-square convergence (unconditionally). It can be discarded at the cost of cross-fitting the variance estimator.
\begin{proposition}
\label{Prop_AT_IOp} Let Assumptions \ref{Ass_AT_iop}, \ref{ass_rates_iop} and
\ref{ass_Vhatcons_IOp} hold. Assuming further that either $\mathbb{E}[Y_i^2 |X_i] < C < \infty$ a.s. or that $\mathbb{E}[|Y_i|^{2+\delta}] < \infty$ for some $\delta > 0$, we have $\sqrt{n}(\hat{\theta}-\theta_{0})\rightarrow_{d}\mathcal{N}(0,V)$ and
$\hat{V}\rightarrow_{p}V$, where $\hat{V}$ is given in (\ref{VhatIOp}).
\end{proposition}
%The proof of Proposition \ref{Prop_AT_IOp} follows by treating $\psi$ as an orthogonal score with no extra nuisance parameters and using Corollary \ref{corollary_AT_alphaknownuptogamma}.

\subsection{Inference on ML Performance through the AUC}
\label{sec_AUC}

The Area Under the ROC Curve (AUC) is one of the most widely used measures
of predictive accuracy for binary classifiers; see \cite{bradley1997use}.
Classical asymptotic theory for the AUC is well understood under fixed or low-dimensional scores; see, e.g., \citet{hanley1982meaning} and \citet{newson2006confidence}. However, modern empirical applications increasingly rely on flexible ML classifiers, for which standard plug-in inference procedures may fail because of regularization and overfitting biases in the estimated score. Let
$Y_i\in\{0,1\}$, $p_0=P(Y_i=1)$, and let $\gamma_0(x)=\mathbb{E}[Y_i|X_i=x]$
be the population ML score. The AUC is the probability that the score ranks a
randomly selected positive observation ($Y_i=1$) above a randomly selected null observation ($Y_i=0$), with ties counted as one half:
\[
\theta_0
=
\mathbb{E}[1(\gamma_0(X_i)>\gamma_0(X_j))|Y_i=1,Y_j=0]
+
\frac12
\mathbb{E}[1(\gamma_0(X_i)=\gamma_0(X_j))|Y_i=1,Y_j=0].
\]
The next result shows that the AUC is a known functional of the same pairwise
sign functional underlying our debiased Gini estimator.

\begin{proposition}
\label{AUCrep}
Let $0<p_0<1$. The AUC can be written as
\[
\theta_0
=
\frac12
+
\frac{
A_0
}
{4p_0(1-p_0)},
\qquad
A_0
=
\mathbb{E}\left[
sgn(\gamma_0(X_i)-\gamma_0(X_j))(Y_i-Y_j)
\right].
\]
Hence, the AUC is locally robust with respect to the ML first-step $\gamma_0$.
\end{proposition}
\begin{remark}
First-step estimation can affect the limiting distribution of the ROC process; see, for example, \citet{hsu2021inference}. Our result targets the scalar AUC directly.
\end{remark}
\noindent Given cross-fitted predictions $\hat\gamma_l$, define
\[
\hat A
=
\binom{n}{2}^{-1}
\sum_{l=1}^{L}
\sum_{(i,j)\in I_l}
sgn(\hat\gamma_l(X_i)-\hat\gamma_l(X_j))(Y_i-Y_j),
\qquad
\hat p=\frac1n\sum_{i=1}^{n}Y_i.
\]
The cross-fitted AUC estimator is
\[
\hat\theta_{AUC}
=
\frac12
+
\frac{\hat A}{4\hat p(1-\hat p)}.
\]

\begin{corollary}[Asymptotic normality of the cross-fitted AUC estimator]
\label{cor_AUC_AN}
Suppose $0<p_0<1$, $Y_i\in\{0,1\}$, and Assumptions
\ref{Ass_AT_iop} and \ref{ass_rates_iop} hold. Then
\[
\sqrt n(\hat\theta_{AUC}-\theta_0)
\rightarrow_d
N(0,V_{AUC}),
\]
where
\[
V_{AUC}
=
\mathbb{V}ar
\left(
\frac{
\mathbb{E}\left[
sgn(\gamma_0(X_i)-\gamma_0(X_j))(Y_i-Y_j)
\mid W_i
\right]
-A_0
}
{2p_0(1-p_0)}
-
\frac{A_0(1-2p_0)}{4p_0^2(1-p_0)^2}
(Y_i-p_0)
\right).
\]
\end{corollary}
The corollary follows from the asymptotic linearity established for the debiased IOp estimator and the delta method applied to $(\hat A,\hat p)$. The key insight is that the AUC inherits local robustness from the debiased pairwise sign functional $A_0$. This yields a feasible and asymptotically valid cross-fitted AUC estimator under ML first-step estimation.
\subsection{Kernel-DML Estimators for Conditional Moment Restrictions}
\label{DBSE}

Many econometric models can be formulated through conditional moment restrictions (CMR):
\begin{equation}
m(Z_i,\gamma_0,\theta)
:=
\mathbb E[\varepsilon(W_i,\gamma_0,\theta)\mid Z_i]
=
0
\ \text{a.s.}
\iff
\theta=\theta_0,
\label{CMR}
\end{equation}
where $\varepsilon:\mathcal W\times\Gamma\times\Theta\to\mathbb R^q$ is known, $\gamma_0$ is an unknown first-step, and $\theta\in\Theta\subseteq\mathbb R^p$ is finite dimensional.\footnote{Henceforth, we focus on $q=1$, since we can apply this case to each component of $\varepsilon$.} Classical GMM estimators based on finitely many unconditional moments may lose identification because they only exploit necessary but not sufficient conditions of (\ref{CMR}); see \citet{dominguez2004consistent}. Integrated Conditional Moment methods avoid this issue by using a continuum of moments; see, among others, \citet{dominguez2004consistent, escanciano2006consistent, shin2008semiparametric, antoine2014conditional, escanciano2018simple,antoine2022partially}. We extend the literature by combining identification-preserving kernel criteria with DML methods. Define
\begin{equation}
Q(\gamma_0,\theta)
=
\mathbb E[
\varepsilon(W_i,\gamma_0,\theta)
\varepsilon(W_j,\gamma_0,\theta)
K(Z_i,Z_j)
],
\label{Q}
\end{equation}
where $K$ is a symmetric kernel. Assume that $K$ is integrally strictly positive definite (ISPD):
\[
\iint K(z_i,z_j)\,d\mu(z_i)\,d\mu(z_j)=0
\quad\Longrightarrow\quad
\mu=0,
\]
for any finite signed measure $\mu\ll P_Z$, where $P_Z$ is the probability measure of $Z_i$. Sufficient conditions for ISPD have been well studied in the literature, e.g., Gaussian and many translation-invariant kernels are known to satisfy it (\citet{sriperumbudur2011universality}). 

\begin{proposition}[No identification loss]
\label{prop:CMR_kernel_id}
Suppose $m(\cdot,\gamma_0,\theta)\in L_2(P_Z)$ for all $\theta\in\Theta$. If (\ref{CMR}) holds and $K$ is ISPD, then
\[
Q(\gamma_0,\theta)\ge0,
\qquad
Q(\gamma_0,\theta)=0
\iff
\theta=\theta_0.
\]
\end{proposition}
Proposition \ref{prop:CMR_kernel_id} establishes that the kernel criterion
preserves identification under ISPD kernels. The estimator studied below is
based on the first-order conditions of this criterion. Sufficient conditions linking the identification of the criterion
to the identification of the corresponding estimating equations are discussed in
Online Appendix \ref{app_ident_prod_fcns}.

Let
\[
\varepsilon_\theta(W_i,\gamma,\theta)
=
\frac{\partial}{\partial\theta}
\varepsilon(W_i,\gamma,\theta),
\]
and
\[
\varepsilon_\gamma(W_i,\gamma_0,\theta)
=
\frac{\partial}{\partial t}\varepsilon(W_i,t,\theta)\big|_{t=\gamma_0(X_i)},
\qquad
\varepsilon_{\theta\gamma}(W_i,\gamma_0,\theta)
=
\frac{\partial^2}{\partial t\,\partial\theta'}
\varepsilon(W_i,t,\theta)\big|_{t=\gamma_0(X_i)}.
\]
We apply our methodology to the first-order conditions of $Q(\gamma_0,\theta)$:
\begin{equation}
g(W_i,W_j,\gamma,\theta)
=
K(Z_i,Z_j)
\Big(
\varepsilon(W_i,\gamma,\theta)\varepsilon_\theta(W_j,\gamma,\theta)
+
\varepsilon_\theta(W_i,\gamma,\theta)\varepsilon(W_j,\gamma,\theta)
\Big).
\label{FOC_CMR}
\end{equation}

\begin{assumption}
\label{ass_KCMR}
(i) $\varepsilon(\cdot,\theta)$ and $\varepsilon_\gamma(\cdot,\theta)$ are differentiable in $\theta$ a.s.; 
(ii) $K\in L_2$; 
(iii) $\varepsilon$, $\varepsilon_\theta$, $\varepsilon_\gamma$, and $\varepsilon_{\theta\gamma}$ admit square-integrable envelopes uniformly in $\theta$.
\end{assumption}

\begin{proposition}
\label{prop_phi_KCMR}
Suppose Assumption \ref{ass_KCMR} holds and $\gamma_0$ satisfies (\ref{orth1}) with $\Gamma=L_2(P_X)$. Then an orthogonal correction term for (\ref{FOC_CMR}) is
\begin{align*}
&\phi(W_i,W_j,\gamma_0,\alpha_0(\theta),\theta)
=
\alpha_0(X_i,\theta)(Y_i-\gamma_0(X_i))
+
\alpha_0(X_j, \theta)(Y_j-\gamma_0(X_j)), \\
&\alpha_0(x,\theta)
=
\mathbb E\!\left[
\Big(
\varepsilon_\gamma(W_i,\gamma_0,\theta)\varepsilon_\theta(W_j,\gamma_0,\theta)
+
\varepsilon_{\theta\gamma}(W_i,\gamma_0,\theta)\varepsilon(W_j,\gamma_0,\theta)
\Big)
K(Z_i,Z_j)
\mid X_i=x
\right].
\end{align*}
\end{proposition}
%%%%%%%%%%%%%%%%%%%%%%%%%%%%%%%%%%%%%%%%%%%%%%%%%%%%%%%%%%%%%%%%%%%%%%%%%%%%
\subsubsection{Semiparametric Production Functions with a Proxy Variable}
%%%%%%%%%%%%%%%%%%%%%%%%%%%%%%%%%%%%%%%%%%%%%%%%%%%%%%%%%%%%%%%%%%%%%%%%%%%%

Consider the two-period production model
\[
Y_{it}
=
F(K_{it},L_{it},\beta_0)+\omega_{it}+\varepsilon_{it},
\qquad t=1,2,
\]
% where productivity $\omega_{it}$ is observed by the firm but not by the econometrician.
where $Y_{it}$ is log output for firm $i$ at time $t$, $(K_{it},L_{it})$ denote capital and labor inputs, $\beta_0\in\mathcal{B}\subset\mathbb{R}^{d_\beta}$ is a finite-dimensional parameter, $\omega_{it}$ is firm productivity, and $\varepsilon_{it}$ is a mean-zero innovation. Productivity is known to the firms but not to the econometrician and correlates with observable variables. The proxy-variable approach addresses this endogeneity by using intermediate inputs or investment decisions to control for unobserved productivity shocks; see \citet{olley1996dynamics}, \citet{levinsohn2003estimating}, and \citet{ackerberg2015identification}.

\begin{assumption}[Proxy-variable structure]
\label{ass_proxy_model}
\leavevmode
\vspace{-0.4\baselineskip}
\begin{enumerate}[i)]
\item \textbf{Proxy variable:} There exists an observable variable $I_{i1}=\varphi_0(\omega_{i1},K_{i1},L_{i1})$.

\item \textbf{Monotonicity:} For fixed $(K_{i1},L_{i1})$, $\varphi_0(\cdot,K_{i1},L_{i1})$ is strictly monotone, so that $\omega_{i1}=\omega_0(X_i)$ with $X_i=(I_{i1},K_{i1},L_{i1})$.

\item \textbf{Productivity law of motion:} $\mathbb E[\omega_{i2}\mid X_i]=\rho_0\omega_{i1}$, for some $\rho_0\in\mathbb{R}$.

\item \textbf{Exogeneity:} $\mathbb E[\varepsilon_{it}\mid X_i]=0$ for $t=1,2$.
\end{enumerate}
\end{assumption}
Then
\[
\gamma_0(X_i)
=
\mathbb E[Y_{i1}\mid X_i]
=
F(K_{i1},L_{i1},\beta_0)+\omega_0(X_i),
\]
and the model implies $m(X_i,\gamma_0,\theta)
=
\mathbb E[\varepsilon(W_i,\gamma_0,\theta)\mid X_i]=0$ at $\theta=\theta_0$, where
\[
\varepsilon(W_i,\gamma_0,\theta)
=
Y_{i2}
-
F(K_{i2},L_{i2},\beta)
-
\rho\big(
\gamma_0(X_i)-F(K_{i1},L_{i1},\beta)
\big), \quad \theta=(\beta',\rho)'.
\]
To see how identification might fail with finitely many instruments, suppose $\beta_0$ is known and consider identification of $\rho_0$. We have that $m(X_i,\gamma_0,\beta_0,\rho) = (\rho_0-\rho)\omega_0(X_i)$. Hence, identification only requires that $\omega_0(X_i)$ is not equal to zero almost surely. However, finite-dimensional instruments $h(X_i)$ identify $\rho_0$ only through the finite moments
\[
\mathbb E[m(X_i,\rho)h(X_i)]
=
(\rho_0-\rho)\mathbb E[\omega_0(X_i)h(X_i)].
\]
Hence, identification may become weak if $\omega_0(X_i)$ is weakly correlated with the chosen instruments.\footnote{Adjustment costs in investment or intermediate materials lead to non-linearities in $\omega_0$ (see \cite{cooper2006nature}, \cite{levinsohn2003estimating} or \cite{peter2023aggregate}.} For a numeric example where using a finite set of polynomial as instruments fails, see the example in Online Appendix \ref{app_ident_prod_fcns}. The following Proposition provides the orthogonal moment function for the proxy-variable approach to production function estimation.

\begin{proposition}
\label{prop:prod_phi}
Suppose $\Theta$ is compact, $K\in L_2$, $Y_{it}$ and $\gamma_0(X_i)$ are square integrable, and
\[
F(K_{it},L_{it},\beta),
\qquad
\frac{\partial}{\partial\beta}F(K_{it},L_{it},\beta)
\]
admit square-integrable envelopes uniformly in $\beta$. Then the orthogonal score is $\psi=g+\phi$, where $g$ and $\phi$ are given by (\ref{FOC_CMR}) and Proposition \ref{prop_phi_KCMR}, respectively, with
\[
\alpha_0(X_i)
=
\left(
(\rho_0\eta_{02}(X_i)-\rho_0^2\eta_{01}(X_i))',
\,
\rho_0\eta_{03}(X_i)
\right)',
\]
where
\begin{align*}
\eta_{01}(x)
&=
\mathbb E\!\left[
\frac{\partial}{\partial\beta}F(K_{j1},L_{j1},\beta_0)K(x,X_j)
\right], \quad \eta_{02}(x)
&=
\mathbb E\!\left[
\frac{\partial}{\partial\beta}F(K_{j2},L_{j2},\beta_0)K(x,X_j)
\right],
\\
\eta_{03}(x)
&=
\mathbb E\!\left[
(Y_{j1}-F(K_{j1},L_{j1},\beta_0))K(x,X_j)
\right].
\end{align*}
\end{proposition}

\paragraph{Flexible linear specification} Let $F(Q_{it},\beta)=\beta'Q_{it}$, where $Q_{it}$ contains flexible transformations of inputs. Then
\[
\varepsilon(W_i,\gamma,\theta)
=
Y_{i2}-\beta'Q_{i2}-\rho(\gamma(X_i)-\beta'Q_{i1}),
\]
and
\[
\alpha_0(x,\theta)
=
\rho
\begin{pmatrix}
\eta_{02}(x)-\rho\eta_{01}(x)
\\
\eta_{03}(x) - \beta'\eta_{01}(x)
\end{pmatrix},
\]
with $\eta_{01}(x)=\mathbb E[Q_{j1}K(x,X_j)]$, $\eta_{02}(x)=\mathbb E[Q_{j2}K(x,X_j)]$ and $\eta_{03}(x) = \mathbb E[Y_{j1}K(x,X_j)]$. For each block $I_l$,
\begin{align*}
\hat{\eta}_{1l}(x)
&=
\frac{1}{|S_l^c|}
\sum_{j\in S_l^c}Q_{j1}K(x,X_j),
\quad
\hat{\eta}_{2l}(x)
=
\frac{1}{|S_l^c|}
\sum_{j\in S_l^c}Q_{j2}K(x,X_j), \quad \hat{\eta}_{3l}(x)
&=
\frac{1}{|S_l^c|}
\sum_{j\in S_l^c}Y_{j1}K(x,X_j),
\end{align*}
are just sample averages. The induced estimator of $\alpha_0(\theta)$ is
\[
\hat{\alpha}_l(x,\theta)
=
\rho
\begin{pmatrix}
\hat{\eta}_{02l}(x)-\rho\hat{\eta}_{01l}(x)
\\
\hat{\eta}_{03l}(x) - \beta'\hat{\eta}_{01l}(x)
\end{pmatrix},
\]
and the DML U-estimator solves
\begin{equation}
\binom{n}{2}^{-1}
\sum_{l=1}^{L}
\sum_{(i,j)\in I_l}
\psi(W_i,W_j,\hat\gamma_l,\hat\alpha_l(\theta),\theta)
=
0.
\label{flexLin-DMLU}
\end{equation}
For the asymptotic theory of the flexible linear specification, we assume the following
\begin{assumption}
\label{ass_asympt_prod_fcns}
(i) $\theta_0$ is an interior point of a compact set $\Theta$ and
\[
\mathbb E[g(W_i,W_j,\gamma_0,\theta)]=0
\iff
\theta=\theta_0;
\]
(ii) $K(X_i,X_j)$ is bounded a.s.; 
(iii) $Y_{it}$, $Q_{it}$, and $\gamma_0(X_i)$ have finite fourth moments; 
(iv) $\|\hat\gamma_l-\gamma_0\|=o_p(n^{-1/4})$ for all $l$; (v) $B
=
\frac{\partial}{\partial\theta'}
\mathbb E[g(W_i,W_j,\gamma_0,\theta_0)]$
exists and is nonsingular.
\end{assumption}
Identification in semiparametric production functions and sufficient
conditions for Assumption \ref{ass_asympt_prod_fcns}(i) are discussed in
Online Appendix \ref{app_ident_prod_fcns}.
\begin{proposition}
\label{prop_AN_prod_fcns}
Let $\hat\theta$ solve (\ref{flexLin-DMLU}). Under Assumption \ref{ass_asympt_prod_fcns},
\[
\sqrt n(\hat\theta-\theta_0)
\rightarrow_d
N(0,V),
\]
where $V = B^{-1}\Sigma B^{\prime-1}$ and $\Sigma = 4\mathbb V\!ar\left(\mathbb E[
\psi(W_i,W_j,\gamma_0,\alpha_0,\theta_0) \mid W_i]\right)$.
\end{proposition}
%%%%%%%%%%%%%%%%%%%%%%%%%%%%%%%%%%%%%%%%%%%%%%%%%%%%
\section{Monte Carlo Simulations}

\label{sec_simulations}

This section compares the finite-sample performance of the debiased IOp and plug-in estimators under two DGPs: an artificial Gaussian first-step error model that isolates first-step bias, and an empirically motivated design calibrated to the Spanish data used below.

\subsection{A Gaussian First-Step Error Model}
In this DGP, $X_i \sim \mathcal{N}(0,\sigma_X^2)$ and
\[
Y_i = \gamma_0(X_i) + \varepsilon_i, \quad \gamma_0(x) = 5 + x^2,
\]
where $\varepsilon_i \sim \mathcal{N}(0, \sigma_\varepsilon^2)$, $\sigma_\varepsilon^2 = \mathbb{V}ar(\gamma_0(X_i))/StN$, and $StN$ is the Signal to Noise Ratio which is fixed by us. This DGP is called a Gaussian First-Step Error Model because we generate
\[
\hat{\gamma}(X_i)
=
\gamma_0(X_i)+b_n(X_i)+\sigma_nU_i,
\qquad
U_i\sim N(0,1),
\]
where
\[
b_n(x)
=
c_1\bigl(\mathbb E[\gamma_0(X_i)]-\gamma_0(x)\bigr)n^{-\rho_1},
\qquad
\sigma_n^2
=
\frac{c_2}{STN}n^{-\rho_2},
\]
and $c_1,c_2,\rho_1,\rho_2$ are constants. The variables $(X_i,\varepsilon_i,U_i)$ are independent. This exercise is artificial in the sense that $\hat \gamma$ is not obtained from the data. However, the upside of this experiment is that we can check asymptotic theory with full control over the rates of $\hat{\gamma}$. Conveniently, the derivative (\ref{main}) is normally distributed

\begin{equation*}
\sqrt{n}\frac{\partial\theta(\gamma_{0})}{\partial\gamma}\left[  \hat{\gamma}%
-\gamma_{0}\right]\sim \mathcal{N}\left(C_{1}n^{0.5-\rho_1}, \sigma^2_{dn}\right),
\end{equation*}
where $C_{1}=c_{1}\theta_0$, $\sigma^2_{dn}=C_{2}n^{1-\rho_2}$ and $C_{2}=c_{2}(\mathbb{E}[2Y_i])^{-2}\theta^2_{0}/StN$. Computations for the Gaussian First-Step Error Model can be found in Online Appendix \ref{app_computations_GFSM}. Therefore, the first-step contribution to the root-$n$ distribution diverges when $\rho_1 < 1/2$ or $\rho_2 < 1$, preventing valid root-$n$ inference for the plug-in estimator. This illustrates the impact of first-step estimation error and the potential failure of standard plug-in inference.

Assumption \ref{Ass_AT_iop} (iii) holds with any $\beta < 1$ (see Online Appendix \ref{app_computations_GFSM}). Hence, we require rates faster than $3/8$ in $L_2$ norm for valid inference of the debiased estimator. This implies $\rho_1 > 3/8$ and $\rho_2 > 3/4$. In the simulations, we present slower and faster rates than required to evaluate the robustness of the finite sample performance to our sufficient conditions. The lower the $\beta$, the less smooth the model and the harder it is to control the nonlinear bias terms. Hence, with this DGP, we are dealing with a difficult case ($\beta<1$).  

Table \ref{tab:tab_GEM_sims} shows bias and 95\% coverage across choices of $(c_1,c_2,\rho_1,\rho_2,\sigma_X^2,\mathrm{StN})$. For the plug-in estimator, we report naive coverage ignoring first-step estimation and corrected-standard-error coverage, using the same formula as for the debiased estimator. For the debiased estimator, we report coverage and average interval length.

The results match the theory. Slow convergence of $\hat\gamma$ produces sizeable plug-in bias and severe coverage distortions, especially with naive standard errors. The debiased estimator has small bias and close-to-nominal coverage across sample sizes and parameter configurations, including cases outside our sufficient conditions.

\begin{table}[!h]
\centering
\caption{\label{tab:tab_GEM_sims}Gaussian First-step error model simulations}
\centering
\fontsize{8}{10}\selectfont
\begin{threeparttable}
\begin{tabular}[t]{cccccccccc}
\toprule
\textbf{StN} & \textbf{$\rho_1$} & \textbf{$\rho_2$} & \textbf{$n$} & \textbf{BP} & \textbf{CP} & \textbf{CPN} & \textbf{BD} & \textbf{CD} & \textbf{CI length}\\
\midrule
0.1 & 0.26 & 0.6 & 1000 & -0.011 & 1.000 & 0.165 & -0.005 & 0.935 & 0.057\\
 &  &  & 3000 & -0.009 & 0.999 & 0.020 & -0.003 & 0.934 & 0.033\\
 &  &  & 6000 & -0.008 & 0.974 & 0.001 & -0.002 & 0.935 & 0.023\\
\midrule
 &  & 0.8 & 1000 & -0.016 & 1.000 & 0.017 & -0.002 & 0.945 & 0.057\\
 &  &  & 3000 & -0.012 & 0.971 & 0.001 & -0.001 & 0.936 & 0.033\\
 &  &  & 6000 & -0.010 & 0.751 & 0.000 & -0.001 & 0.945 & 0.023\\
\midrule
 & 0.5 & 0.6 & 1000 & 0.003 & 1.000 & 0.920 & -0.004 & 0.945 & 0.057\\
 &  &  & 3000 & 0.002 & 1.000 & 0.893 & -0.002 & 0.934 & 0.033\\
 &  &  & 6000 & 0.001 & 1.000 & 0.910 & -0.002 & 0.942 & 0.023\\
\midrule
 &  & 0.8 & 1000 & -0.002 & 1.000 & 0.923 & -0.002 & 0.948 & 0.057\\
 &  &  & 3000 & -0.001 & 1.000 & 0.915 & -0.001 & 0.936 & 0.033\\
 &  &  & 6000 & -0.001 & 1.000 & 0.911 & -0.001 & 0.947 & 0.023\\
\midrule
1 & 0.26 & 0.6 & 1000 & -0.017 & 0.115 & 0.010 & -0.001 & 0.947 & 0.024\\
 &  &  & 3000 & -0.013 & 0.009 & 0.001 & 0.000 & 0.935 & 0.014\\
 &  &  & 6000 & -0.011 & 0.001 & 0.000 & 0.000 & 0.936 & 0.010\\
\midrule
 &  & 0.8 & 1000 & -0.017 & 0.085 & 0.006 & 0.000 & 0.949 & 0.024\\
 &  &  & 3000 & -0.013 & 0.005 & 0.000 & 0.000 & 0.940 & 0.014\\
 &  &  & 6000 & -0.011 & 0.000 & 0.000 & 0.000 & 0.934 & 0.010\\
\midrule
 & 0.5 & 0.6 & 1000 & -0.003 & 0.983 & 0.889 & -0.001 & 0.947 & 0.024\\
 &  &  & 3000 & -0.001 & 0.982 & 0.893 & 0.000 & 0.938 & 0.014\\
 &  &  & 6000 & -0.001 & 0.982 & 0.893 & 0.000 & 0.935 & 0.010\\
\midrule
 &  & 0.8 & 1000 & -0.003 & 0.979 & 0.861 & 0.000 & 0.950 & 0.024\\
 &  &  & 3000 & -0.002 & 0.980 & 0.873 & 0.000 & 0.942 & 0.014\\
 &  &  & 6000 & -0.001 & 0.973 & 0.865 & 0.000 & 0.933 & 0.010\\
\bottomrule
\end{tabular}
\begin{tablenotes}[para]
\item Simulation Results 1000 iterations, 95\% confidence intervals. sd(X) = 1, $c_1 = 1$, $c_2 = 0.5$. BP: Bias Plug-in, CP: Coverage Plug-in, CPN: Coverage Plug-in Naive, BD: Bias Debiased, CD: Coverage Debiased.
\end{tablenotes}
\end{threeparttable}
\end{table}

\subsection{Empirically-Based DGP}

Next, we calibrate a DGP to the Spanish data used in the empirical application. We regress log income on mother's education, father's education, and father's occupation, and generate
\[
Y_i=\exp(\gamma_0(X_i)+\varepsilon_i),\qquad \varepsilon_i\sim N(0,\sigma_\varepsilon^2),
\]
where $\gamma_0(X_i)$ is sampled with replacement from the OLS fitted values and $\sigma_\varepsilon^2$ is the residual variance. The econometrician sees all circumstances in the data, but not the subset entering the DGP, and uses grouped Lasso for variable selection (see \cite{breheny2015group}).

The DGP has 52 groups with distinct conditional means and true IOp equal to 0.085. The econometrician does not know which groups enter the DGP and might entertain a much larger number of groups. Since the ML methods are trained on $Y$ rather than $\log Y$, detecting the signal is difficult; the resulting signal-to-noise ratio is a realistic 3\%.

We use random forests (RF) and conditional inference forests (CIF) implemented with the \texttt{ranger} and \texttt{party} packages.\footnote{Hyperparameters are selected by cross-validation. See \citet{bach2024hyperparameter} for a recent study of hyperparameter tuning in DML.} CIF is popular in the IOp literature (e.g. \cite{brunori2021roots}). As a benchmark, we also report a correctly specified parametric estimator obtained by regressing $\log Y$ on the true circumstances and using $\exp(\hat\gamma(X))$ as predictions. Since the Gini is scale-invariant, this estimator is correctly specified up to scale. As before, plug-in coverage is reported both with and without correcting for first-step estimation.

Table \ref{tab_selection_sims} shows that RF and CIF plug-in estimators remain biased and deliver poor coverage, especially with naive standard errors. Debiasing sharply reduces bias and restores close-to-nominal coverage. The correctly specified parametric estimator performs well, but only after accounting for first-step estimation in the standard errors. The debiased ML intervals shrink with sample size and are comparable in length to the parametric benchmark, illustrating the adaptive and efficient features of the proposed debiased inferences.

\begin{table}[H]
  \centering
  \footnotesize
\caption{Simulation results with 1000 iterations, 95\% condidence intervals.}

\begin{tabular}[t]{llcccccc}
\toprule
\multicolumn{2}{c}{ } & \multicolumn{3}{c}{Plug-in} & \multicolumn{3}{c}{Debiased} \\
\cmidrule(l{3pt}r{3pt}){3-5} \cmidrule(l{3pt}r{3pt}){6-8}
ML & n & Bias & Coverage & Coverage naive & Bias & Coverage & Avg. CI length\\
\midrule
\addlinespace[0.3em]
\multicolumn{8}{l}{\textbf{}}\\
\hspace{1em}\textbf{RF} & 1000 & -0.012 & 0.916 & 0.186 & 0.002 & 0.953 & 0.081\\
\hspace{1em} & 3000 & -0.012 & 0.806 & 0.140 & -0.001 & 0.937 & 0.047\\
\hspace{1em} & 6000 & -0.011 & 0.741 & 0.097 & -0.001 & 0.923 & 0.033\\
\addlinespace[0.3em]
\multicolumn{8}{l}{\textbf{}}\\
\hspace{1em}\textbf{CIF} & 1000 & -0.033 & 0.621 & 0.081 & -0.008 & 0.901 & 0.077\\
\hspace{1em} & 3000 & -0.022 & 0.550 & 0.034 & -0.003 & 0.922 & 0.046\\
\hspace{1em} & 6000 & -0.016 & 0.495 & 0.037 & -0.002 & 0.934 & 0.032\\
\addlinespace[0.3em]
\multicolumn{8}{l}{\textbf{}}\\
\hspace{1em}\textbf{Parametric} & 1000 & 0.022 & 0.737 & 0.076 & -0.006 & 0.918 & 0.077\\
\hspace{1em} & 3000 & 0.012 & 0.868 & 0.106 & -0.003 & 0.926 & 0.046\\
\hspace{1em} & 6000 & 0.008 & 0.887 & 0.103 & -0.001 & 0.956 & 0.032\\
\bottomrule
\end{tabular}

  \label{tab_selection_sims}
\end{table}

\section{Inequality of Opportunity in Europe}

\label{sec_empapp} 
We measure IOp in 29 European countries using the 2019 wave of the EU-SILC survey. Income is equivalized household income, the unit is the individual, and we restrict the sample to ages 25--59. Circumstances are measured retrospectively around age 14 and include sex, country of birth, household composition, municipality size, housing tenure, parental country of birth, parental education and occupation, father's managerial position, father's occupation, household financial situation, and access to school materials. All circumstances are discrete, and we use grouped Lasso to select among them before estimating the first step.

Figure \ref{fig:iop} reports relative IOp estimates using CIF, with 95\% confidence intervals adjusted for multiple testing (\cite{vsidak1967rectangular}) for the debiased estimator. We also report the parametric (\cite{ferreira2011measurement}) benchmark with bootstrap standard errors. For CIF, the plug-in estimator systematically underestimates IOp and often lies outside the debiased confidence intervals. Results for other ML methods are reported in Online Appendix \ref{app_emp}.

\begin{figure}[H]
    \centering
    \includegraphics[width=1\textwidth]{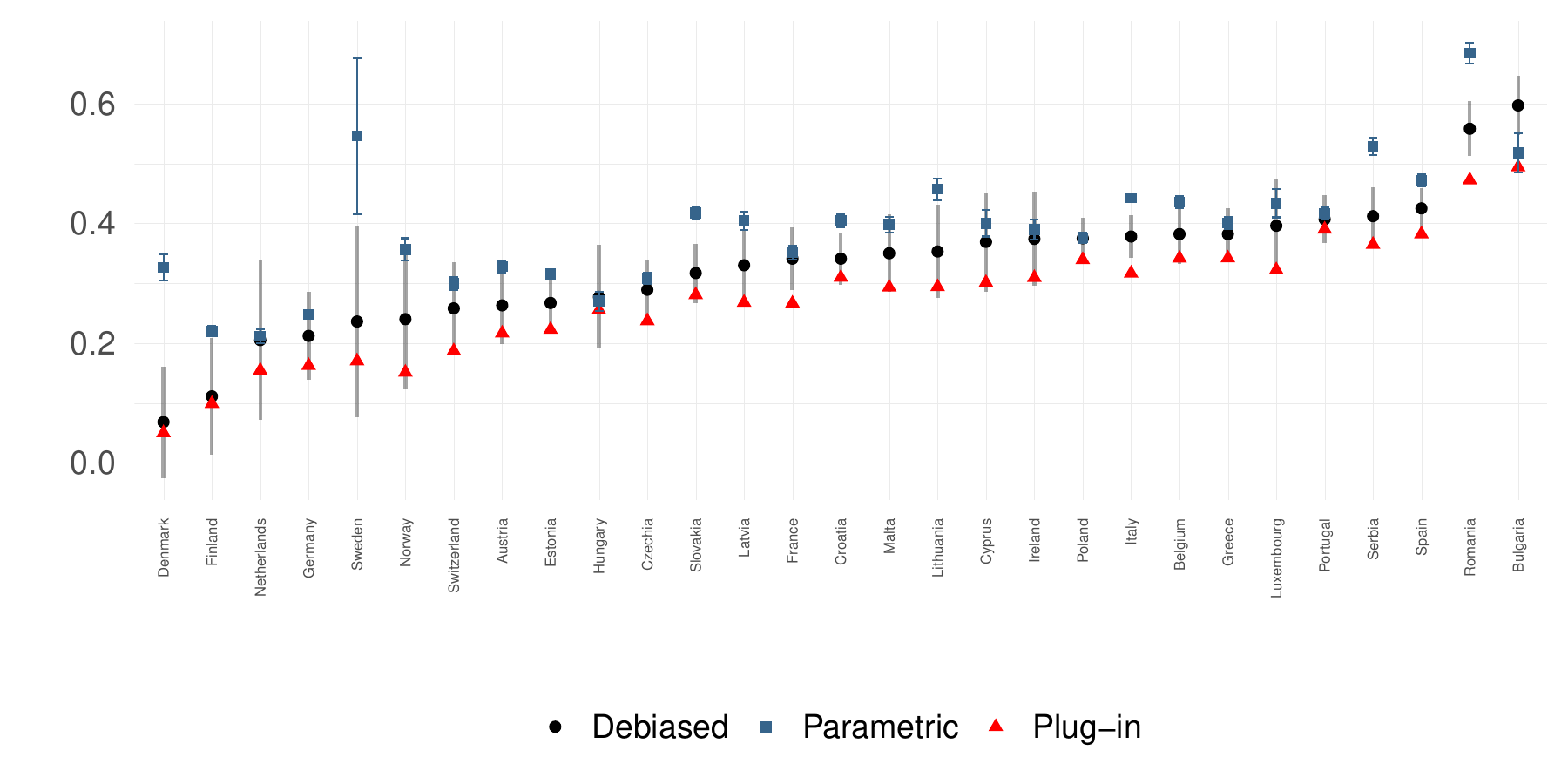}
    \caption{IOp with CIF}
    \label{fig:iop}
  \end{figure}

Relative IOp varies substantially across countries. Nordic countries, the Netherlands, and Germany are at the lower end, while Southern countries tend to display higher IOp. Eastern Europe is heterogeneous: Czechia and Hungary are relatively low, whereas Romania and Bulgaria are very high. Detailed estimates are reported in Table 4.

\begin{table}[h]
\centering
\caption{\label{tab:emp_cif}\footnotesize Results for CIF. Hyperparameter tuning grids: Number of Trees: \{300, 500, 800\}, CIF Depth: \{1, 3, 6\}, Number of variables to split (mtry): \{5, 6, 7, 8, 9\}}
\centering
\fontsize{10}{12}\selectfont
\begin{tabular}[t]{cccccc}
\toprule
Country & Mean & Gini & Plug in/Gini & Debiased/Gini & n\\
\midrule
Austria & 29783 & 0.268 & 22 \% & 26 \%  (22\%,30\%) & 5150\\
Belgium & 28423 & 0.237 & 34 \% & 38 \%  (35\%,41\%) & 6368\\
Bulgaria & 6603 & 0.415 & 49 \% & 60 \%  (57\%,63\%) & 5906\\
Switzerland & 52732 & 0.273 & 19 \% & 26 \%  (21\%,31\%) & 4715\\
Cyprus & 20281 & 0.299 & 30 \% & 37 \%  (32\%,42\%) & 4647\\
Czechia & 12200 & 0.232 & 24 \% & 29 \%  (26\%,32\%) & 6881\\
Germany & 28799 & 0.271 & 16 \% & 21 \%  (17\%,26\%) & 5981\\
Denmark & 36736 & 0.260 & 5 \% & 7 \%  (1\%,13\%) & 2076\\
Estonia & 14307 & 0.283 & 22 \% & 27 \%  (24\%,30\%) & 5669\\
Greece & 9915 & 0.310 & 34 \% & 38 \%  (35\%,41\%) & 14144\\
Spain & 17707 & 0.327 & 38 \% & 42 \%  (40\%,45\%) & 16975\\
Finland & 29297 & 0.263 & 10 \% & 11 \%  (5\%,17\%) & 4403\\
France & 27001 & 0.274 & 27 \% & 34 \%  (31\%,37\%) & 7924\\
Croatia & 8916 & 0.281 & 31 \% & 34 \%  (31\%,37\%) & 7120\\
Hungary & 6925 & 0.282 & 26 \% & 28 \%  (22\%,33\%) & 4568\\
Ireland & 32224 & 0.278 & 31 \% & 37 \%  (32\%,42\%) & 3660\\
Italy & 20006 & 0.319 & 32 \% & 38 \%  (36\%,40\%) & 16360\\
Lithuania & 10383 & 0.347 & 29 \% & 35 \%  (30\%,40\%) & 3643\\
Luxembourg & 45435 & 0.329 & 32 \% & 40 \%  (35\%,44\%) & 3533\\
Latvia & 10787 & 0.336 & 27 \% & 33 \%  (29\%,37\%) & 3344\\
Malta & 19263 & 0.266 & 29 \% & 35 \%  (31\%,39\%) & 3632\\
Netherlands & 30629 & 0.259 & 15 \% & 20 \%  (12\%,29\%) & 4446\\
Norway & 41869 & 0.245 & 15 \% & 24 \%  (17\%,31\%) & 2366\\
Poland & 8621 & 0.292 & 34 \% & 38 \%  (35\%,40\%) & 14293\\
Portugal & 12030 & 0.305 & 39 \% & 41 \%  (38\%,43\%) & 13783\\
Romania & 4826 & 0.343 & 47 \% & 56 \%  (53\%,59\%) & 5932\\
Serbia & 3984 & 0.329 & 36 \% & 41 \%  (38\%,44\%) & 5648\\
Sweden & 29282 & 0.300 & 17 \% & 24 \%  (14\%,34\%) & 2027\\
Slovakia & 9197 & 0.221 & 28 \% & 32 \%  (28\%,35\%) & 5727\\
\bottomrule
\end{tabular}
\end{table}

Interpreting the direction and size of the debiasing correction is not straightforward. Although the plug-in estimator systematically underestimates IOp with CIF, Online Appendix \ref{app_emp} shows that this depends on the ML method. A useful heuristic is to decompose the difference between the debiased and plug-in estimators into three terms.

\begin{align*}
    \hat{\theta} - \hat{\theta}^P &= \frac{\sum_{l=1}^L \sum_{(i,j) \in I_l} (sgn(\hat{\gamma}_l(X_i) - \hat{\gamma}_l(X_j)) - sgn(\hat{\gamma}(X_i) - \hat{\gamma}(X_j)))(Y_i - Y_j)}{\sum_{i<j} Y_i + Y_j} \\
    &+ \frac{\sum_{i<j} sgn(\hat{\gamma}(X_i) - \hat{\gamma}(X_j))(Y_i - \hat{\gamma}(X_i) - Y_j + \hat{\gamma}(X_j))}{\sum_{i<j} Y_i + Y_j} \\
    &+ \sum_{i<j} |\hat{\gamma}(X_i) - \hat{\gamma}(X_j)| \left(\frac{1}{\sum_{i<j} Y_i + Y_j} -  \frac{1}{\sum_{i<j} \hat{\gamma}(X_i) + \hat{\gamma}(X_j)}\right).
\end{align*}

The first term captures the overfitting removed by cross-fitting. The second corrects the numerator holding the denominator fixed: it can be read as the covariance between prediction orderings and prediction errors---positive when high/low predictions are systematically under/overestimated, as in income data---or as a regularization-bias correction, since regularization compresses prediction differences relative to income differences. Thus, it is expected to be positive, though it may turn negative if prediction orderings are systematically wrong. The third term replaces predicted incomes with actual incomes in the denominator. Figure \ref{fig:bias_decomp} reports these terms divided by the Gini of income; white circles show their sum.

\begin{figure}[H]
    \centering
    \includegraphics[width=1\textwidth]{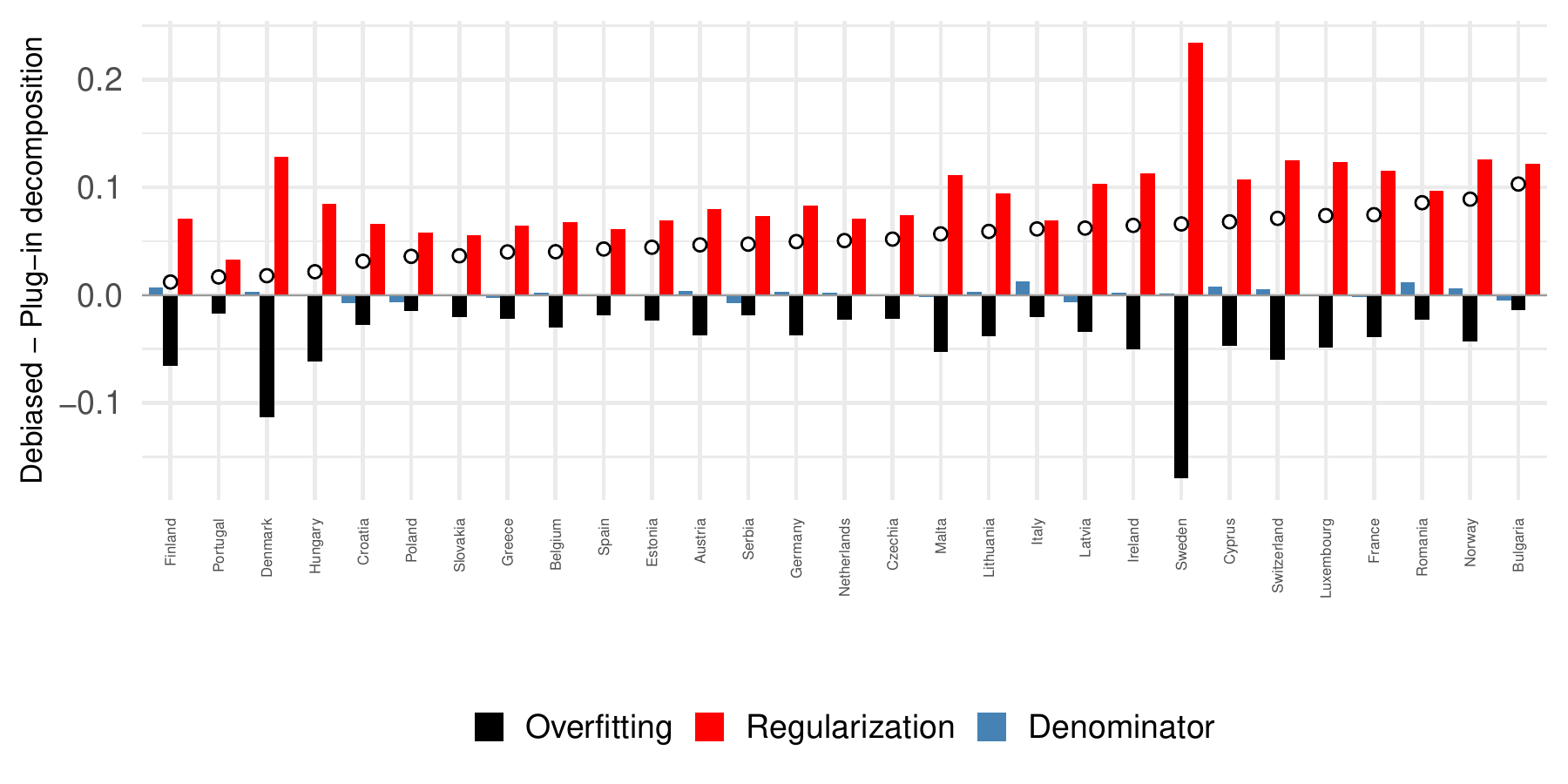}
    \caption{Debiased minus plug-in decomposition for CIF}
    \label{fig:bias_decomp}
  \end{figure}

The denominator component is generally negligible, the regularization component is positive in all countries, and the overfitting component is negative. Since the regularization component dominates, the plug-in estimator underestimates IOp throughout.

Online Appendix M shows that the regularization and overfitting components have similar signs across ML methods, although the total correction may change sign depending on their relative magnitude. It also shows that overfitting is stronger in smaller samples, which helps explain the large overfitting corrections in Scandinavian countries. Hence, the correction reflects regularization, overfitting, ML choice, and sampling noise. 

Figure \ref{fig:iop_spread} shows that debiased estimates are much less sensitive to the first-step ML choice. Across Lasso, Ridge, RF, CIF, XGBoost, and CatBoost, debiased estimates are tightly clustered, whereas plug-in estimates are substantially more dispersed, with cross-ML differences sometimes close to 40 percentage points. This sensitivity highlights the practical importance of orthogonalization for ML-based IOp estimation.

\begin{figure}[h!]
    \centering
    \includegraphics[width=1\textwidth]{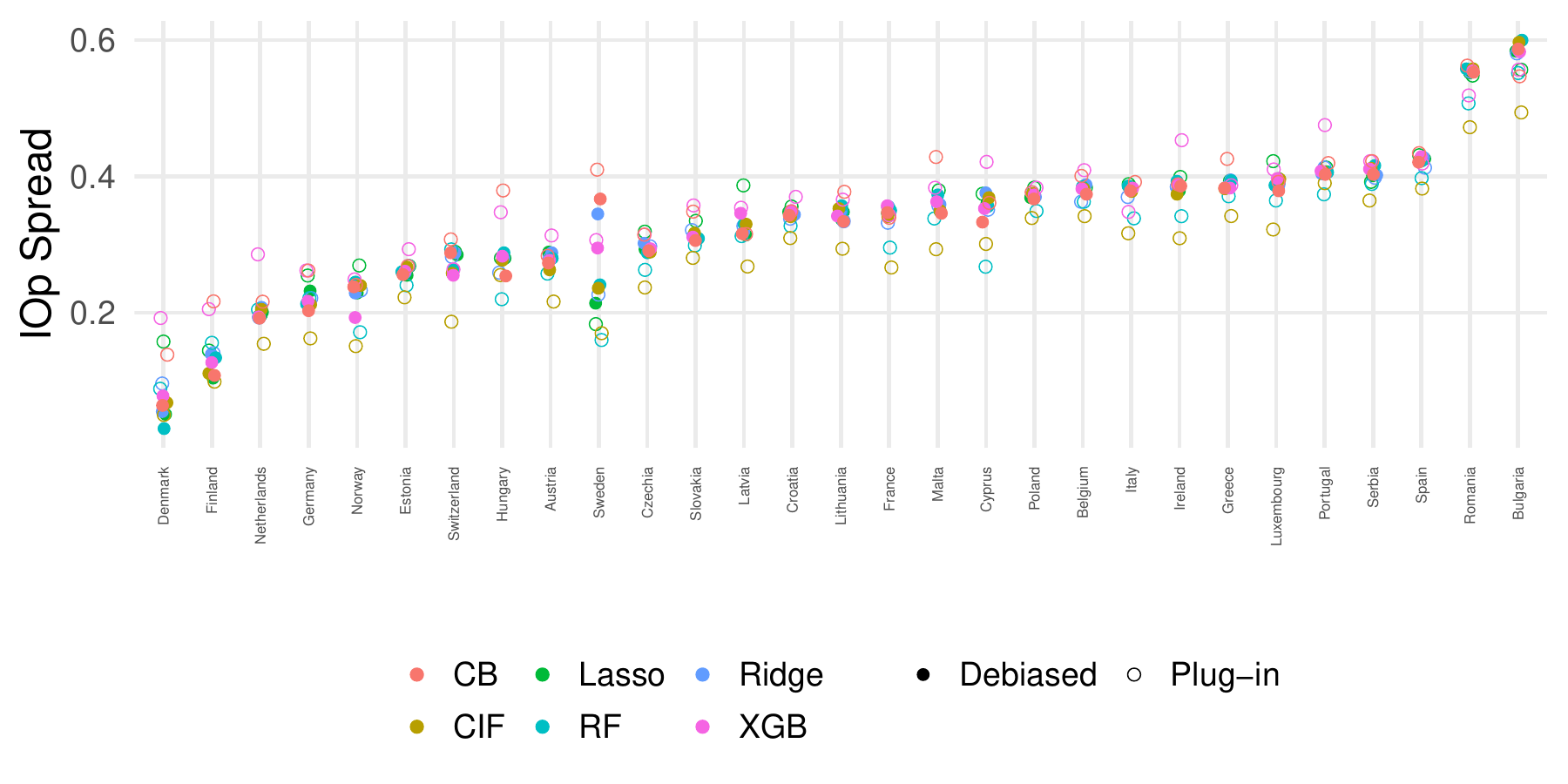}
    \caption{Sensitivity of Plug-in and Debiased to choice of ML in the first-step.}
    \label{fig:iop_spread}
  \end{figure}

\section{Appendices}
\label{Appendix}

\addcontentsline{toc}{section}{Appendices} \renewcommand{\thesubsection}{\Alph{subsection}}

\subsection{Practitioner's Guide to IOp}
\label{app_practitioner_iop}

We provide the R package \texttt{ineqopp}, available at \url{https://github.com/joelters/ineqopp}, to implement the proposed IOp estimator. The package documentation contains examples and details on the available functions.

Valid inference is provided only for \texttt{est\_method = TRUE} and \texttt{CFit = TRUE}, corresponding to the cross-fitted debiased estimator. Other options are implemented only for comparison with plug-in or non-cross-fitted alternatives. Conditional on this choice, practitioners must select the first-step ML method, tune its hyperparameters, and decide whether to report relative IOp. 

\begin{enumerate}
    \item \textbf{ML method.} We recommend a flexible estimator of the conditional mean. The package implements lasso, ridge, random forests, conditional inference forests, XGBoost, CatBoost, and neural networks.
    \item \textbf{Hyperparameter tuning.} Tuning parameters should be selected carefully, for example by cross-validation using \texttt{MLtuning} in the companion \texttt{ML} package, available at \url{https://github.com/joelters/ML}. The same procedure can be used to compare first-step ML methods.
    \item One can decide to get relative IOp as the one reported in the empirical application (the Gini of the predictions over the Gini of $Y$) instead of just the Gini of the predictions.
\end{enumerate}

\subsection{Proofs of General Results}
\label{sec_app_pfs_general_results}

\subsubsection{General representation of adjustment terms}

\label{app_pfsgeneral}

Without loss of generality, we prove all results for the case in which $M = 1$. The general case follows by applying the same arguments for each $m$ in $\{1,...,M\}$.

\begin{proof}[Proof of Lemma \ref{lemma_S_closed_linear_set}]
    Linearity follows directly from the definition of $\mathcal{S}$ and linearity of $\Gamma$. For closedness, take a sequence $s_k \to s$, where $s_k \in \mathcal{S}$ for all $k = 1,2,...$ and $s \in L_2(P_X \times P_X)$. Note that $\Gamma$ is complete since it is a closed subset of the complete space $L_2(P_X)$. Let $\Gamma^0 = \{h \in \Gamma: \mathbb{E}[h(X_i)] = 0\}$. By definition of $\mathcal{S}$, there are sequences $f_k$ and $g_k$ in $\Gamma$ such that $s_k(X_i,X_j) = f_k(X_i) + g_k(X_j)$. Let $f_k^0(x) = f_k(x) - \mathbb{E}[f_k(X_i)]$, $g_k^0(x) = g_k(x) - \mathbb{E}[g_k(X_j)]$ and $c_k =  \mathbb{E}[f_k(X_i)] + \mathbb{E}[g_k(X_j)]$. Then, $s_k(X_i,X_j) = f_k^0(X_i) + g_k^0(X_j) + c_k$ with $f_k^0$ and $g_k^0$ in $\Gamma^0$ and $c_k \in \mathbb{R}$. Note that $T(h) = \mathbb{E}[h(X_i)]$ is continuous since $|T(h) - T(m)| \leq ||h - m||_{L_2(P_X)}$ and hence $\Gamma^0 = \Gamma \cap T^{-1}(\{0\})$ where $T^{-1}(\{0\})$ is closed since $\{0\}$ is closed in the usual topology and $T$ is continuous. Hence, $\Gamma^0$ is a closed subset of $\Gamma$ and hence complete. By independence of $X_i$ and $X_j$ and by the elements of $\Gamma^0$ centered, for $k,m = 1,2,...$  
    \[
    \|s_k - s_m\|_{L_2(P_X \times P_X)}^2 = \|f_k^0 - f_m^0\|^2_{L_2(P_X)} + \|g_k^0 - g_m^0\|^2_{L_2(P_X)} + |c_k - c_m|^2.
    \]
    Since convergent sequences are Cauchy sequences, the left hand side goes to $0$ as $k,m \to \infty$. Since all terms in the right hand side are non-negative, they all have to go to zero. Hence, $\{f_k^0\}_{k=1}^\infty$, $\{g_k^0\}_{k=1}^\infty$ and $\{c_k\}_{k=1}^\infty$ are Cauchy sequences. Since $\Gamma^0$ and $\mathbb{R}$ are complete spaces, $s_k(X_i,X_j) \to f^0(X_i) + g^0(X_j) + c$ for some $f^0$ and $g^0$ in $\Gamma^0$ and $c \in \mathbb{R}$. In fact, by uniqueness of limits in metric/normed spaces, $s(X_i,X_j) = (f^0(X_i) + c) + g^0(X_j) \in \mathcal{S}$ since $\Gamma$ is a linear space including the constants.
\end{proof}

\begin{proof}
[Proof of Lemma \ref{qorth2}] Let $\mathcal{S}$ denote $L_2(P_X\times P_X)$ in case (i) and $\mathcal{S}=\Gamma+\Gamma$ in case (ii). Set
$\alpha_{0}(X_{i},X_{j})=\Pi_{\mathcal{S}}(\delta(X_{i},X_{j},\gamma_{0}))$. Since $\alpha_{0}(\cdot,x)\in\Gamma$ and
$\alpha_{0}(x,\cdot)\in\Gamma$ for all $x\in\mathcal{X}$ (by the last display in this proof and Assumption \ref{Ass_S}) it holds by iterated expectations and Assumption \ref{Ass_linearization},% 
\[
\mathbb{E}_{F_{\tau}}[\alpha_{0}(X_{i},X_{j})(c_{1}(Y_{i}-\gamma_{\tau}%
(X_{i}))+c_{2}(Y_{j}-\gamma_{\tau}(X_{j})))]=0.
\]
Thus, by Assumption \ref{Ass_linearization}, the chain rule and $\alpha_{0}\in\mathcal{S}$,%
\begin{align*}
\frac{d}{d\tau}\mathbb{E}[g(W_{i},W_{j},\gamma(F_{\tau}),\theta)]  &
=\frac{d}{d\tau}\mathbb{E}[\delta(X_{i},X_{j},\gamma_{0})(c_{1}\gamma_{\tau
}(X_{i})+c_{2}\gamma_{\tau}(X_{j}))]\\
&  =\frac{d}{d\tau}\mathbb{E}[\alpha_{0}(X_{i},X_{j})(c_{1}\gamma_{\tau}%
(X_{i})+c_{2}\gamma_{\tau}(X_{j}))]\\
&  =\frac{d}{d\tau}\mathbb{E}_{F_{\tau}}[\alpha_{0}(X_{i},X_{j})(c_{1}%
Y_{i}+c_{2}Y_{j}-c_{1}\gamma_{0}(X_{i})-c_{2}\gamma_{0}(X_{j}))]\\
&  =\int\int\phi(w_{i},w_{j},\gamma_{0},\alpha_{0},\theta)K_{H}\left(
dw_{i},dw_{j}\right)
\end{align*}
with $\phi(w_{i},w_{j},\gamma_{0},\alpha_{0},\theta)=\alpha_{0}(x_{i}%
,x_{j})(c_{1}y_{i}+c_{2}y_{j}-c_{1}\gamma_{0}(x_{i})-c_{2}\gamma_{0}(x_{j}))$. To complete the proof for case (ii), using the short notation $\delta_{ij}(\gamma,\theta)\equiv\delta(X_{i}%
,X_{j},\gamma,\theta),$ and noting that by iterated expectations and independence, we
can write
\begin{align*}
\delta_{ij}(\gamma,\theta)  &  =\mathbb{E}[\delta_{ij}(\gamma,\theta)|X_{i}]+\mathbb{E}%
[\delta_{ij}(\gamma,\theta)|X_{j}]-\mathbb{E}[\delta_{ij}(\gamma,\theta)]+U_{ij},\\
&  \equiv\tilde{\delta}_{ij}(\gamma,\theta)+U_{ij},
\end{align*}
where $\mathbb{E}[U_{ij}|X_{i}]=\mathbb{E}[U_{ij}|X_{j}]=0$. This expansion
implies that only $\tilde{\delta}_{ij}$ matters for the derivative in
(\ref{linearization}), so $\delta_{ij}$ could be replaced everywhere by
$\tilde{\delta}_{ij}$. Thus, substituting $\delta_{ij}$ from the last display
in the expression for $\alpha_{0}$ we obtain
\begin{align*}
\alpha_{0}(X_{i},X_{j},\theta)  &  =\Pi_{\mathcal{S}}(\delta(X_{i},X_{j},\gamma
_{0},\theta))\\
&  =\Pi_{\mathcal{S}}\mathbb{E}[\delta_{ij}(\gamma_{0},\theta)|X_{i}]+\Pi
_{\mathcal{S}}\mathbb{E}[\delta_{ij}(\gamma_{0},\theta)|X_{j}]-\mathbb{E}[\delta
_{ij}(\gamma_0,\theta)]\\
&  =\Pi_{\Gamma}\mathbb{E}[\delta_{ij}(\gamma_{0},\theta)|X_{i}]+\Pi_{\Gamma
}\mathbb{E}[\delta_{ij}(\gamma_{0},\theta)|X_{j}]-\mathbb{E}[\delta_{ij}(\gamma_0,\theta)],
\end{align*}
where last expression uses that $\mathcal{S}=\Gamma+\Gamma.$
\end{proof}

\subsubsection{Asymptotic theory proofs}
\label{app_asymptotic_theory_proofs}
%%% THIS IS THE PROOF OF CONSISTENCY. WAS THERE A TYPO IN THE DEFINITION OF THE INTERACTION TERM?    

    We use the following short-hand notation: $\bar g(\theta) = \mathbb{E}[g(W_i,W_j,\gamma_0,\theta)]$ and 
\begin{align*}
\hat g(\theta) &= \binom{n}{2}^{-1} \sum_{l=1}^L \sum_{(i,j) \in I_l} g(W_i,W_j,\hat{\gamma}_l, \theta), \quad \hat \phi(\theta) = \binom{n}{2}^{-1} \sum_{l=1}^L \sum_{(i,j) \in I_l} \phi(W_i,W_j,\hat{\gamma}_l, \hat{\alpha}_l(\cdot, \theta),\theta), \\
\tilde g(\theta) &= \binom{n}{2}^{-1} \sum_{l=1}^L \sum_{(i,j) \in I_l} g(W_i,W_j,\gamma_0, \theta), \quad \tilde \phi(\theta) = \binom{n}{2}^{-1} \sum_{l=1}^L \sum_{(i,j) \in I_l} \phi(W_i,W_j,\gamma_0, \alpha_0(\cdot, \theta),\theta),
\end{align*}
so that $\hat{\psi}(\theta) = \hat g(\theta) + \hat \phi(\theta)$ and $\tilde{\psi}(\theta) = \tilde g(\theta) + \tilde \phi(\theta)$. DCT and LLN refer to the Dominated Convergence Theorem and the Law of Large Numbers, respectively.

  \begin{proof}[Proof of Theorem \ref{Thm_AT_Consist}] 
    Condition (ii) is Assumption 1 in \cite{newey1991uniform}. Condition (i) implies pointwise convergence of $\hat \psi(\theta)$ to $\bar{g}(\theta)$ (Assumption 2 in \cite{newey1991uniform}) by noting
    \begin{align*}
    \phi(W_i,W_j,\hat{\gamma}_l,\hat{\alpha}_l(\cdot, \theta), \theta) &= \phi(W_i,W_j,\hat{\gamma}_l,\alpha_0(\cdot, \theta),\theta) - \phi(W_i,W_j,\gamma_0,\alpha_0(\cdot, \theta),\theta) \\
    &+ \phi(W_i,W_j,\gamma_0,\hat \alpha_l(\cdot, \theta),\theta) + \hat{\xi}_l(w_i,w_j,\theta),
    \end{align*}
    global robustness of $\alpha$, triangle inequality, conditional Markov and DCT. Conditions (iii) and (iv) imply Assumption 3A in \cite{newey1991uniform}. Then, since conditions (iii) and (iv) imply uniform continuity of $\bar g$ in $\theta$ and $\Theta$ is compact, by the Heine-Cantor Theorem and $\bar g$ not depending on $n$, by Corollary 2.2 in \cite{newey1991uniform}, $\sup_{\theta \in \Theta} |\hat{\psi}(\theta) - \bar{g}(\theta)| = o_p(1)$. By $\bar g$  continuous, \(\Theta\) compact and $\bar g(\theta)=0$ if and only if $\theta=\theta_0$, for every $\varepsilon>0$,
    \[
    \inf_{\theta\in\Theta:\|\theta-\theta_0\|\ge \varepsilon}
    |\bar g(\theta)|>0.
    \]
    Since \(\hat\theta\) satisfies \(\hat\psi(\hat\theta)=o_p(1)\), Theorem 5.9 in \cite{van2000asymptotic} yields \(\hat\theta\to_p\theta_0\).
    \end{proof}

\begin{lemma}
\label{Lemma_AT_Jacobianhat} If Assumption \ref{Ass_AT_Jacobianhat} is
satisfied and if $\bar{\theta} \rightarrow_{p} \theta_{0}$ then $\partial
\hat{\psi}(\bar{\theta})/\partial\theta\to_{p} B$.
\end{lemma}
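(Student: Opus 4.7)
The plan is to decompose $\partial\hat{\psi}(\bar{\theta})/\partial\theta - B$ into three pieces and control each separately. Writing $\psi_{ij} \equiv \psi(W_i,W_j,\gamma_0,\alpha_0,\theta_0)$ and $\tilde\psi_{ij}(\theta) \equiv \psi(W_i,W_j,\hat\gamma_l,\hat\alpha_l,\theta)$ for the fold $l$ containing the pair $(i,j)$, I would write
\begin{equation*}
\frac{\partial\hat{\psi}(\bar{\theta})}{\partial\theta} - B = T_1 + T_2 + T_3,
\end{equation*}
where $T_1 = \binom{n}{2}^{-1}\sum_{i<j}[\partial\tilde\psi_{ij}(\bar\theta)/\partial\theta - \partial\tilde\psi_{ij}(\theta_0)/\partial\theta]$, $T_2 = \binom{n}{2}^{-1}\sum_{i<j}[\partial\tilde\psi_{ij}(\theta_0)/\partial\theta - \partial\psi_{ij}/\partial\theta]$, and $T_3 = \binom{n}{2}^{-1}\sum_{i<j}\partial\psi_{ij}/\partial\theta - B$. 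Each term is handled by a different piece of Assumption \ref{Ass_AT_Jacobianhat} together with one standard ingredient.

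For $T_1$, applying the H\"older bound in Assumption \ref{Ass_AT_Jacobianhat}(ii) pairwise yields $|T_1| \leq |\bar{\theta}-\theta_0|^{1/C}\cdot\binom{n}{2}^{-1}\sum_l\sum_{(i,j)\in I_l} d(W_i,W_j,\hat\gamma_l,\hat\alpha_l)$. By Assumption \ref{Ass_AT_Jacobianhat}(i) (and the $L_2$ consistency of $\hat\alpha_l$ already invoked in the preceding lemmas), the event that $\|\hat\gamma_l-\gamma_0\|$ and $\|\hat\alpha_l-\alpha_0\|$ are small enough for (ii) to apply has probability approaching one. By the cross-fitting construction, on that event and conditionally on the out-of-fold sample, every pair with $(i,j)\in I_l$ is distributed as $F_0\otimes F_0$ independently of $(\hat\gamma_l,\hat\alpha_l)$, so the conditional expectation of each $d(W_i,W_j,\hat\gamma_l,\hat\alpha_l)$ is bounded by $C$. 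Markov's inequality then shows the inner average is $O_p(1)$, and combined with $\bar{\theta}\to_p\theta_0$ one obtains $T_1\to_p 0$. For $T_2$, the triangle inequality applied componentwise gives $|T_{2,k}| \leq \binom{n}{2}^{-1}\sum_{i<j}|\partial\tilde\psi_{ij}(\theta_0)/\partial\theta_k - \partial\psi_{ij}/\partial\theta_k|$, and Assumption \ref{Ass_AT_Jacobianhat}(iii) together with Markov deliver $T_2\to_p 0$. Finally, $T_3\to_p 0$ by the classical law of large numbers for U-statistics, whose first-moment condition $\mathbb{E}|\partial\psi_{ij}/\partial\theta|<\infty$ is implicit in the existence of $B = \partial\mathbb{E}[\psi]/\partial\theta$ via dominated convergence.

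The main obstacle is $T_1$: to transfer the pointwise integrability bound $\mathbb{E}[d(W_i,W_j,\gamma,\alpha)]\leq C$, which is valid only for each \emph{fixed} $(\gamma,\alpha)$ near $(\gamma_0,\alpha_0)$, to the sample average evaluated at the \emph{random} argument $(\hat\gamma_l,\hat\alpha_l)$, one must exploit cross-fitting independence to pass to a conditional expectation and restrict attention to the high-probability event where Assumption \ref{Ass_AT_Jacobianhat}(ii) actually bites. The H\"older exponent $1/C$ in (ii) is what allows mere consistency of $\bar\theta$, with no rate, to suffice for $T_1\to_p 0$.
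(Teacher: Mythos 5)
Your proof is correct and takes essentially the same route as the paper: the three-term decomposition $T_1+T_2+T_3$ is exactly the paper's $(\hat B-\tilde B)+(\tilde B-\binom{n}{2}^{-1}\sum_{i<j}\partial\psi_{ij}/\partial\theta)+(\binom{n}{2}^{-1}\sum_{i<j}\partial\psi_{ij}/\partial\theta - B)$, with $T_1$ controlled by Assumption \ref{Ass_AT_Jacobianhat}(ii) plus Markov, $T_2$ by (iii) plus Markov, and $T_3$ by the U-statistic LLN. Your explicit use of the cross-fitting conditioning to justify $\mathbb{E}[d(W_i,W_j,\hat\gamma_l,\hat\alpha_l)\mid N_l^c]\leq C$ and your observation that $\|\hat\alpha_l-\alpha_0\|\to_p 0$ is also needed for (ii) to bite are in fact slightly more careful than the paper's own writeup, which treats $(\hat\gamma,\hat\alpha)$ as fixed when invoking the bound on $\mathbb{E}[d]$ and lists only $\hat\gamma$-consistency in part (i) of the assumption.
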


    \begin{proof}
    [Proof of Lemma \ref{Lemma_AT_Jacobianhat}]
    We start with the following decomposition
    \begin{align*}
        \left| \frac{\partial \hat{\psi}(\bar \theta)}{\partial \theta} - B \right| \leq \left| \frac{\partial \hat{\psi}(\bar \theta)}{\partial \theta} - 
        \frac{\partial \hat{\psi}(\theta_0)}{\partial \theta} \right| + \left| \frac{\partial \hat{\psi}(\theta_0)}{\partial \theta} - 
        \frac{\partial \tilde{\psi}(\theta_0)}{\partial \theta} \right| + 
        \left| \frac{\partial \tilde{\psi}(\theta_0)}{\partial \theta} - B \right|.
    \end{align*}
    For the first term, by Assumption \ref{Ass_AT_Jacobianhat} (iii)-(iv)
    \begin{align*}
        \left| \frac{\partial \hat{\psi}(\bar \theta)}{\partial \theta} - 
        \frac{\partial \hat{\psi}(\theta_0)}{\partial \theta} \right| \leq 
        |\bar \theta - \theta_0|^a \binom{n}{2}^{-1} \sum_{l=1}^L \sum_{(i,j) \in I_l} d(W_i,W_j,\hat \gamma_l, \hat \alpha_l(\cdot, \cdot)) = o_p(1) O_p(1) = o_p(1).
    \end{align*}
    For the second term, 
    \begin{align*}
        &\left| \frac{\partial \hat{\psi}(\theta_0)}{\partial \theta} - 
        \frac{\partial \tilde{\psi}(\theta_0)}{\partial \theta} \right| \leq \left| \frac{\partial \hat{g}(\theta_0)}{\partial \theta} - 
        \frac{\partial \tilde{g}(\theta_0)}{\partial \theta} \right| + \left| \frac{\partial \binom{n}{2}^{-1} \sum_{l=1}^L \sum_{(i,j) \in I_l}\phi(W_i,W_j,\hat{\gamma}_l, \alpha_0, \theta_0)}{\partial \theta} - 
        \frac{\partial \tilde{\phi}(\theta_0)}{\partial \theta} \right| \\
        &+ \left| \frac{\partial \binom{n}{2}^{-1} \sum_{l=1}^L \sum_{(i,j) \in I_l} \phi(W_i,W_j,\gamma_0,\hat{\alpha}_l(\cdot,\theta_0),\theta_0)}{\partial \theta} - 
        \frac{\partial \tilde{\phi}(\theta_0)}{\partial \theta} \right| + \left| \frac{\partial \binom{n}{2}^{-1} \sum_{l=1}^L \hat{\xi}_l(W_i,W_j,\theta_0)}{\partial \theta} \right|,
    \end{align*}
   the first two terms and the last are handled by Assumption \ref{Ass_AT_Jacobianhat} (v) and conditional Markov. The remaining term by global robustness of $\alpha$ together with DCT (by Assumption \ref{Ass_AT_Jacobianhat} (vi)). The last term of the first display goes to zero in probability by the  LLN for U-statistics and the DCT which holds thanks to Assumption \ref{Ass_AT_Jacobianhat} (vi).
    \end{proof}
\bigbreak
\noindent Next, we will show that under our assumptions, 
\begin{equation}
\sqrt{n}\binom{n}{2}^{-1}\sum_{l=1}^{L}\sum_{(i,j)\in I_{l}}\psi(W_{i}%
,W_{j},\hat{\gamma}_{l},\hat{\alpha}_{l},\theta_{0})=\sqrt{n}\binom{n}{2}%
^{-1}\sum_{i<j}\psi(W_{i},W_{j},\gamma_{0},\alpha_{0},\theta_{0})+o_{p}(1).
\label{Eq_AT_equivasymp}%
\end{equation}
This means that the first-order asymptotic properties of the debiased
$\hat{\theta}$ are not affected by the estimation of first-steps. 

\begin{lemma}
\label{Lemma_AT_equiv} If Assumptions \ref{Ass_AT_equiv}%
-\ref{Ass_AT_GRsmallbias} are satisfied then equation (\ref{Eq_AT_equivasymp}) holds.
\end{lemma}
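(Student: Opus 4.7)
The plan is to decompose the difference $\psi(W_i,W_j,\hat\gamma_l,\hat\alpha_l,\theta_0)-\psi(W_i,W_j,\gamma_0,\alpha_0,\theta_0)$ into three pieces that isolate the effect of $\hat\gamma_l$ alone, of $\hat\alpha_l$ alone, and the interaction. Using the definition of $\hat\xi_l$,
\[
\psi(\hat\gamma_l,\hat\alpha_l)-\psi(\gamma_0,\alpha_0)=A_l+B_l+\hat\xi_l,
\]
where $A_l\equiv\psi(\cdot,\hat\gamma_l,\alpha_0,\theta_0)-\psi(\cdot,\gamma_0,\alpha_0,\theta_0)$ and $B_l\equiv\phi(\cdot,\gamma_0,\hat\alpha_l,\theta_0)-\phi(\cdot,\gamma_0,\alpha_0,\theta_0)$. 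It suffices to show that, after multiplication by $\sqrt{n}\binom{n}{2}^{-1}\sum_{(i,j)\in I_l}$ and summation over the fixed number $L$ of folds, each of these contributions is $o_p(1)$.

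For a generic kernel $h_l$ (to be taken as $A_l$, $B_l$ or $\hat\xi_l$) I would use the conditional centering split
\[
\binom{n}{2}^{-1}\sum_{(i,j)\in I_l}h_l(W_i,W_j)=\binom{n}{2}^{-1}\sum_{(i,j)\in I_l}\bigl\{h_l(W_i,W_j)-\bar h_l\bigr\}+\tfrac{|I_l|}{\binom{n}{2}}\bar h_l,\qquad \bar h_l\equiv\int\!\!\int h_l\,F_0\!\otimes\!F_0.
\]
By cross-fitting, $(\hat\gamma_l,\hat\alpha_l)$ is independent of the $W$'s indexing $I_l$, so conditional on the training sample the kernel is nonrandom and the centered sum is a (possibly two-sample) U-statistic with mean zero. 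A standard U-statistic variance bound yields, regardless of whether $I_l$ collects within-fold or between-fold pairs, a conditional variance of order $\bar h_l^{(2)}/n$ with $\bar h_l^{(2)}\equiv\int\!\!\int |h_l|^2 F_0\!\otimes\!F_0$; multiplying by $\sqrt n$ gives $O_p\bigl(\sqrt{\bar h_l^{(2)}}\bigr)$, so the centered piece is $o_p(1)$ provided $\bar h_l^{(2)}\to_p 0$. For $A_l$ this follows by combining Assumption \ref{Ass_AT_equiv}(i)--(ii); for $B_l$ by Assumption \ref{Ass_AT_equiv}(iii); for $\hat\xi_l$ by the $L_2$ clause of Assumption \ref{Ass_AT_interaction}(i), while (ii) or (iii) handle the whole sum directly via the triangle inequality.

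For the non-centered piece $\sqrt n\,\tfrac{|I_l|}{\binom{n}{2}}\bar h_l$, I would use the three structural properties of $\psi$. For $A_l$, $\bar A_l=\bar\psi(\hat\gamma_l,\alpha_0,\theta_0)-\bar\psi(\gamma_0,\alpha_0,\theta_0)=\bar\psi(\hat\gamma_l,\alpha_0,\theta_0)$ by the identifying property at $\theta_0$; hence $\sqrt n\,\bar A_l=o_p(1)$ directly under Assumption \ref{Ass_AT_GRsmallbias}(iii), and under (ii) via $\sqrt n\,|\bar A_l|\leq C\sqrt n\|\hat\gamma_l-\gamma_0\|^2=o_p(\sqrt n\cdot n^{-1/2})=o_p(1)$. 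For $B_l$, the global robustness of $\phi$ with respect to $\alpha$ in Assumption \ref{Ass_AT_GRsmallbias}(i) together with $\int\!\!\int \phi(\gamma_0,\alpha_0)\,F_0\!\otimes\!F_0=0$ (equation (\ref{1}) at $\tau=0$) yields $\bar B_l=0$. For $\hat\xi_l$, whichever option of Assumption \ref{Ass_AT_interaction} is in force provides the corresponding bound on the non-centered part. Slutsky and summation over the $L=K(K+1)/2$ folds finish the argument.

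The main obstacle is the U-statistic variance bookkeeping in the between-fold case, where $I_l=\{(i,j):i\in C_k,\,j\in C_{k'}\}$ is not a one-sample U-statistic sum. One has to invoke the two-sample U-statistic variance decomposition and track how the weights $|I_l|/\binom{n}{2}=O(1)$ combine with the $1/\min(|C_k|,|C_{k'}|)$ variance rate to still give the required $\bar h_l^{(2)}/n$ bound; once this routine calculation is in place the rest of the argument is a matter of applying the three assumptions term by term.
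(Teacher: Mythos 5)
Your proposal is correct and follows essentially the same route as the paper: the decomposition $A_l+B_l+\hat\xi_l$ is algebraically identical to the paper's $\hat R_{1,ij,l}+\hat R_{2,ij,l}+\hat R_{3,ij,l}+\hat\xi_l$ (with $A_l=\hat R_1+\hat R_2$, $B_l=\hat R_3$), and both treat each piece by a conditional-on-training-fold U-statistic variance bound for the centered part plus Assumptions \ref{Ass_AT_GRsmallbias}(i)--(iii) and \ref{Ass_AT_interaction} for the non-centered part.
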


\begin{proof}
    [Proof of Lemma \ref{Lemma_AT_equiv}]Define%
    \begin{align*}
    \hat{R}_{1,ij,l}  &  =g(W_{i},W_{j},\hat{\gamma}_{l},\theta_{0})-g(W_{i}%
    ,W_{j},\gamma_{0},\theta_{0}),\quad\hat{R}_{2,ij,l}=\phi(W_{i},W_{j}%
    ,\hat{\gamma}_{l},\alpha_{0},\theta_{0})-\phi(W_{i},W_{j},\gamma_{0}%
    ,\alpha_{0},\theta_{0}),\\
    \hat{R}_{3,ij,l}  &  =\phi(W_{i},W_{j},\gamma_{0},\hat{\alpha}_{l},\theta
    _{0})-\phi(W_{i},W_{j},\gamma_{0},\alpha_{0},\theta_{0}),\quad(i,j)\in I_{l}.
    \end{align*}
    Then%
    \begin{equation}
    g(W_{i},W_{j},\hat{\gamma}_{l},\theta_{0})+\phi(W_{i},W_{j},\hat{\gamma}%
    _{l},\hat{\alpha}_{l},\theta_{0})-\psi(W_{i},W_{j},\gamma_{0},\alpha
    _{0},\theta_{0})=\hat{R}_{1,ij,l}+\hat{R}_{2,ij,l}+\hat{R}_{3,ij,l}+\hat{\xi
    }_{l}(W_{i},W_{j}). \label{eq1lemma10}%
    \end{equation}
    Let $\hat{A}_{ij,l} = \hat{R}_{1,ij,l}+\hat{R}_{2,ij,l}+\hat{R}_{3,ij,l}$. We can rewrite
    \[
     \hat{A}_{ij,l} + \hat{\xi
    }_{l}(W_{i},W_{j}) = \left(\hat{A}_{ij,l} - \mathbb{E}[\hat{A}_{ij,l} | S_l^c]\right) + \mathbb{E}[\hat{A}_{ij,l} | S_l^c] + \hat{\xi
    }_{l}(W_{i},W_{j}).
    \]
    Since pairs in $I_{l}$ are dependent only when
    one or two of the members of the pair coincide (also we omit the fact that we
    are dealing with vectors since the convergence of the vector is the
    convergence of its elements) we have that for $b = 1,2,3$ and $|A|$ the cardinality of set $A$
    \begin{align*}
    &  \mathbb{E}\biggl[\biggl(\sqrt{n}\binom{n}{2}^{-1}\sum_{(i,j)\in I_{l}}%
    (\hat{R}_{b,ij,l}-\mathbb{E}(\hat{R}_{b,ij,l}|S_l^c))\biggr)^{2}%
    \biggr|S_l^c\biggr]=\\
    &  n\binom{n}{2}^{-2}\biggl[|I_l|\mathbb{V}ar(\hat{R}_{b,ij,l}%
    |S_l^c)+|I_l|\cdot 2 (S_l - 1) \mathbb{C}ov(\hat{R}_{b,ij,l},\hat{R}_{b,ik,l}%
    |S_l^c)\biggr].
    \end{align*}
   Since $|I_l|$ is of the same order as $\binom{n}{2}$, $n \binom{n}{2}^{-2} |I_l| \to 0$. $n \binom{n}{2}^{-2} |I_l|  2 (|S_l| - 1)$ is of the same order as $ n |S_l|/|I_l|$ and $n |S_l|/|I_l| \to c < \infty$ by Lemma \ref{lemma_CF} in the Online Appendix \ref{app_CF}. Hence%

    \begin{align*}
    \mathbb{E}\biggl[\biggl(\sqrt{n}\binom{n}{2}^{-1}\sum_{(i,j)\in I_{l}}%
    (\hat{R}_{b,ij,l}-\mathbb{E}(\hat{R}_{b,ij,l}|S_l^c)\biggr)^{2}%
    \biggr|S_l^c\biggr] &\leq c \cdot \mathbb{C}ov(\hat{R}_{b,ij,l},\hat{R}_{b,ik,l}|S_l^c)+o_{P}(1)\\
    &\leq c \cdot\sqrt{\mathbb{E}(\hat{R}_{b,ij,l}^{2}|S_l^c)\mathbb{E}(\hat
    {R}_{b,ik,l}^{2}|S_l^c)}+o_{P}(1)\rightarrow_{p}0,
    \end{align*}
    where convergence for $b = 1,2,3$ follows from Assumption
    \ref{Ass_AT_equiv}. Then, by the conditional Markov inequality, triangle inequality, and DCT
    \[
    \sqrt{n}\binom{n}{2}^{-1}\sum_{(i,j)\in I_{l}}\left(\hat{A}_{ij,l} - \mathbb{E}[\hat{A}_{ij,l} | S_l^c]\right)\rightarrow_{p}0.
    \]
    Note that $\mathbb{E}[\hat{R}_{1,ij,l}+\hat
    {R}_{2,ij,l}|S_l^c]=\mathbb{E}[\psi(W_{i},W_{j},\hat{\gamma}_{l}%
    ,\alpha_{0},\theta_{0}) | S_l^c]$. Therefore, by Assumption
    \ref{Ass_AT_GRsmallbias}%
    \[
    \biggl|\sqrt{n}\binom{n}{2}^{-1}\sum_{(i,j)\in I_{l}}\mathbb{E}(\hat
    {R}_{1,ij,l}+\hat{R}_{2,ij,l}|S_l^c)\biggr|\leq 2\sqrt{n}|\bar{\psi}(\hat{\gamma}%
    _{l},\alpha_{0},\theta_{0})|\rightarrow_{p}0,
    \]
    where $\bar \psi(\gamma,\alpha, \theta) = \int \int \psi(w_i,w_j,\gamma,\alpha,\theta) F_0(dw_i) F_0(dw_j)$. By Assumption \ref{Ass_AT_GRsmallbias} (i), $\mathbb{E}[\hat{A}_{ij,l} | S_l^c] = \mathbb{E}(\hat
    {R}_{1,ij,l}+\hat{R}_{2,ij,l}|S_l^c)$. Hence, using Equation (\ref{2}) with $\tau = 0$ and the triangle inequality%
    \[
    \sqrt{n}\binom{n}{2}^{-1}\sum_{(i,j)\in I_{l}}(\hat{R}_{1,ij,l}+\hat
    {R}_{2,ij,l}+\hat{R}_{3,ij,l}) =  \sqrt{n}\binom{n}{2}^{-1}\sum_{(i,j)\in I_{l}} \left(\hat{A}_{ij,l} - \mathbb{E}[\hat{A}_{ij,l} | S_l^c]\right) + \mathbb{E}[\hat{A}_{ij,l} | S_l^c] \rightarrow_{p}0.
    \]
    Hence, by Assumption \ref{Ass_AT_interaction} and the triangle inequality
    \begin{align*}
    &  \sqrt{n}\binom{n}{2}^{-1}\sum_{(i,j)\in I_{l}}\psi(W_{i},W_{j},\hat{\gamma
    }_{l},\hat{\alpha}_{l},\theta_{0})-\psi(W_{i},W_{j},\gamma_{0},\alpha
    _{0},\theta_{0})\\
    &  =\sqrt{n}\binom{n}{2}^{-1}\sum_{(i,j)\in I_{l}}(\hat{R}_{1,ij,l}+\hat{R}_{2,ij,l}+\hat{R}%
    _{3,ij,l}+\hat{\xi}_{l}(W_{i},W_{j}))\rightarrow_{p}0.
    \end{align*}
    Convergence of the interaction term follows from Assumption \ref{Ass_AT_interaction} (ii). Assumption \ref{Ass_AT_interaction} (i) implies (ii) by adding and subtracting $\mathbb{E}[\hat{\xi}_{l}(W_{i},W_{j}) | S_l^c]$ and using the same arguments as we have used for $\hat{R}_{b,ij,l}$ for $b = 1,2,3$. The same remains true when we sum across finite folds.
    \end{proof}

    \begin{proof}
    [Proof of Theorem \ref{Thm_AT_AN}]By the Mean Value Theorem and Lemmas \ref{Lemma_AT_Jacobianhat} and \ref{Lemma_AT_equiv}, 
\begin{eqnarray*}
0 &=&\sqrt{n}\hat{\psi}(\theta _{0})+\frac{\partial\hat{\psi}(\bar{\theta})}{\partial\theta} \sqrt{n}(\hat{\theta}-\theta _{0})
\\
&=&\sqrt{n}\tilde{\psi}(\theta _{0})+B\sqrt{n}(\hat{\theta}-\theta
_{0})+o_{P}(1).
\end{eqnarray*}%
    The normality result follows from standard U-statistic theory (see Theorem
    12.3 in \cite{van2000asymptotic}) and Slutsky's lemma. Hence $\sqrt{n}%
    (\hat{\theta} - \theta_{0}) \rightarrow_{d} \mathcal{N}(0,V)$. 
    \end{proof}

\begin{proof}
    [Proof of Proposition \ref{prop_Sigmahat_consistency}] Let $\hat{\psi}_{ij} = \psi(W_i,W_j,\hat{\gamma},\hat{\alpha},\hat{\theta})$, $\psi_{ij} = \psi(W_i,W_j,\gamma_0,\alpha_0,\theta_0)$ and define $\hat{\chi}_i = (n-1)^{-1}\sum_{j \neq i} \hat{\psi}_{ij}$, $\tilde{\chi}_i = (n-1)^{-1}\sum_{j \neq i} \psi_{ij}$ and $\chi_i = \mathbb{E}[\psi_{ij}|W_i]$. Without loss of generality, we focus on the scalar case. Note that $\hat{\Sigma} = 4n^{-1}\sum_{i=1}^n \hat{\chi}_i^2$ and we want to show that $\hat{\Sigma} \to_p 4\mathbb{E}[\chi_i^2] = \Sigma$. To do this note that
    \[
    \frac{1}{n}\sum_{i=1}^n(\hat{\chi}_i - \tilde{\chi}_i + \tilde{\chi}_i)^2  = \frac{1}{n}\sum_{i=1}^n \tilde{\chi}_i^2 + \frac{1}{n}\sum_{i=1}^n (\hat{\chi}_i - \tilde{\chi}_i)^2 + \frac{2}{n}\sum_{i=1}^n (\hat{\chi}_i - \tilde{\chi}_i)\tilde{\chi}_i.
    \]
    The first term in the right hand side goes in probability to $\mathbb{E}[\chi_i^2]$ by the strong LLN for U-statistics (p.190 in \cite{serfling1980approximation}). By Cauchy-Schwartz, 
    \[
    \frac{2}{n}\sum_{i=1}^n (\hat{\chi}_i - \tilde{\chi}_i)\tilde{\chi}_i \leq 2\biggl(\frac{1}{n}\sum_{i=1}^n (\hat{\chi}_i - \tilde{\chi}_i)^2 \biggr)^{\frac{1}{2}}\biggl(\frac{1}{n}\sum_{i=1}^n \tilde{\chi}_i^2 \biggr)^{\frac{1}{2}}.
    \]
    Hence, if we show that $n^{-1}\sum_{i=1}^n |\hat{\chi}_i - \tilde{\chi}_i|^2 \to_p 0$ we have that $n^{-1}\sum_{i=1}^n \hat{\chi}_i^2 = \mathbb{E}[\chi_i^2] + o_p(1)$. It follows by the $C_r$ inequality and the assumptions in the lemma that
    \begin{align*}
        \frac{1}{n}\sum_{i=1}^n |\hat{\chi}_i - \tilde{\chi}_i|^2 &=  \frac{1}{n}\sum_{i=1}^n \biggl|\frac{1}{n-1} \sum_{j\neq i}[\hat{g}_{ij} - g_{ij} + \hat{\phi}_{ij} - \phi_{ij}]\biggr|^2 \\
        &\leq \frac{2}{n}\sum_{i=1}^n |\hat{g}_{-i} - g_{-i}|^2 + \frac{2}{n}\sum_{i=1}^n |\hat{\phi}_{-i} - \phi_{-i}|^2 \to_p 0.
    \end{align*}
    For convergence of $\hat B = \binom{n}{2}^{-1} \sum_{i<j} \partial g(W_i,W_j,\hat \gamma,\hat \theta)/\partial\theta$ to $B$, the same proof as in Lemma \ref{Lemma_AT_Jacobianhat} applies in a simpler way since we do not have to deal with the terms related to $\phi$. We simply replace $\psi$ by $g$ and employ the unconditional Markov inequality. 
    \end{proof}

    \begin{proof}[Proof of Corollary \ref{corollary_AT_alphaknownuptogamma}] The result follows immediately from Theorems 1 and 2 and Proposition 1 after replacing the
    moment $g$ by the already orthogonal moment $\psi$. Since $\phi=0$, there is no auxiliary nuisance $\alpha$, no interaction remainder $\xi_l$, and all assumptions and proof terms involving $\phi$, $\alpha$, or $\xi_l$ vanish. 
    \end{proof}

%HERE MAKE THE SPLIT FOR THE ONLINE APPENDIX
\putbib[references]   % will only print refs cited in main text
\end{bibunit}
\clearpage        % make sure we start fresh
\pagenumbering{arabic}   % restart page numbering
\setcounter{page}{1}     % start from page 1 again

\begin{center}
    \LARGE \textbf{Online Appendix}
\end{center}
\begin{bibunit}[ectabib] 
\addcontentsline{toc}{section}{Appendices} \renewcommand{\thesection}{\Alph{section}}

\section{Proofs of Inequality of Opportunity}

\label{app_pfsIOp}

\begin{proof}[Proof of Proposition \ref{prop_UFSIF_IOp}]
We use the short notation $\Delta_\tau \equiv \gamma_\tau(X_i)-\gamma_\tau(X_j)$ and $\Delta_0 \equiv \gamma_0(X_i)-\gamma_0(X_j)$. By $|\Delta_\tau|=sgn(\Delta_\tau)\Delta_\tau$ and the chain rule,
\begin{eqnarray*}
\frac{d}{d\tau}\mathbb{E}[g(W_i,W_j,\gamma(F_\tau),\theta)]
&=&
\frac{d}{d\tau}\mathbb{E}[\theta\{\gamma_\tau(X_i)+\gamma_\tau(X_j)\}]
-
\frac{d}{d\tau}\mathbb{E}[|\Delta_\tau|]
\\
&=&
\frac{d}{d\tau}\mathbb{E}[\theta\{\gamma_\tau(X_i)+\gamma_\tau(X_j)\}]
-
\frac{d}{d\tau}\mathbb{E}[sgn(\Delta_0)\Delta_\tau]
-
\frac{d}{d\tau}\mathbb{E}[sgn(\Delta_\tau)\Delta_0],
\end{eqnarray*}
where all derivatives are evaluated at $\tau=0$. The first two summands are already in the linear form required by
(\ref{linearization}). Hence, Lemma \ref{qorth2} yields the corresponding
adjustment terms
\begin{align*}
\phi_1(w_i,w_j,\gamma_0,\theta)
&=
\theta (y_i+y_j-\gamma_0(x_i)-\gamma_0(x_j)), \\
\phi_2(w_i,w_j,\gamma_0,\theta)
&=
-
sgn(\Delta_0)
(y_i-y_j-\gamma_0(x_i)+\gamma_0(x_j)).
\end{align*}
It remains to show that
\[
\left.
\frac{d}{d\tau}
\mathbb{E}[sgn(\Delta_\tau)\Delta_0]
\right|_{\tau=0}
=
0.
\]
We establish this directly. Consider the path $F_\tau=(1-\tau)F_0+\tau H$. For the conditional mean first step $\gamma_\tau(x)=\mathbb E_{F_\tau}[Y\mid X=x]$. Writing $f_{0,X}$ and $h_X$ for the marginal densities of $X$ under $F_0$
and $H$, respectively, and writing
\[
\gamma_H(x)=\mathbb E_H[Y\mid X=x],
\]
we have, for $0\leq \tau\leq \bar \tau$,
\[
\gamma_\tau(x)
=
\frac{(1-\tau)\gamma_0(x)f_{0,X}(x)+\tau\gamma_H(x)h_X(x)}
{(1-\tau)f_{0,X}(x)+\tau h_X(x)}.
\]
Therefore,
\[
\gamma_\tau(x)-\gamma_0(x)
=
\tau
\frac{
h_X(x)\{\gamma_H(x)-\gamma_0(x)\}
}
{
(1-\tau)f_{0,X}(x)+\tau h_X(x)
}.
\]
We restrict attention to regular paths for which the denominator is positive
for $0\leq \tau\leq\bar\tau$ and
\[
C_H(x)
:=
\sup_{0\leq s\leq\bar\tau}
\left|
\frac{
h_X(x)\{\gamma_H(x)-\gamma_0(x)\}
}
{
(1-s)f_{0,X}(x)+s h_X(x)
}
\right|
\]
is well defined with $\mathbb E[C_H(X_i)]<\infty$. This restriction is only on
the class of paths used to verify the pathwise derivative. Such regularity restrictions are standard in
pathwise differentiability arguments and still allow for a sufficiently rich class of perturbations. It follows that, for all $0\leq\tau\leq\bar\tau$,
\[
|\gamma_\tau(x)-\gamma_0(x)|
\leq
\tau C_H(x).
\]
Hence,
\[
|\Delta_\tau-\Delta_0|
\leq
|\gamma_\tau(X_i)-\gamma_0(X_i)|
+
|\gamma_\tau(X_j)-\gamma_0(X_j)|
\leq
\tau V_H,
\]
where
\[
V_H
=
V_H(X_i,X_j)
=
C_H(X_i)+C_H(X_j).
\]
Consequently, if $|\Delta_0|>\tau V_H$, then $\Delta_\tau$ cannot cross zero
relative to $\Delta_0$, and therefore $sgn(\Delta_\tau)=sgn(\Delta_0)$. Thus,
\[
\{sgn(\Delta_\tau)\neq sgn(\Delta_0)\}
\subseteq
\{|\Delta_0|\leq \tau V_H\}.
\]
Since $sgn(\cdot)$ takes values in $\{-1,0,1\}$,
\[
|sgn(\Delta_\tau)-sgn(\Delta_0)|
\leq
2\cdot 1\{|\Delta_0|\leq \tau V_H\}.
\]
Therefore,
\begin{eqnarray*}
\left|
\mathbb E\left[
\Delta_0
\left\{
\frac{sgn(\Delta_\tau)-sgn(\Delta_0)}{\tau}
\right\}
\right]
\right|
&\leq&
2\mathbb E\left[
|\Delta_0|
\frac{1\{|\Delta_0|\leq \tau V_H\}}{\tau}
\right]
\\
&\leq&
2\mathbb E\left[
V_H1\{0<|\Delta_0|\leq \tau V_H\}
\right]
\\
&\rightarrow&0,
\end{eqnarray*}
as $\tau\downarrow0$. The convergence follows from continuity from above of
the finite positive measure
\[
\mu(A)
=
\int_A
V_H(x_1,x_2)
F_0(dx_1)F_0(dx_2),
\]
because $\mathbb E[V_H]<\infty$ and the sets $A_\tau = \{(x_1,x_2):0<|\Delta_0(x_1,x_2)|\leq \tau V_H(x_1,x_2)\}$ decrease to the empty set as $\tau\downarrow0$. Hence,
\[
\left.
\frac{d}{d\tau}
\mathbb E[sgn(\Delta_\tau)\Delta_0]
\right|_{\tau=0}
=
0,
\]
and the desired adjustment term is $\phi=\phi_1+\phi_2$.
\end{proof}
\bigbreak

\begin{proof}[Proof of Proposition \ref{prop_sgn_consistency_smoothness}]
    We start with the smoothness part and leave the consistency of the sign for the end of the proof. Since $|sgn(\Delta_{\gamma}) - sgn(\Delta_0)| \leq 2 \cdot 1_A$ a.s. where $A = \{(x_i,x_j): sgn(\gamma(x_i) - \gamma(x_j)) \neq sgn(\gamma_0(x_i) - \gamma_0(x_j)) \}$, we have that
    \begin{align*}
        |\mathbb{E}[(sgn(\Delta_{\gamma}) - sgn(\Delta_0))\Delta_0] | \leq 2 \mathbb{E}[|\Delta_0| 1_A].
    \end{align*}
    For some $t > 0$, we can decompose into a small margin and a large margin term
    \[
     \mathbb{E}[|\Delta_0| 1_A 1(0 < |\Delta_0| \leq t)] + \mathbb{E}[|\Delta_0| 1_A 1(|\Delta_0| > t)].
    \]
    Let us now deal with the small margin term. We restrict the analysis to the set $A$ where $\Delta_{\gamma} \cdot \Delta_0 < 0$ or $\Delta_{\gamma} \cdot \Delta_0 = 0$. 
    If $\Delta_0 > 0$, then $\Delta_{\gamma} < 0$ meaning that
    \begin{align*}
        \Delta_{\gamma} &= \Delta_0 - (\gamma_0(X_i) - \gamma(X_i)) + \gamma_0(X_j) - \gamma(X_j) < 0\\
        \implies &- (\gamma_0(X_i) - \gamma(X_i)) + \gamma_0(X_j) - \gamma(X_j) < - \Delta_0 \\
         \implies &|\Delta_0| < |\gamma_0(X_i) - \gamma(X_i)| + |\gamma_0(X_j) - \gamma(X_j)|.
    \end{align*}
When $\Delta_0 < 0$, then $\Delta_{\gamma} > 0$ so 
\begin{align*}
        &- (\gamma_0(X_i) - \gamma(X_i)) + \gamma_0(X_j) - \gamma(X_j) > - \Delta_0 \\
         \implies &|-(\gamma_0(X_i) - \gamma(X_i)) + \gamma_0(X_j) - \gamma(X_j)| > - \Delta_0 = |\Delta_0| \\
         \implies &|\Delta_0| < |\gamma_0(X_i) - \gamma(X_i)| + |\gamma_0(X_j) - \gamma(X_j)|. 
\end{align*}
If $\Delta_0 = 0$ then $\mathbb{E}[|\Delta_0| 1_A] = 0$ and if $\Delta_\gamma = 0$
\[
|\Delta_0| = |\Delta_0 - \Delta_\gamma| = | \gamma_0(X_i) - \gamma(X_i) - (\gamma_0(X_j) - \gamma(X_j)) | \leq  | \gamma_0(X_i) - \gamma(X_i)| + | \gamma_0(X_j) - \gamma(X_j) |.
\]
Hence,
\begin{align*}
    \mathbb{E}[|\Delta_0| 1_A 1( 0 < |\Delta_0| \leq t)] &\leq 2\mathbb{E}[|\gamma(X_i) - \gamma_0(X_i)| 1(0<|\Delta_0| \leq t)].
\end{align*}
Now, using Hölder's inequality with $q' = q/(q-1)$ we get
\begin{align*}
    \mathbb{E}[|\Delta_0| 1_A 1(0 < |\Delta_0| \leq t)] &\leq 2 ||\gamma - \gamma_0||_q P(0 < |\Delta_0| \leq t)^{1/q'},
\end{align*}
so by definition of $q'$ and Assumption \ref{Ass_AT_iop} (iii) we get
\[
\mathbb{E}[|\Delta_0| 1_A 1(0 < |\Delta_0| \leq t)] \leq 2C ||\gamma - \gamma_0||_q t^{\frac{\beta(q-1)}{q}}.
\]
For the large margin, let $B_i = |\gamma_0(X_i) - \gamma(X_i)|$ for notational brevity. Using again that on $A$: $|\Delta_0| < |\gamma_0(X_i) - \gamma(X_i)| + |\gamma_0(X_j) - \gamma(X_j)|$ (also inside the indicator) we show that
\begin{align*}
    \mathbb{E}[|\Delta_0| 1_A 1(|\Delta_0| > t)] &\leq \mathbb{E}[(B_i + B_j) 1(B_i + B_j > t)] \\
    &\leq 2 \mathbb{E}[B_i 1(B_i > t/2)] + 2 \mathbb{E}[B_i 1(B_j > t/2)] \\
    &\leq 4C_{\Delta} ||\gamma - \gamma_0||_q P(B_i > t/2)^{\frac{q-1}{q}} \\
    &\leq 4 \cdot 2^{q-1}\cdot  C_{\Delta} \frac{||\gamma - \gamma_0||_q^q}{t^{q-1}},
\end{align*}
where we have used that $a+b > t$ implies $\{a > t/2\} \cup \{b > t/2\}$ and monotonicity of measures, then Hölder and Markov inequalities. Putting all together (for $q \geq 1$) we have
\begin{align*}
    |\mathbb{E}[(sgn(\Delta_{\gamma}) - sgn(\Delta_0))\Delta_0] | &\leq 4 \cdot 2^{q-1}\cdot  C \left( ||\gamma - \gamma_0||_q t^{\frac{\beta (q-1)}{q}} + \frac{||\gamma - \gamma_0||_q^q}{t^{q-1}} \right).
\end{align*}
Optimizing $t$ and plugging in the optimal, we get
\[
|\mathbb{E}[(sgn(\Delta_{\gamma}) - sgn(\Delta_0))\Delta_0] \leq C(\beta,q) ||\gamma - \gamma_0||_q^{\frac{q(1+\beta)}{q + \beta}}.
\]
Now, for the supremum norm, notice that by the derivations above, we have that $A$ implies
\begin{align*}
    |\Delta_0| \leq 2 ||\gamma - \gamma_0||_\infty,
\end{align*}
and hence also $1_A \leq 1(|\Delta_0| \leq 2 ||\gamma - \gamma_0||_\infty)$ a.s. Then
\begin{align*}
    |\mathbb{E}[(sgn(\Delta_{\gamma}) - sgn(\Delta_0))\Delta_0] &= |\mathbb{E}[(sgn(\Delta_{\gamma}) - sgn(\Delta_0))\Delta_0 1(|\Delta_0| > 0)]| \\
    &\leq 2 \mathbb{E}[|\Delta_0| 1_A 1(0<|\Delta_0|)] \\
    &\leq 2 ||\gamma - \gamma_0||_\infty P(0 < |\Delta_0| \leq 2 ||\gamma- \gamma_0||_\infty).
\end{align*}
Hence, by Assumption \ref{Ass_AT_iop} (iii)
\[
|\mathbb{E}[(sgn(\Delta_{\gamma}) - sgn(\Delta_0))\Delta_0] \leq C ||\gamma - \gamma_0||_\infty^{1 + \beta}.
\]
For the consistency of the sign difference, note that
\begin{align*}
    \mathbb{E}[|sgn(\Delta_{\gamma}) - sgn(\Delta_0)|] &= 2\mathbb{E}[1_A 1(X_i \neq X_j)] \\
    &= 2 \mathbb{E}[1_A 1(X_i \neq X_j) 1(|\Delta_0| \leq t)] + 2 \mathbb{E}[1_A 1(|\Delta_0| > t)1(X_i \neq X_j)].
\end{align*}
Let now $t \equiv t_n \downarrow0$, then the first term goes to zero by Assumption \ref{Ass_AT_iop} (ii). For the second term, we use the same arguments as above to show that
\begin{align*}
    \mathbb{E}[1_A 1(|\Delta_0| > t_n)1(X_i \neq X_j)] &\leq \mathbb{E}[1(B_i + B_j > t_n)] \\
    &\leq 4 \frac{||\gamma - \gamma_0||_1}{t_n}.
\end{align*}
Hence, using $\hat{\gamma}_l$ instead of $\gamma$ and conditioning on $S_l^c$, the consistency of the sign difference follows by selecting a sequence $t_n \downarrow0$ sufficiently slow.

\end{proof}

\begin{proof}[Proof of Proposition \ref{Prop_AT_IOp}]
By Corollary \ref{corollary_AT_alphaknownuptogamma} we can just check the needed high-level assumptions in the general asymptotic theory replacing $g$ by our $\psi$ and dropping all terms related to $\phi$. For Theorem \ref{Thm_AT_Consist} (i), by Cauchy-Schwartz
\begin{align*}
    \iint &|\psi(w_i,w_j,\hat{\gamma}_l,\theta) - \psi(w_i,w_j,\gamma_0,\theta)| F_0(dw_i) F_0(dw_j)  \\
    &\leq \sqrt{\iint |sgn(\Delta_{\hat \gamma}) - sgn(\Delta_0)|^2 F_0(dw_i) F_0(dw_j)} \sqrt{\mathbb{E}[(Y_i - Y_j)^2]}  \to_p 0,
\end{align*}
where the convergence follows from Proposition \ref{prop_sgn_consistency_smoothness} and finite second moment of $Y_i$. Since $\Theta = [-1,1]$ and
\[
|\psi(W_i,W_j,\gamma,\tilde \theta) - \psi(W_i,W_j,\gamma, \theta)| = |Y_i + Y_j| \cdot |\tilde \theta - \theta|,
\]
so the Lipschitz condition for consistency holds for any $\gamma$ by finite first moment of $Y_i$. Hence, by Theorem \ref{Thm_AT_Consist}, we have $\hat \theta \to_p \theta_0$. For Assumption \ref{Ass_AT_equiv}, if conditional second moments of $Y_i$ given $X_i$ are bounded
\begin{align*}
    \mathbb{E}[\hat{A}_{ij,l}^2 | S_l^c] &= \mathbb{E}\left(  \left[sgn(\Delta_0) - sgn(\Delta_{\hat{\gamma}_l})\right]^2(Y_i-Y_j)^2 | S_l^c \right)  \\
    &\leq C \mathbb{E}\left( |sgn(\Delta_0) - sgn(\Delta_{\hat{\gamma}_l})| \, | S_l^c \right) \to_p 0,
\end{align*}
or if $2+\delta$ moments of $Y_i - Y_j$ exist
\begin{align*}
    \mathbb{E}[\hat{A}_{ij,l}^2 | S_l^c] &= \mathbb{E}\left(  \left[sgn(\Delta_0) - sgn(\Delta_{\hat{\gamma}_l})\right]^2(Y_i-Y_j)^2 | S_l^c \right)  \\
    &\leq C \mathbb{E}\left( |sgn(\Delta_0) - sgn(\Delta_{\hat{\gamma}_l})| \, | S_l^c \right)^{\frac{\delta}{2+\delta}} \mathbb{E}[|Y_i - Y_j|^{2+\delta}] \to_p 0,
\end{align*}
The convergence in probability follows by Proposition \ref{prop_sgn_consistency_smoothness}.  To check Assumption \ref{Ass_AT_GRsmallbias} (ii) we have that by Proposition \ref{prop_sgn_consistency_smoothness}
\begin{align*}
    \sqrt{n} |\mathbb{E}[(sgn(\Delta_0) - sgn(\Delta_{\hat{\gamma}_l}))\Delta_0 | S_l^c] | &\leq \begin{cases}
    \sqrt{n}C(\beta,q) ||\hat{\gamma}_l - \gamma_0||_q^{\frac{q(1+\beta)}{q+ \beta}} &\text{ if } q \in [1,\infty),\\
    \sqrt{n}2C_{\Delta} ||\hat{\gamma}_l - \gamma_0||_\infty^{1+\beta} &\text{ if } q = \infty.
\end{cases}
\end{align*}
Under Assumption \ref{ass_rates_iop}, for any choice of $q$ we get the result that 
\begin{align*}
    \sqrt{n} |\mathbb{E}[(sgn(\Delta_0) - sgn(\Delta_{\hat{\gamma}_l}))\Delta_0 | S_l^c] | &= o_p(1).
\end{align*}
Assumption \ref{Ass_AT_Jacobianhat} follows trivially by linearity of $\psi$ in $\theta$ and $0 < \mathbb{E}[Y_i] < \infty$. By Theorem \ref{Thm_AT_AN}, we have $\sqrt{n}(\hat \theta - \theta_0) \to_d \mathcal{N}(0,V)$. For Assumption \ref{ass_AT_Sigmahat}, by $C_r$ inequality, we need to control
\[
\frac{(\hat{\theta} - \theta_0)^2}{n}\sum_{i=1}^n \left(\frac{1}{n-1} \sum_{j \neq i} (Y_i + Y_j) \right)^2, \quad 
\frac{1}{n} \sum_{i= 1}^n \left(\frac{1}{n-1} \sum_{j \neq i} (sgn(\hat{\Delta}) - sgn(\Delta_0))(Y_i - Y_j) \right)^2.
\]
For the first, letting $\bar{Y}_n = (1/n)\sum_{i=1}^n Y_i$,
\[
 \frac{(\hat{\theta} - \theta_0)^2}{n}\sum_{i=1}^n \left(\frac{1}{n-1} \sum_{j \neq i} (Y_i + Y_j) \right)^2 \leq C(\hat{\theta} - \theta_0)^2\left(\bar{Y}_n + \frac{1}{n} \sum_{i=1}^n Y_i^2 \right) \to_p 0.
\]
For the other term
\begin{align*}
    &P\left(\frac{1}{n} \left| \sum_{i=1}^n \left( \frac{1}{n-1} \sum_{j \neq i} (sgn(\hat{\Delta}) - sgn(\Delta_0))(Y_i - Y_j) \right)^2 \right| > \varepsilon \right)\\
    &\leq \frac{1}{\varepsilon(n-1)^2}\mathbb{E} \left[ \left| \sum_{j \neq i} (sgn(\hat{\Delta}) - sgn(\Delta_0))(Y_i - Y_j) \right|^2 \right] \\
    &= \frac{1}{\varepsilon(n-1)^2}\left((n-1)\mathbb{E} \left[ \hat{A}_{ij}^2  \right] + 
    (n-1)(n-2) \mathbb{E} \left[  \hat{A}_{ij} \hat{A}_{ik} \right] \right),
\end{align*}
where $\hat{A}_{ij} \equiv (sgn(\hat{\gamma}(X_i) - \hat{\gamma}(X_j)) - sgn(\gamma_0(X_i) - \gamma_0(X_j)))(Y_i - Y_j)$. The first term goes to zero since it is divided by $n-1$ and by the boundedness of the sign and boundedness of the conditional variance of $Y_i$. The second term goes to zero by applying Cauchy Schwarz and noting that by the finiteness of the conditional variance of $Y_i$
\[
\sqrt{\mathbb{E}[\hat{A}_{ij}^2]} \leq C \sqrt{\mathbb{E}\left( |sgn(\hat{\Delta} - \Delta_0)| \right)} \to 0,
\]
by the unconditional analogue of the sign-consistency argument in Proposition \ref{prop_sgn_consistency_smoothness}, using Assumption \ref{ass_Vhatcons_IOp} in place of the conditional $L_1$ consistency of $\hat \gamma_l$. The Jacobian estimator is consistent by the LLN so $\hat V \to_p V$ by the Continuous Mapping Theorem.

\end{proof}

\section{Proofs of AUC}

\label{app_pfsAUC}

\begin{proof}[Proof of Proposition \ref{AUCrep}]
Using
\[
1(a>b)+\frac12 1(a=b)=\frac{1+sgn(a-b)}{2}, \quad 1(a<b)+\frac12 1(a=b)=\frac{1-sgn(a-b)}{2},
\]
the symmetrized representation of the AUC gives
\[
\theta_0
=
\frac12
+
\frac{
\mathbb{E}\left[
sgn(\gamma_0(X_i)-\gamma_0(X_j))
\{Y_i(1-Y_j)-Y_j(1-Y_i)\}
\right]
}
{4p_0(1-p_0)}.
\]
Since
\[
Y_i(1-Y_j)-Y_j(1-Y_i)=Y_i-Y_j,
\]
the stated representation follows. Orthogonality follows since $A_0$
is the same pairwise sign functional appearing in the debiased Gini
representation, while $p_0$ does not depend on $\gamma_0$.
\end{proof}

\begin{proof}[Proof of Corollary \ref{cor_AUC_AN}]
Let
\[
h_A(W_i)
=
\mathbb E\!\left[
sgn(\gamma_0(X_i)-\gamma_0(X_j))(Y_i-Y_j)
\mid W_i
\right]
-
A_0.
\]
By Proposition \ref{Prop_AT_IOp},
\[
\sqrt n(\hat A-A_0)
=
\frac1{\sqrt n}
\sum_{i=1}^n
2h_A(W_i)
+
o_p(1).
\]
Also,
\[
\sqrt n(\hat p-p_0)
=
\frac1{\sqrt n}
\sum_{i=1}^n
(Y_i-p_0)
+
o_p(1).
\]
Define
\[
r(A,p)=\frac12+\frac{A}{4p(1-p)}.
\]
Since $\hat\theta_{AUC}=r(\hat A,\hat p)$ and
$\theta_0=r(A_0,p_0)$, the delta method gives
\begin{align*}
\sqrt n(\hat\theta_{AUC}-\theta_0)
&=
\frac{1}{4p_0(1-p_0)}
\sqrt n(\hat A-A_0)
\\
&\quad
-
\frac{A_0(1-2p_0)}{4p_0^2(1-p_0)^2}
\sqrt n(\hat p-p_0)
+
o_p(1).
\end{align*}
Substituting the two asymptotic linear representations,
\[
\sqrt n(\hat\theta_{AUC}-\theta_0)
=
\frac1{\sqrt n}
\sum_{i=1}^n
\left[
\frac{h_A(W_i)}{2p_0(1-p_0)}
-
\frac{A_0(1-2p_0)}{4p_0^2(1-p_0)^2}
(Y_i-p_0)
\right]
+
o_p(1).
\]
The result follows by the central limit theorem.
\end{proof}

\section{Proofs of Kernel-DML Estimators for Conditional Moment Restrictions}
\label{app_Kernel_DML_proofs}

\begin{proof}[Proof of Proposition \ref{prop:CMR_kernel_id}]
By iterated expectations and independence of $(W_i,W_j)$,
\[
Q(\gamma_0,\theta)
=
\mathbb{E}\!\left[m(Z_i,\gamma_0,\theta)m(Z_j,\gamma_0,\theta)K(Z_i,Z_j)\right].
\]
Define the finite signed measure $\mu_\theta$ on $(\mathcal{Z},\mathcal{B}(\mathcal{Z}))$ by
\[
\mu_\theta(A):=\int_A m(z,\gamma_0,\theta)\,dP_Z(z), \quad A\in\mathcal{B}(\mathcal{Z}).
\]
Since $m_\theta\in L_2(P_Z)$, $\mu_\theta$ is finite and absolutely continuous with respect to $P_Z$. Then
\[
Q(\gamma_0,\theta)
=
\iint K(z,z')\,d\mu_\theta(z)\,d\mu_\theta(z').
\]
Nonnegativity follows from positive definiteness of $K$. If $Q(\gamma_0,\theta)=0$, the ISPD property implies $\mu_\theta=0$, hence $m(Z,\gamma_0,\theta)=0$ almost surely. By \eqref{CMR}, $\theta=\theta_0$. The converse implication is immediate.
\end{proof}

\begin{proof}[Proof of Proposition \ref{prop_phi_KCMR}]
Let $K_{ij}=K(Z_i,Z_j)$. By Assumption \ref{ass_KCMR}, the chain rule and differentiation under the expectation give
\begin{align*}
\frac{d}{d\tau}
\mathbb E[g(W_i,W_j,\gamma_\tau,\theta)]
&=
\frac{d}{d\tau}
\mathbb E\!\left[
\varepsilon_\gamma(W_i,\gamma_0,\theta)
\varepsilon_\theta(W_j,\gamma_0,\theta)
K_{ij}\gamma_\tau(X_i)
\right]
\\
&\quad+
\frac{d}{d\tau}
\mathbb E\!\left[
\varepsilon_{\theta\gamma}(W_i,\gamma_0,\theta)
\varepsilon(W_j,\gamma_0,\theta)
K_{ij}\gamma_\tau(X_i)
\right]
\\
&\quad+
\frac{d}{d\tau}
\mathbb E\!\left[
\varepsilon_\theta(W_i,\gamma_0,\theta)
\varepsilon_\gamma(W_j,\gamma_0,\theta)
K_{ij}\gamma_\tau(X_j)
\right]
\\
&\quad+
\frac{d}{d\tau}
\mathbb E\!\left[
\varepsilon(W_i,\gamma_0,\theta)
\varepsilon_{\theta\gamma}(W_j,\gamma_0,\theta)
K_{ij}\gamma_\tau(X_j)
\right].
\end{align*}
By symmetry of $K_{ij}$ and exchangeability of $(W_i,W_j)$, the two terms involving
$\gamma_\tau(X_i)$ and the two terms involving $\gamma_\tau(X_j)$ are represented by the same function
\[
\alpha_0(x,\theta)
=
\mathbb E\!\left[
\Big(
\varepsilon_\gamma(W_i,\gamma_0,\theta)\varepsilon_\theta(W_j,\gamma_0,\theta)
+
\varepsilon_{\theta\gamma}(W_i,\gamma_0,\theta)\varepsilon(W_j,\gamma_0,\theta)
\Big)
K_{ij}
\mid X_i=x
\right].
\]
Hence,
\[
\frac{d}{d\tau}
\mathbb E[g(W_i,W_j,\gamma_\tau,\theta)]
=
\frac{d}{d\tau}
\mathbb E[\alpha_0(X_i,\theta)\gamma_\tau(X_i)]
+
\frac{d}{d\tau}
\mathbb E[\alpha_0(X_j,\theta)\gamma_\tau(X_j)].
\]
Applying Lemma \ref{qorth2} with $M=2$, $c_{11}=1$, $c_{21}=0$, $c_{12}=0$, $c_{22}=1$,
$\alpha_{01}(X_i,X_j)=\alpha_0(X_i)$, and
$\alpha_{02}(X_i,X_j)=\alpha_0(X_j)$ gives
\[
\phi(W_i,W_j,\gamma_0,\alpha_0(\theta),\theta)
=
\alpha_0(X_i,\theta)(Y_i-\gamma_0(X_i))
+
\alpha_0(X_j,\theta)(Y_j-\gamma_0(X_j)).
\]
\end{proof}

\subsection{Identification in Semiparametric Production Functions with Proxy Variables}
\label{app_ident_prod_fcns}

Let $\theta=(\beta',\rho)'\in\Theta:=\mathcal B\times\mathcal R\subset\mathbb R^{d_\beta+1}$ and define $m(Z_i,\gamma_0,\theta) = A_\beta(Z_i)-\rho B_\beta(X_i)$, where
\[
A_\beta(Z_i)
=
\mathbb E[Y_{i2}-F(K_{i2},L_{i2},\beta)\mid Z_i], \qquad B_\beta(X_i)
=
\gamma_0(X_i)-F(K_{i1},L_{i1},\beta).
\]
Since $X_i\subseteq Z_i$, $B_\beta(X_i)$ is $Z_i$-measurable. Moreover, $B_{\beta_0}(X_i)=\omega_0(X_i)$.

\begin{assumption}
\label{ass:prod_global_id}
(i) $\theta_0=(\beta_0',\rho_0)'$ satisfies $m(Z_i,\gamma_0,\theta_0)=0$ a.s. (ii) $\mathbb P(B_{\beta_0}(X_i)\neq0)>0$; (iii) for every $\beta\neq\beta_0$ and every $\rho\in\mathcal R$, $\mathbb P\!\left(
A_\beta(Z_i)\neq \rho B_\beta(X_i)
\right)>0$.
\end{assumption}

\begin{proposition}
\label{prop:prod_global_id}
Under Assumption \ref{ass:prod_global_id}, $m(Z_i,\gamma_0,\theta)=0$ a.s. $\iff
\theta=\theta_0$.
\end{proposition}

\begin{proof}[Proof of Proposition \ref{prop:prod_global_id}]
Suppose $m(Z_i,\gamma_0,\theta)=0$ a.s. Then $A_\beta(Z_i)=\rho B_\beta(X_i)$ a.s. By Assumption \ref{ass:prod_global_id} (iii), this implies
$\beta=\beta_0$. Hence, $(\rho-\rho_0)B_{\beta_0}(X_i)=0$ a.s. Since $\mathbb P(B_{\beta_0}(X_i)\neq0)>0$, Assumption \ref{ass:prod_global_id} (ii) yields
$\rho=\rho_0$.
\end{proof}

The kernel-DML estimator studied in the main text is based on the
first-order conditions of the kernel criterion rather than direct
minimization. Proposition \ref{prop:prod_global_id} establishes
identification of the underlying conditional moment restriction,
while Proposition \ref{prop:CMR_kernel_id} in the main text shows
that the kernel criterion preserves this identification under ISPD
kernels. Identification of the corresponding first-order conditions
is a separate issue. Local identification follows from
nonsingularity of the Hessian of the kernel criterion at
$\theta_0$, while global identification of the estimating equations
may be obtained under stronger curvature conditions. The calculations
below provide sufficient conditions for these properties in the
production-function setting.

\bigskip

\noindent
For identification from the first-order conditions, note that
\[
\varepsilon_\theta(W_i,\gamma,\theta)
=
\frac{\partial}{\partial\theta}
\varepsilon(W_i,\gamma,\theta)
=
\begin{pmatrix}
-\dfrac{\partial}{\partial\beta}
F(K_{i2},L_{i2},\beta)
+
\rho
\dfrac{\partial}{\partial\beta}
F(K_{i1},L_{i1},\beta)
\\[0.2cm]
-\gamma(X_i)+F(K_{i1},L_{i1},\beta)
\end{pmatrix},
\]
and
\[
\begin{aligned}
\varepsilon_{\theta\theta}(W_i,\gamma,\theta)
&=
\frac{\partial^2}{\partial\theta\,\partial\theta'}
\varepsilon(W_i,\gamma,\theta) \\
&=
\begin{pmatrix}
-\dfrac{\partial^2}{\partial\beta\,\partial\beta'}
F(K_{i2},L_{i2},\beta)
+
\rho
\dfrac{\partial^2}{\partial\beta\,\partial\beta'}
F(K_{i1},L_{i1},\beta)
&
\dfrac{\partial}{\partial\beta}
F(K_{i1},L_{i1},\beta)
\\[0.2cm]
\dfrac{\partial}{\partial\beta'}
F(K_{i1},L_{i1},\beta)
&
0
\end{pmatrix}.
\end{aligned}
\]
The Hessian of the kernel criterion is $H(\theta) = 2\{H_1(\theta)+H_2(\theta)\}$, where
\begin{align*}
H_1(\theta)
&=
\mathbb E\!\left[
\varepsilon_\theta(W_i,\gamma_0,\theta)
\varepsilon_\theta(W_j,\gamma_0,\theta)'
K(Z_i,Z_j)
\right],
\\
H_2(\theta)
&=
\frac12
\mathbb E\!\left[
\left(
\varepsilon_{\theta\theta}(W_i,\gamma_0,\theta)
\varepsilon(W_j,\gamma_0,\theta)
+
\varepsilon(W_i,\gamma_0,\theta)
\varepsilon_{\theta\theta}(W_j,\gamma_0,\theta)
\right)
K(Z_i,Z_j)
\right].
\end{align*}
Since $m(Z_i,\gamma_0,\theta_0)=0$ a.s., we have $H(\theta_0)=2H_1(\theta_0)$,
so non-singularity of $H_1(\theta_0)$ suffices for local identification. A sufficient condition for global identification from the FOCs is that $H(\theta)$ is positive definite for all $\theta\in\Theta$. For the linear specification $F(Q_{it},\beta)=\beta'Q_{it}$, we obtain
\begin{align*}
H_1(\theta)
&=
\mathbb E\!\left[
\begin{pmatrix}
(Q_{i2}-\rho Q_{i1})(Q_{j2}-\rho Q_{j1})'
&
(Q_{i2}-\rho Q_{i1})(\gamma_0(X_j)-\beta'Q_{j1})
\\
(\gamma_0(X_i)-\beta'Q_{i1})(Q_{j2}-\rho Q_{j1})'
&
(\gamma_0(X_i)-\beta'Q_{i1})
(\gamma_0(X_j)-\beta'Q_{j1})
\end{pmatrix}
K(Z_i,Z_j)
\right],
\\
H_2(\theta)
&=
\begin{pmatrix}
0
&
\dfrac12
\mathbb E\!\left[
(Q_{i1}\varepsilon_j(\theta)
+
Q_{j1}\varepsilon_i(\theta))
K(Z_i,Z_j)
\right]
\\[0.3cm]
\dfrac12
\mathbb E\!\left[
(Q_{i1}'\varepsilon_j(\theta)
+
Q_{j1}'\varepsilon_i(\theta))
K(Z_i,Z_j)
\right]
&
0
\end{pmatrix},
\end{align*}
where $\varepsilon_i(\theta) = \varepsilon(W_i,\gamma_0,\theta)$.
\begin{proposition}
\label{prop:prod_linear_id}
Suppose Assumption \ref{ass:prod_global_id}(i)-(ii) hold and
\[
F(Q_{it},\beta)=\beta'Q_{it}.
\]
If
\[
\mathbb P\!\left(
a'(Q_{i2}-\rho_0Q_{i1})
+b\,\omega_0(X_i)=0
\right)<1
\]
for every $(a',b)'\neq0$,
then
\[
m(Z_i,\gamma_0,\theta)=0 \text{ a.s.}
\iff
\theta=\theta_0.
\]
Consequently, under the conditions of Proposition
\ref{prop:CMR_kernel_id},
\[
\mathbb E[g(W_i,W_j,\gamma_0,\theta)]=0
\iff
\theta=\theta_0.
\]
\end{proposition}

\begin{proof}
For the linear specification,
\[
m(Z_i,\gamma_0,\theta)
=
(\beta_0-\beta)'Q_{i2}
+\rho\,\beta'Q_{i1}
-\rho_0\beta_0'Q_{i1}
-(\rho-\rho_0)\omega_0(X_i).
\]
Hence
\[
m(Z_i,\gamma_0,\theta)=0
\]
a.s. implies
\[
(\beta_0-\beta)'Q_{i2}
+
(\rho\beta-\rho_0\beta_0)'Q_{i1}
-
(\rho-\rho_0)\omega_0(X_i)
=
0
\]
a.s.
The stated rank condition therefore implies
\[
\beta=\beta_0,
\qquad
\rho=\rho_0.
\]
The converse is immediate.

The second claim follows from Proposition
\ref{prop:CMR_kernel_id}, which establishes that the kernel criterion
preserves the identifying content of the conditional moment restriction.
\end{proof}

The following example shows how identification from finitely many instruments may fail.

\begin{example}[Weak identification from finite instruments]
Suppose
\[
Y_{it}=K_{it}+\omega_{it}+\varepsilon_{it},
\qquad
I_{i1}=K_{i1}e^{\omega_{i1}},
\]
so that $\omega_0(X_i)=\log(I_{i1}/K_{i1})$. Then $m(X_i,\rho)
=
(\rho_0-\rho)\omega_0(X_i)$, and finite-dimensional instruments $h(X_i)$ identify $\rho_0$ only through
\[
\mathbb E[m(X_i,\rho)h(X_i)]
=
(\rho_0-\rho)\mathbb E[\omega_0(X_i)h(X_i)].
\]
Hence, identification may become weak when $\omega_0(X_i)$ is weakly correlated with the chosen instruments. Figure \ref{fig:prod_fcn_R2_F} reports Monte Carlo evidence for $K_{i1}=e^{V_i}$, $V_i\sim N(0,\sigma_V^2)$ and $\omega_{i1}\sim N(0,1)$; using polynomial instruments in $(K_{i1},I_{i1})$. As $\sigma_V$ increases, first-stage $R^2$ and F-statistics deteriorate sharply.

\begin{figure}[H]
\centering
\includegraphics[width=\textwidth]{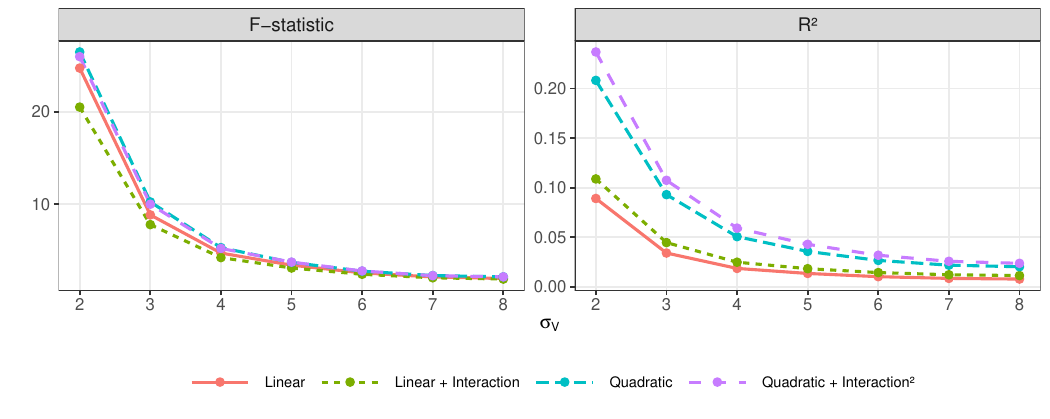}
\caption{Average first-stage $R^2$ and F-statistic across 500 simulations with $n=500$.}
\label{fig:prod_fcn_R2_F}
\end{figure}

$\square$
\end{example}

 \subsection{Proofs for Semiparametric Production Functions with a Proxy Variable}

\begin{proof}[Proof of Proposition \ref{prop:prod_phi}]
Let
\[
F_{it}(\beta)=F(K_{it},L_{it},\beta),
\qquad
F_{\beta,it}(\beta)
=
\frac{\partial}{\partial\beta}
F(K_{it},L_{it},\beta).
\]
For
\[
\varepsilon(W_i,\gamma_0,\theta)
=
Y_{i2}-F_{i2}(\beta)-\rho\{\gamma_0(X_i)-F_{i1}(\beta)\},
\]
we have
\[
\varepsilon_\theta(W_i,\gamma,\theta)
=
\begin{pmatrix}
-F_{\beta,i2}(\beta)+\rho F_{\beta,i1}(\beta)
\\
-\gamma(X_i)+F_{i1}(\beta)
\end{pmatrix},
\]
together with
\[
\varepsilon_\gamma(W_i,\gamma,\theta)
=
-\rho,
\qquad
\varepsilon_{\theta\gamma}(W_i,\gamma,\theta)
=
\begin{pmatrix}
0\\
-1
\end{pmatrix}.
\]
Hence, by Proposition \ref{prop_phi_KCMR},
\begin{align*}
\alpha_0(x)
&=
\mathbb E\!\left[
\left\{
\varepsilon_\gamma(W_i,\gamma_0,\theta_0)
\varepsilon_\theta(W_j,\gamma_0,\theta_0)
+
\varepsilon_{\theta\gamma}(W_i,\gamma_0,\theta_0)
\varepsilon(W_j,\gamma_0,\theta_0)
\right\}
K(x,X_j)
\right]
\\
&=
\begin{pmatrix}
\rho_0
\mathbb E\!\left[
\{F_{\beta,j2}(\beta_0)-\rho_0F_{\beta,j1}(\beta_0)\}
K(x,X_j)
\right]
\\[0.3cm]
\rho_0
\mathbb E\!\left[
\{\gamma_0(X_j)-F_{j1}(\beta_0)\}
K(x,X_j)
\right]
-
\mathbb E\!\left[
\varepsilon(W_j,\gamma_0,\theta_0)K(x,X_j)
\right]
\end{pmatrix}.
\end{align*}
Since
\[
\mathbb E[\varepsilon(W_j,\gamma_0,\theta_0)\mid X_j]=0,
\]
iterated expectations imply
\[
\mathbb E[
\varepsilon(W_j,\gamma_0,\theta_0)K(x,X_j)
]
=0.
\]
Moreover,
\[
\gamma_0(X_j)-F_{j1}(\beta_0)
=
\mathbb E[Y_{j1}-F_{j1}(\beta_0)\mid X_j].
\]
Therefore,
\[
\alpha_0(X_i)
=
\left(
(\rho_0\eta_{02}(X_i)-\rho_0^2\eta_{01}(X_i))',
\,
\rho_0\eta_{03}(X_i)
\right)'.
\]
\end{proof}

\begin{proof}[Proof of Proposition \ref{prop_AN_prod_fcns}]
For Theorem \ref{Thm_AT_Consist} (i) we have that by bounded $K$, compact $\Theta$ and finite second moments of $Y_{i2}$, $Q_{i1}$, $Q_{i2}$ and $\gamma_0(X_i)$ (Assumptions \ref{ass_asympt_prod_fcns} (i)-(iii))
\[
\iint |g(w_i,w_j,\hat \gamma_l,\theta) - g(w_i,w_j,\gamma_0,\theta)| F_0(dw_i) F_0(dw_j) \leq C \|\hat \gamma_l - \gamma_0 \| + C\|\hat \gamma_l - \gamma_0 \|^2 = o_p(1),
\]
where convergence follows from Assumption \ref{ass_asympt_prod_fcns} (iv). Also, again by finite second moments and $\Theta$ compact
\[
\|\hat{\alpha}_l - \alpha_0\|_\mathcal{A}^2 \leq C\left(\|\hat \eta_{1l} - \eta_{01}\|^2 + \|\hat \eta_{2l} - \eta_{02}\|^2 + \|\hat \eta_{3l} - \eta_{03}\|^2 \right) \to_p 0,
\]
with convergence by the fact that $\hat \eta$'s are just averages of observed quantities. By $\Theta$ compact, Theorem \ref{Thm_AT_Consist} (iv) holds with $a = 1$ and 
\begin{align*}
d_g(W_i,W_j,\gamma)
&=
C |K(X_i,X_j)| M_i(\gamma) M_j(\gamma),
\\
M_i(\gamma)
&=
1 + |Y_{i2}| + |Q_{i2}| + |Q_{i1}| + |\gamma(X_i)|,
\\[0.2cm]
d_\phi(W_i,W_j,\gamma,\alpha)
&=
C\Big[
|Y_{i1}-\gamma(X_i)|A(X_i)
+
|Y_{j1}-\gamma(X_j)|A(X_j)
\Big],
\\
A(X_i)
&=
|\eta_{1}(X_i)|
+
|\eta_{2}(X_i)|
+
|\eta_{3}(X_i)|.
\end{align*}
which have finite expectation by $K$ bounded and existence of first moments. Hence, $\hat \theta \to_p \theta_0$. For Assumption \ref{Ass_AT_equiv} we have finite second moment of $\psi$ by finite second moments of all variables and
\begin{align*}
&\iint |g(w_i,w_j,\hat \gamma_l,\theta_0) - g(w_i,w_j,\gamma_0,\theta_0)|^2 F_0(dw_i) F_0(dw_j) \leq C \|\hat \gamma_l - \gamma_0 \|^2 + C\|\hat \gamma_l - \gamma_0 \|^4 = o_p(1), \\
&\iint |\phi(w_i,w_j,\hat \gamma_l,\alpha_0,\theta_0) - \phi(w_i,w_j,\gamma_0,\alpha_0,\theta_0)|^2 F_0(dw_i) F_0(dw_j) \leq C \|\hat \gamma_l - \gamma_0 \|^2 = o_p(1), \\
&\iint |\phi(w_i,w_j, \gamma_0,\hat \alpha(\theta_0),\theta_0) - \phi(w_i,w_j,\gamma_0,\alpha_0,\theta_0)|^2 F_0(dw_i) F_0(dw_j) \leq C \|\hat \alpha_l (\theta_0)- \alpha_0(\theta_0) \|^2 = o_p(1),
\end{align*}
by Assumption \ref{ass_asympt_prod_fcns} and convergence of sample averages. For Assumption \ref{Ass_AT_interaction}
\[
\sqrt{n} \iint |\hat \xi_l(w_i,w_j)| F_0(dw_i) F_0(dw_j) \leq \sqrt{n} \|\hat{\alpha}_l(\theta_0) - \alpha_0\| \cdot \|\hat \gamma_l - \gamma_0\| = \sqrt{n}O_p(n^{-1/2})o_p(n^{-1/4}) =  o_p(1),
\]
by Assumption \ref{ass_asympt_prod_fcns} (iv) and sample averages being $O_p(n^{-1/2})$. For Assumption \ref{Ass_AT_GRsmallbias} by $\Theta$ compact, $K$ bounded and Assumption \ref{ass_asympt_prod_fcns} (iv)
\[
\sqrt{n}| \bar \psi(\hat{\gamma}_l,\alpha_0, \theta_0) \leq C \sqrt{n} \|\hat \gamma_l - \gamma_0\|^2 = o_p(1).
\]
For Assumption \ref{Ass_AT_Jacobianhat}, from linearity everything is differentiable, the Lipschitz conditions hold with same $d_g$ and $d_\phi$ as before up to the constant $C$ and with $a = 1$. Since same bounds apply, the converge of the derivatives follow in the same way as we did to check Assumptions \ref{Ass_AT_equiv} and \ref{Ass_AT_interaction}. Hence, $\sqrt{n}(\hat \theta - \theta_0) \to_d \mathcal{N}(0,V)$.
\end{proof}

\section{Asymptotics for Alternative Implementations}
\label{Alter_implem}
\begin{corollary}
\label{corollary_preliminary_estimator}
Let $\tilde\theta_l\to_p\theta_0$ be a preliminary cross-fitted consistent estimator and define
\[
\begin{aligned}
\hat\psi^p(\theta)
&=\binom n2^{-1}\sum_{l=1}^L\sum_{(i,j)\in I_l}
g(W_i,W_j,\hat\gamma_l,\theta)+\hat\phi^p,\\
\hat\phi^p
&=
\binom n2^{-1}\sum_{l=1}^L\sum_{(i,j)\in I_l}
\phi(W_i,W_j,\hat\gamma_l,\hat\alpha_l(\tilde\theta_l),\tilde\theta_l),\\
\hat\xi_l^p(w_i,w_j)
&=
\phi_{ij}(\hat\gamma_l,\hat\alpha_l(\tilde\theta_l),\tilde\theta_l)
-\phi_{ij}(\gamma_0,\hat\alpha_l(\tilde\theta_l),\tilde\theta_l)-\phi_{ij}(\hat\gamma_l,\alpha_0,\theta_0)
+\phi_{ij}(\gamma_0,\alpha_0,\theta_0).
\end{aligned}
\]
Then the conclusions of Theorems 2 and 3 and
Proposition 2 continue to hold after replacing $\hat\xi_l$ by $\hat\xi_l^p$, replacing $\phi(w_i,w_j,\gamma_0,\hat\alpha_l(\theta_0),\theta_0)$ by $\phi(w_i,w_j,\gamma_0,\hat\alpha_l(\tilde\theta_l),\tilde\theta_l)$, and imposing all Lipschitz conditions stated above only on $g$. For consistency, the
uniform convergence conditions involving $\phi$ are replaced by $\hat\phi^p=o_p(1)$.
\end{corollary}

    \begin{proof}[Proof of Corollary \ref{corollary_preliminary_estimator}]
    Since $\hat\phi^p$ does not depend on $\theta$, the consistency proof is the proof of
    Theorem 2 with all uniform convergence and Lipschitz arguments involving $\phi$ replaced
    by the single requirement $\hat\phi^p=o_p(1)$. The expansion in Lemma 4 is unchanged
    after replacing $\hat\xi_l$ by $\hat\xi_l^p$ and replacing
    $\phi(\cdot,\gamma_0,\hat\alpha_l(\theta_0),\theta_0)$ by
    $\phi(\cdot,\gamma_0,\hat\alpha_l(\tilde\theta_l),\tilde\theta_l)$, because the same
    conditional Markov and product-rate arguments apply term by term. Since the correction
    term is fixed in $\theta$, the Jacobian of $\hat\psi^p(\theta)$ is simply the Jacobian of
    $U_n g(\cdot,\hat\gamma,\theta)$, whose consistency follows from the modified Jacobian
    condition. The conclusion of Theorem 3 then follows from the same mean-value expansion
    and U-statistic CLT. Finally, under the corresponding version of Assumption 7, the proof of
    Proposition 2 is unchanged, because the leave-one-out consistency conditions are imposed
    directly and the Jacobian estimator only involves $g$.
    \end{proof}

\section{Computations for the Gaussian First-Step Error Model}
\label{app_computations_GFSM}
In this section, we provide some calculations for the Gaussian First-Step
Model that are used in the main text. In this DGP, $X_i \sim \mathcal{N}(0,\sigma_X^2)$ and
\[
Y_i = \gamma_0(X_i) + \varepsilon_i, \quad \gamma_0(x) = 5 + x^2,
\]
where $\varepsilon_i \sim \mathcal{N}(0, \sigma_\varepsilon^2)$, $\sigma_\varepsilon^2 = \mathbb{V}ar(\gamma_0(X_i))/StN$, and $StN$ is the Signal to Noise Ratio which is fixed by us. This DGP is called a Gaussian First-Step Error Model because we generate
\[
\hat{\gamma}(X_i)
=
\gamma_0(X_i)+b_n(X_i)+\sigma_nU_i,
\qquad
U_i\sim N(0,1),
\]
where
\[
b_n(x)
=
c_1\bigl(\mathbb E[\gamma_0(X_i)]-\gamma_0(x)\bigr)n^{-\rho_1},
\qquad
\sigma_n^2
=
\frac{c_2}{STN}n^{-\rho_2},
\]
and $c_1,c_2,\rho_1,\rho_2$ are constants. The variables $(X_i,\varepsilon_i,U_i)$ are independent. We first show that

\begin{equation*}
\sqrt{n}\frac{\partial \theta (\gamma _{0})}{\partial \gamma }\left[ \hat{%
\gamma}-\gamma _{0}\right] \sim \mathcal{N}\left( C_{1}n^{0.5-\rho
_{1}},\sigma _{dn}^{2}\right) ,
\end{equation*}%
where $C_{1}=\theta _{0}c_{1}$, $\sigma _{dn}^{2}=C_{2}n^{1-\rho _{2}}$ and $%
C_{2}=c_{2}(\mathbb{E}[2Y_i])^{-2}\theta _{0}^{2}/StN$. Therefore, the bias
coming from this derivative term diverges when $1/4<\rho _{1}<1/2$. This
illustrates the bias problem and the lack of root-$n$ consistency of plug-in
estimators based on non-locally robust moment functions. From our derivative calculations in the proof of Proposition 1, 
\begin{equation*}
\frac{\partial \theta (\gamma _{0})}{\partial \gamma }\left[ h\right]
=-\left( \mathbb{E}[2Y_i]\right) ^{-1}\left\{ \theta _{0}\mathbb{E}%
[h(X_{i})+h(X_{j})]-\mathbb{E}[sgn\left( X_{i}^{2}-X_{j}^{2}\right)
(h(X_{i})-h(X_{j}))]\right\} .
\end{equation*}%
Hence, $\frac{\partial \theta (\gamma _{0})}{\partial \gamma }\left[ \hat{%
\gamma}-\gamma _{0}\right] $ follows a normal distribution with mean 
\begin{equation*}
\left( \mathbb{E}[2Y_i]\right) ^{-1}\mathbb{E}[\left\vert
X_{i}^{2}-X_{j}^{2}\right\vert ]c_{1}n^{-\rho _{1}}=\theta _{0}c_{1}n^{-\rho
_{1}},
\end{equation*}%
and variance $\left( \mathbb{E}[2Y]\right) ^{-2}\sigma _{n}^{2}\theta _{0}^{2}$, where we have used that $\mathbb{E}[sgn\left( X_{i}^{2}-X_{j}^{2}\right) ]=0$. Next, we verify the conditions for the asymptotic theory (i.e. Assumption
7). Straightforward calculations show that the random variable $\Delta
_{0}=\gamma _{0}(X_{i})-\gamma _{0}(X_{j})=X_{i}^{2}-X_{j}^{2}$ is
absolutely continuous with a density $f_{0}(y)=(1/2\sigma _{X}^{2})\exp(-\frac{\left\vert y\right\vert }{%
2\sigma _{X}^{2}})K_{0}\left( (\left\vert y\right\vert/2\sigma
_{X}^{2})\right)$,
where $K_{0}$ is the modified Bessel function of the second kind of order 0.
It is known that as $z\downarrow 0$ $K_{0}\left( z\right) \sim -\ln z-\xi ,$
where $\xi $ is Euler's constant, hence Assumption 7(ii) and (iii) holds
with any $\beta <1.$ This means that we require $||\hat{\gamma}-\gamma
_{0}||=o_{p}(n^{-\rho _{2\beta }}),$ with  $\rho _{2\beta }\geq \frac{2+\beta }{4(1+\beta )}.$
This holds for any $\beta <1,$ so it must be that $\rho _{2\beta } > 3/8$ and $||\hat{\gamma}_l - \gamma_0||^2 = o_p(n^{-3/4})$. Note 
\begin{align*}
    ||\hat{\gamma}_l - \gamma_0||_2^2 &= E[|\mu_n(X_i) + \sigma^2_{n}U_i|^2 | U_i]], \\
    P\left(E[n^{3/4}|\mu_n(X_i) + \sigma^2_{n}U_i|^2 | U_i]  > \varepsilon\right) &\leq n^{3/4} \mathbb{E}[b_n(X_i)^2]/\varepsilon \\
    &\leq  n^{3/4}C(n^{-\rho_2} + n^{-2\rho_1})/\varepsilon,
\end{align*}
so we need $\min(2\rho_1, \rho_2) > 3/4$ which implies $\rho _{1}>3/8=0.375$
and $\rho _{2}>3/4$. Note that these are sufficient conditions. In the
simulations, we consider values below and above the required to
evaluate the robustness of the finite sample performance to our sufficient
conditions.

\section{Empirical Appendix}
\label{app_emp}

This appendix reports additional empirical results for the Inequality of Opportunity (IOp) application. The main text focuses on Conditional Inference Forest (CIF), which is among the most widely used machine-learning methods in the IOp literature. Here we compare results across all ML methods considered in the paper and further investigate the sources of the debiasing correction.

Figure \ref{fig:iop_all} reports country-level relative IOp estimates obtained using all ML methods considered in the analysis. For each method, we report both the plug-in and debiased estimators together with the corresponding confidence intervals. Consistent with the findings in the main text, plug-in estimates exhibit substantial sensitivity to the choice of learner, whereas debiased estimates are considerably more stable across ML methods.

Figure \ref{fig:iop_decomp_all} presents the decomposition of the debiasing correction across ML methods. Recall that the correction consists of two components: an orthogonalization term that removes the first-order effect of estimating the nuisance function and an overfitting correction induced by cross-fitting. The figure illustrates the relative importance of these components across alternative ML learners.

Finally, Figure \ref{fig:iop_bias_n} examines the behavior of the plug-in bias as a function of sample size. The results show that the magnitude of the bias can remain substantial even in large samples and varies considerably across ML methods, reflecting differences in regularization and model-selection behavior. In contrast, the debiased estimator exhibits much greater stability across learners.

\begin{figure}[h!]
    \centering
    \includegraphics[width=\textwidth,height=10cm]{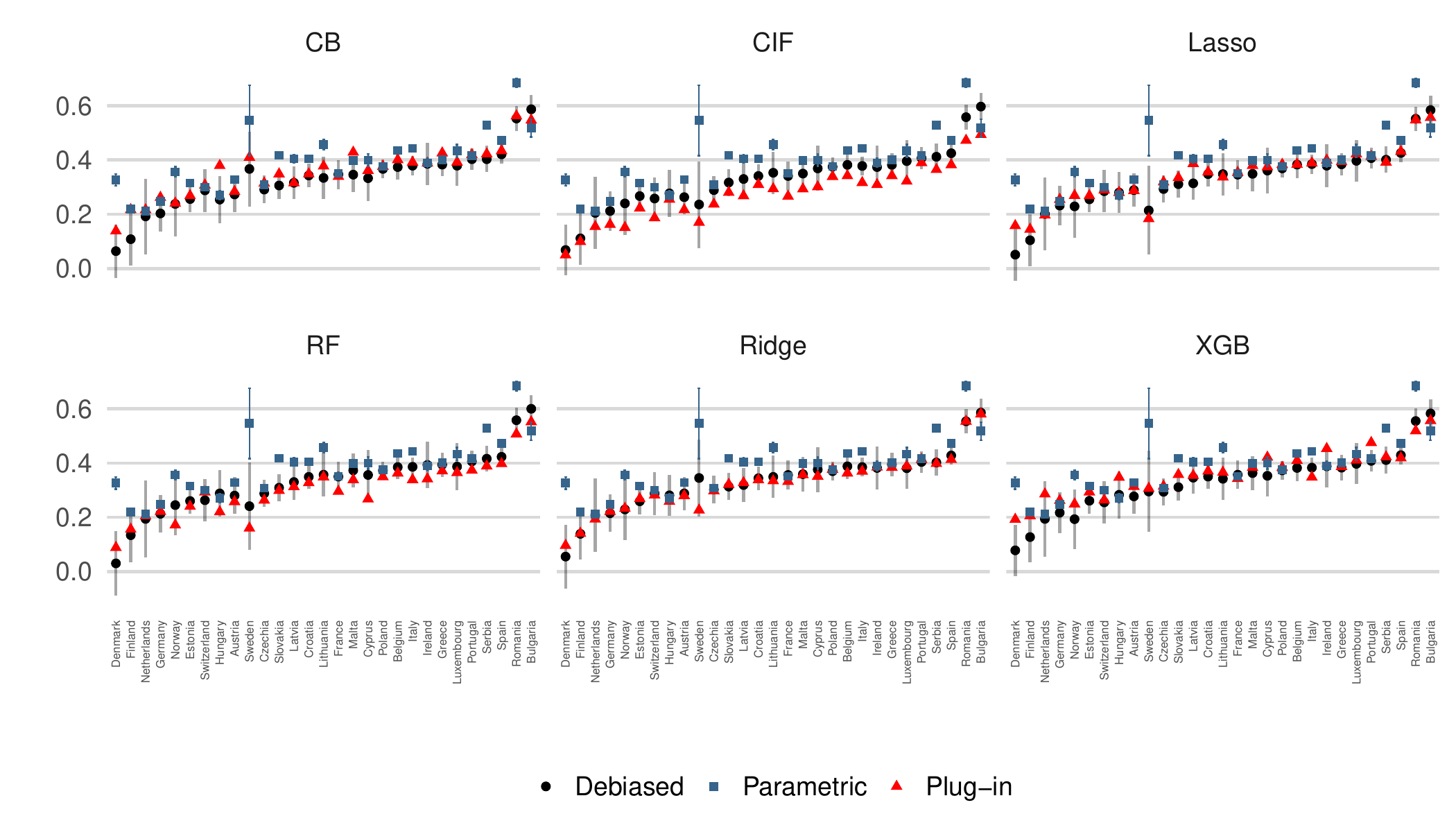}
    \caption{Relative IOp estimates across ML methods.}
    \label{fig:iop_all}
\end{figure}

\begin{figure}[h!]
    \centering
    \includegraphics[width=\textwidth,height=10cm]{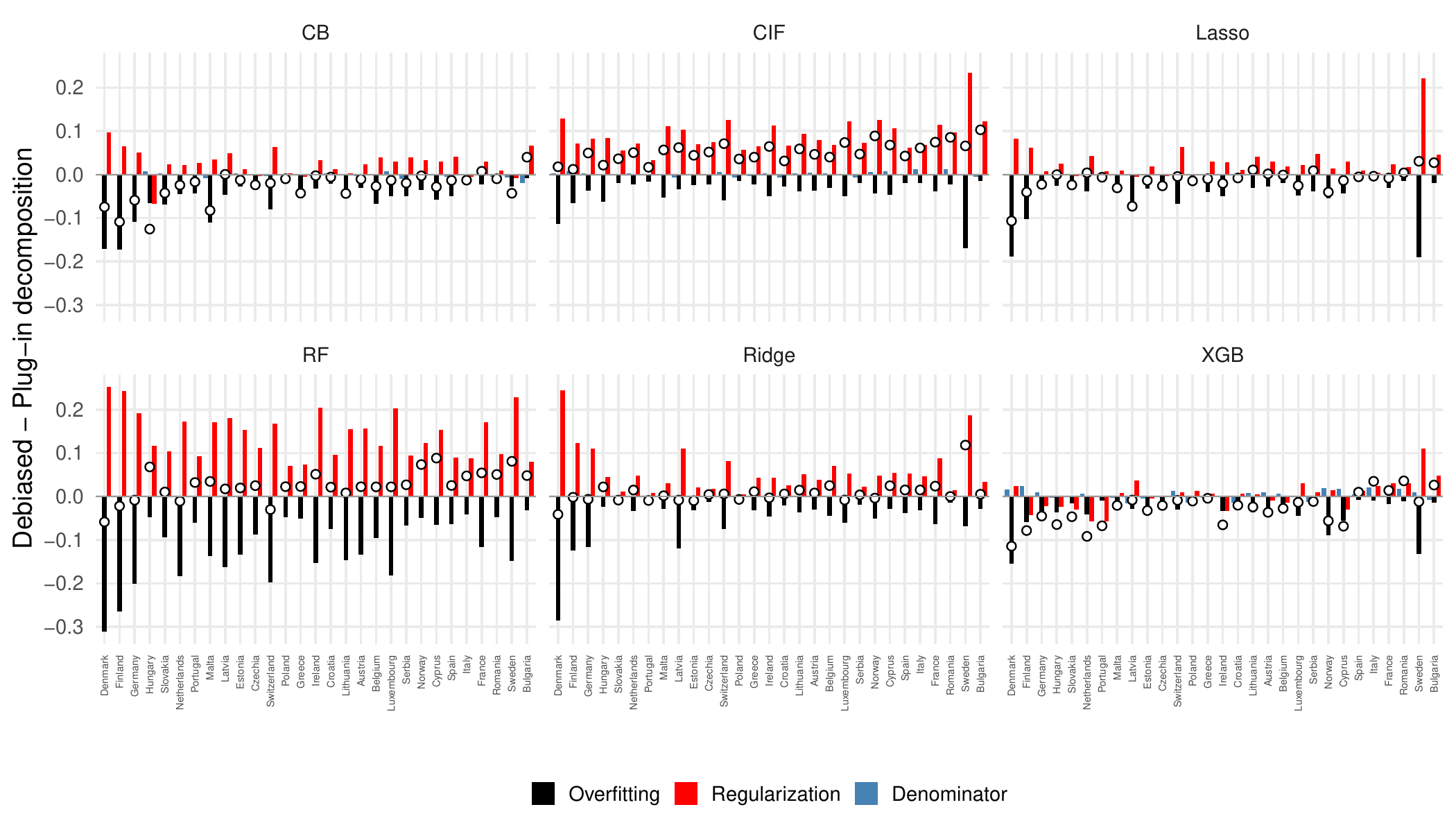}
    \caption{Decomposition of the debiasing correction across ML methods.}
    \label{fig:iop_decomp_all}
\end{figure}

\begin{figure}[h!]
    \centering
    \includegraphics[width=\textwidth]{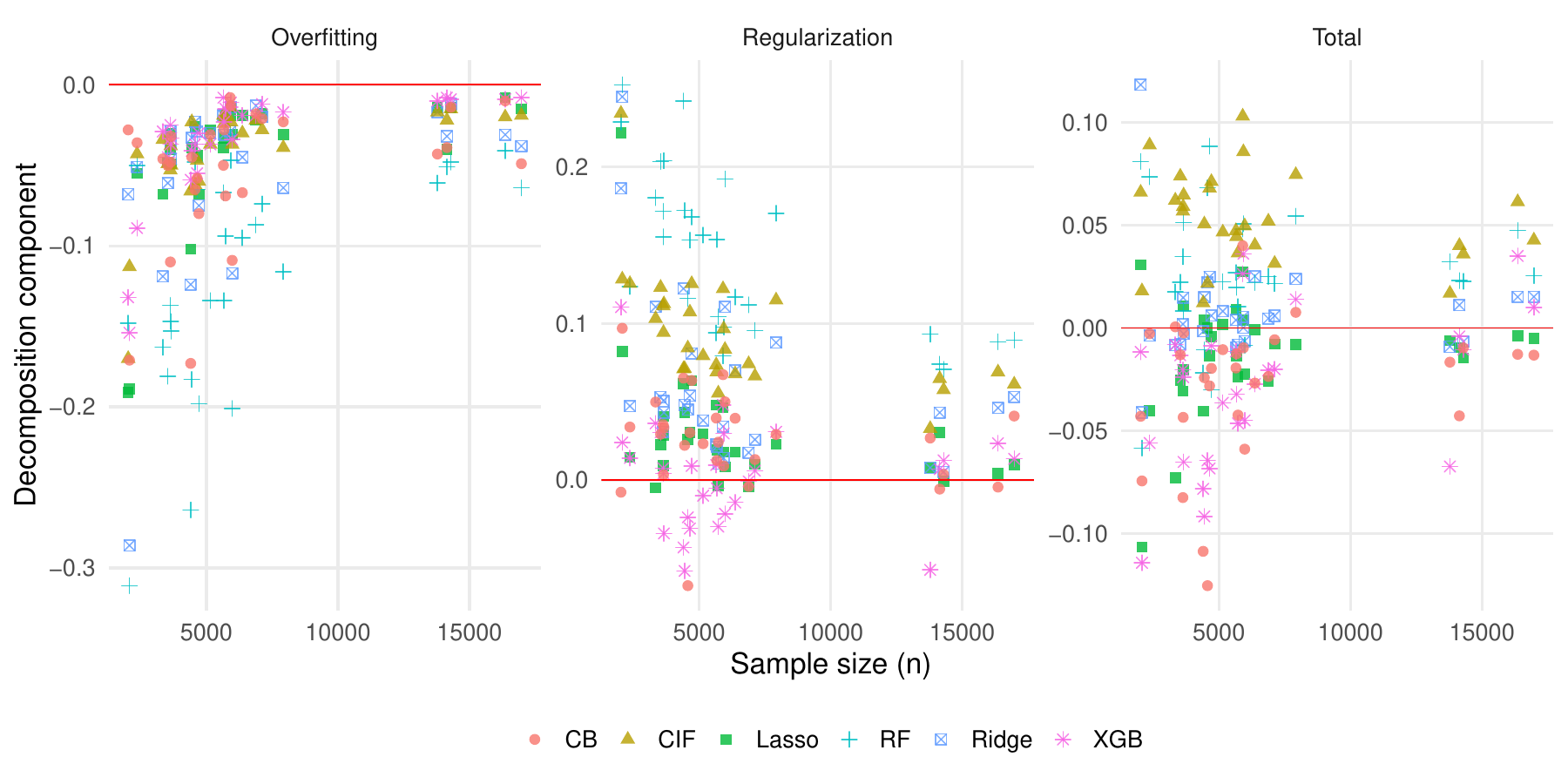}
    \caption{Bias of the plug-in estimator as a function of sample size across ML methods.}
    \label{fig:iop_bias_n}
\end{figure}

\section{Further Applications}
\label{app_FurtherApp}

\subsection{Gini with missing data}

Suppose we have income $Y_i$ and other observables $X_i$. We observe $X_i$ for $i = 1,...,n$ but we only observe $Y_i$ for $i \in A \subset \{1,...,n\}$. For all $i = 1,...,n$, we have a binary variable $O_i$ which is $1$ if $Y_i$ is observed and $0$ otherwise. We let $W_i = (Y_i, O_i, X_i')'$. We are interested in the Gini of $Y_i$
\[
\theta_0 = \frac{\mathbb{E}[|Y_i - Y_j|]}{\mathbb{E}[Y_i + Y_j]}.
\]
If income observations are not randomly missing, then estimating the Gini only with the observed incomes is not justified. What we can identify is $\mathbb{E}[|Y_i - Y_j| \, | O_i = 1, O_j = 1]$ and $\mathbb{E}[Y_i | O_i = 1]$ not $\theta_0$. To circumvent this, we assume that income is Missing at Random. For identification, we also assume a typical common support assumption. We let $p_0(x) = P(O_i = 1 | X_i = x)$.
\begin{assumption}[MAR and Common support]
    \begin{align*}
        &(i) \, Y_i \perp O_i \mid X_i, \\
        &(ii) \, \varepsilon \leq p_0(X_i) \leq 1- \varepsilon \text{ for some } \varepsilon > 0.
    \end{align*}
\end{assumption}
MAR can be relaxed to just mean independence. Under MAR the Gini is identified as
\[
\theta_0 = \frac{\theta_{01}}{\theta_{02}}, \text{ where } \theta_{01} = \mathbb{E}\biggl[|Y_i - Y_j| \frac{O_i O_j}{p_0(X_i)p_0(X_j)}\biggr] \text{ and } \theta_{02} = 2\mathbb{E}\biggl[Y_i \frac{O_i}{p_0(X_i)}\biggr],
\]
and $p_0(x) = P(O_i = 1 | X_i = x)$. Let $\delta_0(X_i,X_j) = \mathbb{E}[|Y_i - Y_j| \, | X_i,X_j]$, note that it is identified since under MAR: $\mathbb{E}[|Y_i - Y_j| \, | X_i,X_j, O_i = 1, O_j = 1] = \delta_0(X_i,X_j)$. As before $\gamma_0(X_i) = \mathbb{E}[Y_i | X_i]$. $\theta_0$ is a ratio of semiparametric U-statistics so it falls into our framework. The next result provides an expression for $\theta_0$ as a ratio of locally robust moments.\footnote{Proofs available upon request.} 
\begin{proposition}
    Under MAR and Common Support, we have the following locally robust functions for the numerator and denominator
    \begin{align*}
        \psi_1(W_i,\delta_0,\alpha_{01},\theta_1) &= \delta_0(X_i,X_j) + \alpha_{01}(X_i,X_j,O_i,O_j)(|Y_i - Y_j| - \delta_0(X_i,X_j)) - \theta_1, \\
        \psi_2(W_i,\gamma_0,\alpha_{02},\theta_2) &= \gamma_0(X_i) + \alpha_{02}(X_i,O_i)(Y_i - \gamma_0(X_i)) - \theta_2,
    \end{align*}
    where $\alpha_{01}(X_i,X_j,O_i,O_j) = O_iO_j/(p_0(X_i)p_0(X_j))$ and $\alpha_{02}(X_i,O_i) = O_i/p_0(X_i)$.
\end{proposition}
As in the case of the Augmented Inverse Propensity weighting (AIPW) we recover a double robustness property that makes us robust to misspecification of either nuisance parameter.
\begin{proposition}[Double Robustness]
\label{prop_MD_DR}
\begin{align*}
    \psi_1(W_i, \delta, \alpha_1,\theta_1) &= \theta_{01} - \theta_1 + \mathbb{E}[(\delta(X_i,X_j) - \delta_0(X_i,X_j))(\alpha_{01}(X_i,X_j) - \alpha_1(X_i,X_j))], \\
    \psi_2(W_i, \gamma, \alpha_2,\theta_2) &= \theta_{02} - \theta_2 + \mathbb{E}[(\gamma(X_i) - \gamma_0(X_i))(\alpha_{02}(X_i) - \alpha_2(X_i))].
\end{align*}
\end{proposition}
Hence, a locally robust estimator is given by 
\[
\hat{\theta} = \frac{\binom{n}{2}^{-1} \sum_{l=1}^L \sum_{(i,j) \in I_l}\biggl[\hat{\delta}_l(X_i,X_j) + \frac{O_i O_j}{\hat{p}_l(X_i)\hat{p}_l(X_j)}(|Y_i - Y_j| - \hat{\delta}_l(X_i,X_j))\biggr]}{ \frac{2}{n} \sum_{k  =1}^K \sum_{i \in C_k} \biggl[\hat{\gamma}_k(X_i) + \frac{O_i}{\hat{p}_k(X_i)}(Y_i - \hat{\gamma}_k(X_i)) \biggr]},
\]
where for the numerator a traditional cross-fitting with the sample split in $C_1,...,C_K$ is used. As with the usual AIPW approach for missing data and causal inference, it is easy to check (particularly using Proposition \ref{prop_MD_DR}) that all the general asymptotic assumptions hold if we have mean-square consistent estimators $\hat{\delta}$, $\hat{p}$ and $\hat{\gamma}$ that converge at at least $n^{-1/4}$.

\begin{remark}
    Although we have focused on the Gini, we can apply this procedure to any U-statistic by replacing $|Y_i-Y_j|$ by some known function $g(W_i,W_j)$. For example, if we let $g(W_i,W_j) = sgn(Y_i - Y_j)sgn(X_i - X_j)$, where $Y_i$ is income and $X_i$ parental income, we get a popular measure of Intergenerational Mobility measured by the Kendall-$\tau$.
\end{remark}

\begin{remark}
    In the next section we document how our general theory applies to causal U-statistics. In particular, how to estimate the counterfactual coefficient, say Gini or Kendall-$\tau$ coefficient, under a given treatment allocation. For example one could answer: What would be the Gini coefficient if a treatment is given only to females with no university degree? The estimators suggested for general causal U-statistics are also doubly robust.
\end{remark}

\subsection{Causal U-statistics}

\label{app_causal_ustats}
Our theory generalizes causal estimands proposed in \cite{mao2018causal} and \cite{yin2024highly}. Let $W_{i} = (Y_{i}, D_{i}, X_{i})$ and $Z_{i} = (Y_{i}(1), Y_{i}(0))$ where
$Y_{i} = D_{i} Y_{i}(1) + (1-D_{i})Y_{i}(0)$. Consider a family of causal estimands where $g$ is a known function and $\gamma^{(1)}$ and $\gamma^{(0)}$ are potential nuisance parameters (e.g. $\mathbb{E}[Y_i(1)| X_i]$ and $\mathbb{E}[Y_i(0)| X_i]$)
\[
\theta_0(\pi) = \mathbb{E}\left[ \sum_{(a,b) \in \{0,1\}^2} g(Y_i(a),X_i,Y_j(b),X_j,\gamma^{(a)},\gamma^{(b)})\pi_{ab}(X_i,X_j)\right] \text{ with } \pi_{ab}: \mathcal{X} \times \mathcal{X} \to \{0,1\}.
\]
For a given $\pi_{ab}$ we interpret $\theta(\pi)$ as a causal estimand. A trivial example is $g(y_i(1),y_i(0)) = y_i(1) - y_i(0)$ which gives the Average Treatment Effect (ATE). Our theory is better motivated for nonlinear $g$ where typical methods cannot be used, for example, the Wilcoxon statistic is based on $g(y_i(a),y_j(b)) = 1(y_i(a) \geq  y_j(b))$. This is important when most treated fare worse but the ATE is positive due to a few treated that fare much better.

Interesting causal estimands arise depending on the choice of $\pi_{ab}$. We would recommend to let $\pi: \mathcal{X} \to \{0,1\}$ and $\pi_{ab}(X_i,X_j) = 1(\pi(X_i) = a, \pi(X_j) = b)$. For instance, suppose that $\pi(X_i) = 1$ only for female high school dropouts and zero otherwise. Then we can estimate the counterfactual value whenever only female high school dropouts are treated. If we are interested in the Gini Mean Difference (the Gini coefficient follows directly), we have
\[
\theta_0(\pi) = \mathbb{E}\left[ \sum_{(a,b) \in \{0,1\}^2} |Y_i(a) - Y_j(b)|1(\pi(X_i) = a, \pi(X_j) = b)\right],
\]
hence we can estimate the counterfactual inequality whenever certain groups are treated. Likewise, for intergenerational mobility measured by the Kendall-$\tau$ coefficient we have
\[
    g(y_i(a),x_i,y_j(b),x_j,\gamma^{(a)},\gamma^{(b)}) = sgn(Y_i(a) - Y_j(b))sgn(X_{1i} - X_{1j}),
\]
where $Y_i(j)$ for $j = 0,1$ are potential permanent incomes and $X_{1i}$ is the permanent income of the parent. For IOp let $\gamma^{(k)}(x_i) = \mathbb{E}[Y_i(k)|X_i = x_i]$ for $k = 0,1$ and set
\[
    g(y_i(a),x_i,y_j(b),x_j,\gamma^{(a)},\gamma^{(b)}) = |\gamma^{(a)}(x_i) - \gamma^{(b)}(x_j)|.
\]
Under selection on observables and common support assumptions, \cite{terschuur2023} provides doubly robust moments for such a family of causal U-statistics. While the methodological innovation in \cite{terschuur2023} is in the estimation of the optimal allocation rule $\pi$ and its statistical properties, the results on estimating the causal estimands for a given $\pi_{ab}$ rely on the general theory in this paper.

\section{Cross-Fitting}
\label{app_CF}

\begin{proof}[Proof of Proposition \ref{prop_UCF}]
    For (i) notice that $\hat{\theta}$ solves $\sum_{l=1}^L
\sum_{(i,j)\in I_l}
\psi(W_i,W_j,\hat{\gamma}_l,\hat{\alpha}_l,\theta)=0$. Hence, all blocks $I_l$ with $l = 1,..,L$ appear once in the sum. Also, the sum for a fixed $l$, sums once across all pairs in $I_l$. Hence, if $I_1,...,I_L$ forms a partition of $\mathcal{P}$, then each pair in $\mathcal{P}$ appears exactly once in the cross-fitted sum. The construction of $I_l$, $l =1,...,L$ ensures that it is a partition of $\mathcal{P}$. (ii) holds by the definition of $(\hat{\gamma}_l, \hat{\alpha}_l)$. (iii) For triangles, $|S_l| = |G_u|$ since a triangle takes the form $\{(i,j) \in \mathcal{P}: i < j, i \in G_u, j \in G_u\}$. $|G_u| = n/T$, so a triangle leaves $n - n/T = n(T-1)/T$ observations for nuisance estimation. For the rectangles, we use $|G_u|/2$ observations for one element of the pair and $|G_v|/2$ for the other element, hence $|S_l| = |G_u|/2 + |G_v|/2 = n/2T + n/2T = n/T$ and hence $n(T-1)/T$ observations are left for nuisance estimation.(iv) if we set $T = n/2$, then each pair in $\mathcal{P}$ form its own block and leaves $n-2$ observations for estimation.
\end{proof}

The main task is to partition the set of $n(n-1)/2$ pairs $\{(i,j): i < j, i,j = 1,...,n\}$ into $L$ sets where the number of observations left out in each set is the same or almost the same. Here we provide a general version of Algorithm \ref{alg:cf_u} which achieves this up to a few observations for general $n$. Algorithm \ref{alg:cf_u_general_min} leaves approximately $n - \lfloor n/T \rfloor$ observations for nuisance parameter estimation in each block (exactly if $n$ is divisible by $2 \cdot T$). 

\begin{algorithm}[H]
\small
\caption{Cross-Fitting for U-Statistics (General Case)}
\label{alg:cf_u_general_min}
\begin{algorithmic}[1]

\Require Sample $\{W_i\}_{i=1}^n$, integer $T$ with $2 \leq T \le n/2$
\Ensure Cross-fitted estimator $\hat\theta$

\State Compute $\texttt{base} \gets \lfloor n/T \rfloor$ and $\texttt{rem} \gets n \bmod T$.
\State Partition $\{1,\dots,n\}$ into $T$ parts $G_1,\dots,G_T$ such that
\Statex \hspace{1.5em} $|G_t|=\texttt{base}+1$ for $t\le \texttt{rem}$ and $|G_t|=\texttt{base}$ for $t>\texttt{rem}$.

\State Initialize block index $l \gets 0$.

\For{$t=1$ to $T$}  \Comment{Within-fold blocks}
    \State $l \gets l+1$
    \State $I_l \gets \{(i,j): i<j,\ i,j \in G_t\}$
\EndFor

\For{$1 \le u < v \le T$}  \Comment{Between-fold blocks}
    \State Split $G_u$ into two (almost) equal subsets using the fixed order:
    \Statex \hspace{1.5em} $A_{uv} \gets \text{first }\lfloor |G_u|/2\rfloor \text{ elements of }G_u$, \;
    $A'_{uv} \gets G_u \setminus A_{uv}$.
    \State Split $G_v$ analogously:
    \Statex \hspace{1.5em} $B_{uv} \gets \text{first }\lfloor |G_v|/2\rfloor \text{ elements of }G_v$, \;
    $B'_{uv} \gets G_v \setminus B_{uv}$.

    \For{each combination $(A,B) \in \{A_{uv},A'_{uv}\} \times \{B_{uv},B'_{uv}\}$}
        \State $l \gets l+1$
        \State $I_l \gets \{(i,j): i \in A,\ j \in B\}$
    \EndFor
\EndFor

\State Let $L \gets l$.

\For{$l=1$ to $L$}
    \State Let $S_l$ be the set of unique indices appearing in $I_l$.
    \State Estimate $(\hat\gamma_l,\hat\alpha_l)$ using $\{W_i : i \notin S_l\}$.
\EndFor

\State Return $\hat\theta$ solving $\sum_{l=1}^L
\sum_{(i,j)\in I_l}
\psi(W_i,W_j,\hat{\gamma}_l,\hat{\alpha}_l,\theta)=0.$
\end{algorithmic}
\end{algorithm}

\noindent  In Figure \ref{fig_CF2}, we show an example with $n$ not divisible by $2\cdot T$. Now we prove a Lemma about the cross-fitting algorithm that is used in the proofs of the asymptotic theory. 

\begin{figure}[h]
\centering
\includegraphics[scale = 0.4]{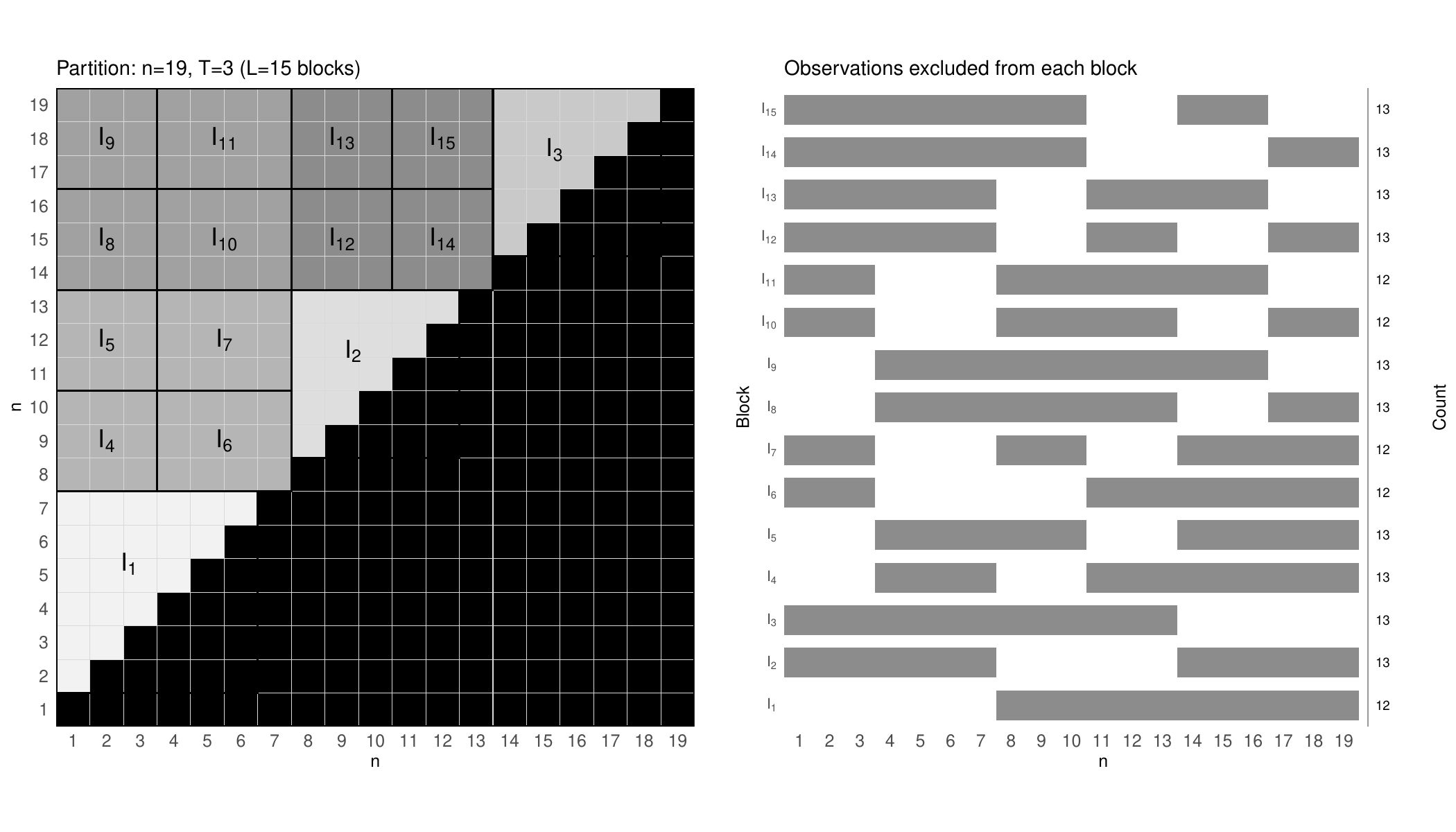}
\caption{\scriptsize Cross-fitting partition}%
\label{fig_CF2}%
\end{figure}

\begin{lemma}
\label{lemma_CF}
    For Algorithm \ref{alg:cf_u_general_min}, we have that for all $l = 1,...,L$, $|S_l|/|I_l| \to 0$ and $n \cdot |S_l|/|I_l| \to c < \infty$ as $n \to \infty$.
\end{lemma}
\begin{proof}
    Note that $T$ is fixed. If $l$ is such that $I_l$ is a triangle, for some $u = 1,...,T$ 
    \[
    |S_l|/ |I_l| = \frac{2 |S_l|}{|S_l| (|S_l| - 1)} = \frac{2}{(|S_l| - 1)} \to 0,
    \]
    and
    \[
    |S_l| = |G_u| = (\lfloor n/T \rfloor + 1(u \leq n \bmod T),
    \]
    so
    \[
    n|S_l|/ |I_l| = \frac{2n}{(\lfloor n/T \rfloor) - 1 + 1(u \leq n \bmod T) } \to 2T.
    \]
    If $l$ is such that $I_l$ is a rectangle, then $I_l=\{(i,j): i\in A,\ j\in B\}$ for some
    $A\subseteq G_u$ and $B\subseteq G_v$ with $u\neq v$, where $A$ and $B$ are one of the
    two halves constructed in Algorithm \ref{alg:cf_u_general_min}. Let
    $a:=|A|$ and $b:=|B|$. Then $|I_l| = ab$ and $|S_l| = a+b$. Therefore
    \[
    \frac{|S_l|}{|I_l|}=\frac{a+b}{ab}=\frac{1}{a}+\frac{1}{b}
    \le \frac{2}{\min(a,b)}\to 0,
    \]
    because $|G_u|\to\infty$, $|G_v|\to\infty$ as $n\to\infty$, and
    $\min(a,b) \geq \min( \lfloor |G_u|/2\rfloor,\lfloor |G_v|/2\rfloor) \to \infty$. Next,
    \[
    n\frac{|S_l|}{|I_l|}
    = n\Big(\frac{1}{a}+\frac{1}{b}\Big).
    \]
    Write $|G_u|=\lfloor n/T\rfloor + \mathbf 1(u\le n\bmod T)$ and similarly for $|G_v|$.
    Since $a = |G_u|/2 + O(1)$ and $b=|G_v|/2+O(1)$, we have
    \[
    \frac{n}{a}=\frac{n}{n/(2T)+O(1)}\to 2T,\qquad
    \frac{n}{b}=\frac{n}{n/(2T)+O(1)}\to 2T,
    \]
    hence $n|S_l|/|I_l|\to 4T.$ Hence, the lemma holds with $c=4T$.
\end{proof}

\section{Debiasing other inequality measures}
\label{app_Othermeasures}
In this section, we provide valid inference theory for a broad class of inequality measures applied to estimated objects. We consider this to be a contribution in its own right. We focus on the Atkinson index proposed in \cite{atkinson1970measurement} and on Generalized Entropy (see \cite{theil1967economics} and \cite{shorrocks1980class}).
\subsection{Atkinson index and Empirical Welfare}
The Atkinson index features an inequality aversion parameter $\varepsilon \geq 0$ 
\begin{equation}
    A(\varepsilon) =
    \begin{cases}
        1 -  \frac{\mathbb{E} \left[ Y_i ^{1-\varepsilon} \right]^{\frac{1}{1-\varepsilon}}}{\mathbb{E}[Y_i]} & \text{if } \varepsilon \neq 1,\infty \\
        1 - \frac{\exp(\mathbb{E}[\ln Y])}{\mathbb{E}[Y_i]} & \text{if } \varepsilon = 1, \\
        1 - \frac{\inf\{y:f(y) > 0\}}{\mathbb{E}[Y_i]} & \text{if } \varepsilon = \infty.
    \end{cases}
\end{equation}
$\varepsilon = 1$ and $\varepsilon = \infty$ are limiting cases, where $f(y)$ is the density of $Y_i$. We now focus on applying the Atkinson index to the estimated quantities $\hat{\gamma}(X_i)$ considered in the main text. Let us start with the case of a finite $\varepsilon$ different from 1. Our parameter of interest is:
\begin{align*}
    \theta_0(\varepsilon) &= 1 - \frac{\theta_{01}(\varepsilon)^{\frac{1}{1-\varepsilon}}}{\theta_{02}}, \quad \theta_{01}(\varepsilon) = \mathbb{E} [\gamma_0(X_i)^{1-\varepsilon}], \quad \theta_{02} = \mathbb{E}[\gamma_0(X_i)].
\end{align*}
The debiasing strategy involves constructing locally robust moments for both $\theta_{01}(\varepsilon)$ and $\theta_{02}$. For $\theta_{02}$, the locally robust moment is: $\mathbb{E}[Y_i - \theta_{02}]$, i.e. no need to use any ML, the average of $Y_i$ works. To debias $\theta_{01}(\varepsilon)$, following \cite{newey1994asymptotic}, we can write:
\begin{align*}
    \frac{d}{d\tau}\mathbb{E}[\gamma_\tau(X_i)^{1-\varepsilon}] &= \frac{d}{d\tau}\mathbb{E}[(1-\varepsilon)\gamma_0(X_i)^{-\varepsilon}\gamma_\tau(X_i)] = \int (1-\varepsilon)\gamma_0(x)^{-\varepsilon}(y - \gamma_0(x)) H(dw).
\end{align*}
Hence, an orthogonal moment for $\theta_{01}(\varepsilon)$ is:
\[
\psi(w,\gamma_0,\alpha_0,\theta,\varepsilon) = \gamma(x)^{1-\varepsilon} + \alpha_0(x,\varepsilon)(y- \gamma_0(x)) - \theta,
\]
where $\alpha_0(x,\varepsilon) = (1-\varepsilon)\gamma_0(x)^{-\varepsilon}$. For regular identification we require that $\mathbb{E}[\gamma_0(X_i)^{-2\varepsilon}] < \infty$, which is a condition on the tail near zero of the distribution of $\gamma_0(X_i)$ and becomes more stringent the higher the inequality aversion $\varepsilon$. A debiased estimator of $\theta_{01}$ is then given by:
\[
\hat{\theta}_1(\varepsilon) = \frac{1}{n}\sum_{l = 1}^L \sum_{i \in C_l} \left( \hat{\gamma}_l(X_i)^{1-\varepsilon} + \hat{\alpha}_l(X_i,\varepsilon)(Y_i - \hat{\gamma}_l(X_i)) \right), \quad \hat{\alpha}_l(X_i,\varepsilon) = (1-\varepsilon)\hat{\gamma}_l(X_i)^{-\varepsilon},
\]
where for the crossfitting we have divided the sample in $L$ folds: $C_1,...,C_L$. Then, letting $\hat{\theta}_2 = \frac{1}{n}\sum_{i=1}^n Y_i$, we can construct the debiased Atkinson index as $\hat{\theta}(\varepsilon) = 1 - \hat{\theta}_1(\varepsilon)^{\frac{1}{1-\varepsilon}}/\hat{\theta}_2$. Under regularity conditions and similar assumptions as in the main text and letting $\pi = 1/(1-\varepsilon)$, one can use the Delta Method to show that:
\[
\sqrt{n}(\hat{\theta}(\varepsilon) - \theta_0(\varepsilon)) \to_d N\left(0, \left(\frac{\pi \theta_{01}^{\pi - 1}}{\theta_{02}}\right)^2 \Sigma_{11} + \frac{\theta_{01}^{2\pi}}{\theta_{01}^4}\Sigma_{22} - 2\frac{\pi \theta_{01}^{2\pi - 1}}{\theta_{02}^3}\Sigma_{12}\right),
\]
where $\Sigma_{11} =\mathbb{E}\left[\psi(W_i,\gamma_0,\alpha_0,\theta_{01},\varepsilon)^2\right]$, $\Sigma_{22} = \mathbb{E}\left[(Y_i - \theta_{02})^2\right]$, $\Sigma_{12} = \mathbb{E}\left[\psi(W_i,\gamma_0,\alpha_0,\theta_{01},\varepsilon)(Y_i - \theta_{02})\right].$
For the case of $\varepsilon = 1$, we need to construct a locally robust estimator of $\theta_{01}(1) = \mathbb{E}[\ln \gamma_0(X_i)]$, again using the arguments in \cite{newey1994asymptotic} and the chain rule,
\begin{align*}
    \frac{d}{d\tau} \mathbb{E}\left[\ln \gamma_\tau(X_i)\right] &= \frac{d}{d\tau}\mathbb{E}\left[\frac{1}{\gamma_0(X_i)}\gamma_\tau(X_i)\right] \\
    &= \int \frac{1}{\gamma_0(x)}(y - \gamma_0(x)) H(dw),
\end{align*}
hence, an orthogonal moment for $\mathbb{E}[\ln \gamma_0(X_i)]$ is:
\[
\psi(w,\gamma_0,\alpha_0,\theta,1) = \ln \gamma_0(x) + \alpha_0(x)(y - \gamma_0(x)) - \theta, \quad \alpha_0(x) = \frac{1}{\gamma_0(x)},
\]
A debiased estimator of $\mathbb{E}[\ln \gamma_0(X_i)]$ is then given by:
\[
\hat{\theta}_1(1) = \frac{1}{n}\sum_{l = 1}^L \sum_{i \in C_l} \left( \ln \hat{\gamma}_l(X_i) + \hat{\alpha}_l(X_i)(Y_i - \hat{\gamma}_l(X_i)) \right), \quad \hat{\alpha}_l(X_i) = \frac{1}{\hat{\gamma}_l(X_i)}.
\]
Then, we can construct the debiased Atkinson index for $\varepsilon = 1$ as $\hat{\theta}(1) = 1 - \frac{\exp(\hat{\theta}_1(1))}{\hat{\theta}_2}.$ Under regularity conditions and assumptions in the main text, by Delta Method:
\[
\sqrt{n}(\hat{\theta}(1) - \theta_0(1)) \to_d N\left(0, \frac{\exp(2\theta_{01})}{\theta_{02}^2}\Sigma_{11} + \frac{\exp(2\theta_{01})}{\theta_{02}^4}\Sigma_{22} - 2\frac{\exp(2\theta_{01})}{\theta_{02}^3}\Sigma_{12}\right),
\]
where the variance terms are as before but with $\varepsilon=1$. The case of infinite inequality aversion $\varepsilon = \infty$ is hard to deal with even when applied to observed outcomes. In the case of estimated objects, the problem becomes harder due to the lack of differentiability of the $\inf$ functional. An important application is to the measurement of welfare based on Constant Relative Risk Aversion (CRRA) utility functions which is a cornerstone of welfare economics.
\subsection{Generalized Entropy}
The generalized entropy index, parametrized by a scalar $\beta$, is defined as follows:
\[
GE(\beta) =
\begin{cases}
    \frac{1}{\beta(\beta-1)}\left( \frac{\mathbb{E}[Y_i^\beta]}{\mathbb{E}[Y_i]^\beta} - 1\right) & \text{if } \beta \neq 0,1 \\
    \ln \mathbb{E}[Y_i] -  \mathbb{E}[\ln Y_i] & \text{if } \beta = 0, \\
    \frac{\mathbb{E}[Y_i \ln(Y_i)] - \ln\mathbb{E}[Y_i]}{\mathbb{E}[Y_i]} & \text{if } \beta = 1.
\end{cases}
\]
$\beta = 0$ and $\beta = 1$ are known as the Mean Logarithmic Deviation and the Theil index respectively. $\beta = 2$ is half the squared coefficient of variation. Let $\theta_0(\beta)$ be the generalized entropy index applied to our first-step estimands $\gamma_0(X_i)$. For $\beta \neq 0,1$, we need locally robust moments for $\theta_{01}(\beta) = \mathbb{E}[\gamma_0(X_i)^\beta]$ and $\theta_{02} = \mathbb{E}[\gamma_0(X_i)]$. Employing the same arguments as in the Atkinson case, we get a locally robust moment for $\theta_{01}(\beta)$ as:
\[
\psi(w,\gamma_0,\alpha_0,\theta,\beta) = \gamma_0(x)^\beta + \alpha_0(x,\beta)(y - \gamma_0(x)) - \theta, \quad \alpha_0(x,\beta) = \beta\gamma_0(x)^{\beta-1}.
\]
An estimator robust to first-stage bias is given by:
\[
\hat{\theta}_1(\beta) = \frac{1}{n}\sum_{l = 1}^L \sum_{i \in C_l} \left( \hat{\gamma}_l(X_i)^\beta + \hat{\alpha}_l(X_i,\beta)(Y_i - \hat{\gamma}_l(X_i)) \right), \quad \hat{\alpha}_l(X_i,\beta) = \beta\hat{\gamma}_l(X_i)^{\beta-1}.
\]
Then, we can construct the debiased generalized entropy index as:
\[
\hat{\theta}(\beta) = \frac{1}{\beta(\beta-1)}\left( \frac{\hat{\theta}_1(\beta)}{\hat{\theta}_2^\beta} - 1\right), \quad \hat{\theta}_2 = \frac{1}{n}\sum_{i=1}^n Y_i.
\]
Under regularity conditions and similar assumptions as in the main text, by Delta Method:
\[
\sqrt{n}(\hat{\theta}(\beta) - \theta_0(\beta)) \to_d N\left(0, \frac{1}{\beta^2 (1 - \beta)^2 \theta_{02}^{2\beta}} \Sigma_{11}
+ \frac{\theta_{01}^2}{(1 - \beta)^2 \theta_{02}^{2\beta + 2}} \Sigma_{22}
- \frac{2 \theta_{01}}{\beta (1 - \beta)^2 \theta_{02}^{2\beta + 1}} \Sigma_{12}\right),
\]
where $\Sigma_{11}=\mathbb{E}\left[\psi(W_i,\gamma_0,\alpha_0,\theta_{01},\beta)^2\right]$, $\Sigma_{22} = \mathbb{E}\left[(Y_i - \theta_{02})^2\right]$, $    \Sigma_{12} = \mathbb{E}\left[\psi(W_i,\gamma_0,\alpha_0,\theta_{01},\beta)(Y_i - \theta_{02})\right]$. For the case of $\beta = 0$, we need to construct a locally robust estimator of $\theta_{01}(0) = \mathbb{E}[\ln \gamma_0(X_i)]$. Again, using the arguments in \cite{newey1994asymptotic} and the chain rule, we can write:
\begin{align*}
    \frac{d}{d\tau} \mathbb{E}\left[\ln \gamma_\tau(X_i)\right] &= \frac{d}{d\tau}\mathbb{E}\left[\frac{1}{\gamma_0(X_i)}\gamma_\tau(X_i)\right] = \int \frac{1}{\gamma_0(x)}(y - \gamma_0(x)) H(dw),
\end{align*}
hence, an orthogonal moment for $\mathbb{E}[\ln \gamma_0(X_i)]$ is:
\[
\psi(w,\gamma_0,\alpha_0,\theta,0) = \ln \gamma_0(x) + \alpha_0(x)(y - \gamma_0(x)) - \theta, \quad \alpha_0(x) = \frac{1}{\gamma_0(x)}.
\]
A debiased estimator of $\mathbb{E}[\ln \gamma_0(X_i)]$ is then given by:
\[
\hat{\theta}_1(0) = \frac{1}{n}\sum_{l = 1}^L \sum_{i \in C_l} \left( \ln \hat{\gamma}_l(X_i) + \hat{\alpha}_l(X_i)(Y_i - \hat{\gamma}_l(X_i)) \right), \quad \hat{\alpha}_l(X_i) = \frac{1}{\hat{\gamma}_l(X_i)}.
\]
A debiased generalized entropy index for $\beta = 0$ is $\hat{\theta}(0) = \ln \hat{\theta}_2 - \hat{\theta}_1(0)$. Under regularity conditions and similar assumptions as in the main text, by Delta Method:
\[
\sqrt{n}(\hat{\theta}(0) - \theta_0(0)) \to_d N\left(0, \Sigma_{11} + \frac{\Sigma_{22}}{\theta_{02}^2} - 2\frac{\Sigma_{12}}{\theta_{02}}\right),
\]
where the variance terms are as before but with $\beta = 0$.
%\begin{align*}
%    \Sigma_{11} &= \mathbb{E}\left[\psi(W_i,\gamma_0,\alpha_0,\theta_{01},0)^2\right], \\
%    \Sigma_{22} &= \mathbb{E}\left[(Y_i - \theta_{02})^2\right], \\
%    \Sigma_{12} &= \mathbb{E}\left[\psi(W_i,\gamma_0,\alpha_0,\theta_{01},0)(Y_i - \theta_{02})\right].
%\end{align*}
Finally, for the case of $\beta = 1$, we need to construct a locally robust estimator of $\theta_{01}(1) = \mathbb{E}[\gamma_0(X_i) \ln \gamma_0(X_i)]$. Again, using the arguments in \cite{newey1994asymptotic} and the chain rule, we can write:
\begin{align*}
    \frac{d}{d\tau} \mathbb{E}\left[\gamma_\tau(X_i) \ln \gamma_\tau(X_i)\right] &= \frac{d}{d\tau}\mathbb{E}\left[\gamma_0(X_i) \ln \gamma_\tau(X_i) \right] + \frac{d}{d\tau}\mathbb{E}\left[\ln \gamma_0(X_i) \gamma_\tau(X_i) \right] \\
    &= \frac{d}{d\tau} \mathbb{E}\left[\gamma_\tau(X_i) \right] + \frac{d}{d\tau}\mathbb{E}\left[\ln \gamma_0(X_i) \gamma_\tau(X_i) \right] \\
    &= \int (1+\ln\gamma_0(X_i))(y - \gamma_0(x)) H(dw).
\end{align*}
Hence, an orthogonal moment for $\mathbb{E}[\gamma_0(X_i) \ln \gamma_0(X_i)]$ is:
\[
\psi(w,\gamma_0,\alpha_0,\theta,1) = \gamma_0(x) \ln \gamma_0(x) + \alpha_0(x)(y - \gamma_0(x)) - \theta, \quad \alpha_0(x) = 1 + \ln \gamma_0(x).
\]
A debiased estimator of $\mathbb{E}[\gamma_0(X_i) \ln \gamma_0(X_i)]$ is then given by:
\[
\hat{\theta}_1(1) = \frac{1}{n}\sum_{l = 1}^L \sum_{i \in C_l} \left( \hat{\gamma}_l(X_i) \ln \hat{\gamma}_l(X_i) + \hat{\alpha}_l(X_i)(Y_i - \hat{\gamma}_l(X_i)) \right), \quad \hat{\alpha}_l(X_i) = 1 + \ln \hat{\gamma}_l(X_i).
\]
Hence, a debiased generalized entropy index estimator for $\beta = 1$ is $\hat{\theta}(1) = \frac{\hat{\theta}_1(1) - \ln \hat{\theta}_2}{\hat{\theta}_2}.$ Under regularity conditions and assumptions in the main text, by the Delta Method:
\[
\sqrt{n}(\hat{\theta}(1) - \theta_0(1)) \to_d N\left(0, \frac{1}{\theta_{02}^2} \Sigma_{11}
+ \frac{(\ln \theta_{02} - \theta_{01} - 1)^2}{\theta_{02}^4} \Sigma_{22}
+ \frac{2 (\ln \theta_{02} - \theta_{01} - 1)}{\theta_{02}^3} \Sigma_{12}\right),
\]
where the variance terms are as before but with $\beta = 1$.
%\begin{align*}
%    \Sigma_{11} &= \mathbb{E}\left[\psi(W_i,\gamma_0,\alpha_0,\theta_{01},1)^2\right], \\
%    \Sigma_{22} &= \mathbb{E}\left[(Y_i - \theta_{02})^2\right], \\
%    \Sigma_{12} &= \mathbb{E}\left[\psi(W_i,\gamma_0,\alpha_0,\theta_{01},1)(Y_i - \theta_{02})\right].
%\end{align*}

\section{Extension to Higher-Order U-Statistics}
\label{app_r_order}

For simplicity, the paper focuses on second-order U-statistics. This section briefly discusses extensions to fixed-order $r$-sample U-statistics with $r\geq2$. The case $r=2$ corresponds to the pairwise setting studied in the main text: then
\[
\mathcal I_n^{(2)}
=
\{(i,j):1\le i<j\le n\}
=
\mathcal P,
\]
the kernel reduces to $g(W_i,W_j,\gamma_0,\theta)$, and the block partition below becomes the triangle/rectangle construction of Algorithm \ref{alg:cf_u}.

Let
\[
\theta_0
=
\mathbb E[g(W_1,\ldots,W_r,\gamma_0,\theta_0)],
\]
where $g$ is symmetric in $(W_1,\ldots,W_r)$ and $\gamma_0$ is an unknown nuisance function satisfying $\gamma_0(x)=\mathbb E[Y\mid X=x]$. Suppose the pathwise derivative admits the analogue of Assumption \ref{Ass_linearization}:
\[
\frac{d}{d\tau}
\mathbb E[g(W_1,\ldots,W_r,\gamma_\tau,\theta)]
=
\sum_{k=1}^{r}
\sum_{m=1}^{M}
c_{km}
\frac{d}{d\tau}
\mathbb E\!\left[
\delta_m(W_1,\ldots,W_r,\gamma_0,\theta)
\gamma_\tau(X_k)
\right],
\]
for square-integrable functions
$\delta_m(W_1,\ldots,W_r,\gamma_0,\theta)$, $m=1,\ldots,M$,
and constants $c_{km}$. Applying the projection argument of Lemma \ref{qorth2} term-by-term yields representers $\alpha_{0km}$ and the orthogonal correction
\[
\phi(W_1,\ldots,W_r,\gamma_0,\alpha_0,\theta)
=
\sum_{k=1}^{r}
\sum_{m=1}^{M}
c_{km}
\alpha_{0km}(X_1,\ldots,X_r,\theta)
\bigl(Y_k-\gamma_0(X_k)\bigr).
\]
Hence, $\psi=g+\phi$ is Neyman orthogonal. For $r=2$, this reduces to the construction in Lemma \ref{qorth2}.

\paragraph{Cross-fitting.}
The cross-fitting construction extends naturally to fixed $r>2$. Let
\[
\mathcal I_n^{(r)}
=
\{(i_1,\ldots,i_r):1\leq i_1<\cdots<i_r\leq n\}.
\]
Partition the observations into $T$ groups $G_1,\ldots,G_T$. Each block $I_l\subset\mathcal I_n^{(r)}$ is constructed from tuples whose indices lie in a prescribed collection of subgroups. For each block, define
\[
S_l
=
\{i:\ i \text{ appears in some } (i_1,\ldots,i_r)\in I_l\}.
\]
The nuisance estimators used for block $I_l$ are estimated using only observations in $S_l^c$. The cross-fitted estimator then solves
\[
\sum_{l=1}^{L}
\sum_{(i_1,\ldots,i_r)\in I_l}
\psi(W_{i_1},\ldots,W_{i_r},
\hat\gamma_l,\hat\alpha_l,\theta)
=
0.
\]

For $r=3$, the construction is the direct analogue of the triangle/rectangle decomposition for $r=2$. The set of triples is partitioned into three types of blocks:
\[
\{(i,j,k): i<j<k,\ i,j,k\in G_u\},
\]
corresponding to within-group triples;
\[
\{(i,j,k): i,j\in G_u,\ k\in G_v,\ u\neq v\},
\]
corresponding to two-one blocks; and
\[
\{(i,j,k): i\in G_u,\ j\in G_v,\ k\in G_w,\ u,v,w
\text{ distinct}\},
\]
corresponding to three-way rectangular blocks. As in Algorithm \ref{alg:cf_u}, the rectangular blocks can be subdivided into equal-size sub-blocks so that each block leaves the same number of observations for nuisance estimation. For example, a three-way block based on $G_u\times G_v\times G_w$ can be split by partitioning each group into two halves and taking the resulting $2^3$ sub-rectangles. Similarly, a two-one block can be split by dividing the group contributing two observations and the group contributing one observation so that all resulting blocks use the same number of observations.

This construction ensures the three key properties used in the second-order case: each $r$-tuple is used exactly once; the nuisance estimators for a block are computed from observations outside that block; and the number of observations left for nuisance estimation is balanced across blocks up to the same divisibility issues already present when $r=2$.

\paragraph{Asymptotic analysis.}
The asymptotic theory proceeds in two steps. First, one establishes that the cross-fitted orthogonal estimator is asymptotically equivalent to the infeasible oracle statistic with known nuisance functions. The arguments are analogous to those developed in the second-order case. In particular, Neyman orthogonality removes first-order nuisance estimation effects, while cross-fitting guarantees sufficient conditional independence between the estimating equations and the nuisance estimators. The main additional bookkeeping comes from the larger number of overlapping-index configurations among $r$-tuples, which can be handled using standard U-statistic arguments; see \citet{lee2019u}.

Second, once oracle equivalence is established, asymptotic normality follows from the Hoeffding decomposition for higher-order U-statistics. This second step is standard and relies on the first nondegenerate Hoeffding projection. Since the applications considered in this paper naturally lead to second-order structures, we focus on the practically most relevant case $r=2$.
\putbib[references]   % will only print refs cited in appendix
\end{bibunit}


\begin{thebibliography}{73}
\newcommand{\enquote}[1]{``#1''}
\expandafter\ifx\csname natexlab\endcsname\relax\def\natexlab#1{#1}\fi

\bibitem[\protect\citeauthoryear{Ackerberg, Caves, and Frazer}{Ackerberg et~al.}{2015}]{ackerberg2015identification}
\textsc{Ackerberg, D.~A., K.~Caves, and G.~Frazer} (2015): \enquote{Identification Properties of Recent Production Function Estimators,} \emph{Econometrica}, 83, 2411--2451.

\bibitem[\protect\citeauthoryear{Ackerberg, Chen, Hahn, and Liao}{Ackerberg et~al.}{2014}]{ackerberg2014asymptotic}
\textsc{Ackerberg, D.~A., X.~Chen, J.~Hahn, and Z.~Liao} (2014): \enquote{Asymptotic Efficiency of Semiparametric Two-Step GMM,} \emph{The Review of Economic Studies}, 81, 919--943.

\bibitem[\protect\citeauthoryear{Ai and Chen}{Ai and Chen}{2003}]{ai2003efficient}
\textsc{Ai, C. and X.~Chen} (2003): \enquote{Efficient estimation of models with conditional moment restrictions containing unknown functions,} \emph{Econometrica}, 71, 1795--1843.

\bibitem[\protect\citeauthoryear{Antoine and Lavergne}{Antoine and Lavergne}{2014}]{antoine2014conditional}
\textsc{Antoine, B. and P.~Lavergne} (2014): \enquote{Conditional moment models under semi-strong identification,} \emph{Journal of Econometrics}, 182, 59--69.

\bibitem[\protect\citeauthoryear{Antoine and Sun}{Antoine and Sun}{2022}]{antoine2022partially}
\textsc{Antoine, B. and X.~Sun} (2022): \enquote{Partially linear models with endogeneity: a conditional moment-based approach,} \emph{The Econometrics Journal}, 25, 256--275.

\bibitem[\protect\citeauthoryear{Arga{\~n}araz}{Arga{\~n}araz}{2025}]{arganaraz2025automatic}
\textsc{Arga{\~n}araz, F.} (2025): \enquote{Automatic Debiased Machine Learning of Structural Parameters with General Conditional Moments,} \emph{arXiv preprint arXiv:2512.08423}.

\bibitem[\protect\citeauthoryear{Athey, Tibshirani, and Wager}{Athey et~al.}{2019}]{athey2019generalized}
\textsc{Athey, S., J.~Tibshirani, and S.~Wager} (2019): \enquote{Generalized random forests,} \emph{The Annals of Statistics}, 47, 1148--1178.

\bibitem[\protect\citeauthoryear{Audibert and Tsybakov}{Audibert and Tsybakov}{2007}]{audibert2007fast}
\textsc{Audibert, J.-Y. and A.~B. Tsybakov} (2007): \enquote{Fast learning rates for plug-in classifiers,} .

\bibitem[\protect\citeauthoryear{Aydin}{Aydin}{2022}]{aydin2022consumption}
\textsc{Aydin, D.} (2022): \enquote{Consumption response to credit expansions: Evidence from experimental assignment of 45,307 credit lines,} \emph{American Economic Review}, 112, 1--40.

\bibitem[\protect\citeauthoryear{Bach, Schacht, Chernozhukov, Klaassen, and Spindler}{Bach et~al.}{2024}]{bach2024hyperparameter}
\textsc{Bach, P., O.~Schacht, V.~Chernozhukov, S.~Klaassen, and M.~Spindler} (2024): \enquote{Hyperparameter Tuning for Causal Inference with Double Machine Learning: A Simulation Study,} in \emph{Proceedings of the Third Conference on Causal Learning and Reasoning}, vol. 236 of \emph{Proceedings of Machine Learning Research}, 1065--1117.

\bibitem[\protect\citeauthoryear{Baron, Verner, and Xiong}{Baron et~al.}{2021}]{baron2021banking}
\textsc{Baron, M., E.~Verner, and W.~Xiong} (2021): \enquote{Banking crises without panics,} \emph{The Quarterly Journal of Economics}, 136, 51--113.

\bibitem[\protect\citeauthoryear{Bazzi, Blair, Blattman, Dube, Gudgeon, and Peck}{Bazzi et~al.}{2022}]{bazzi2022promise}
\textsc{Bazzi, S., R.~A. Blair, C.~Blattman, O.~Dube, M.~Gudgeon, and R.~Peck} (2022): \enquote{The promise and pitfalls of conflict prediction: evidence from Colombia and Indonesia,} \emph{Review of Economics and Statistics}, 104, 764--779.

\bibitem[\protect\citeauthoryear{Belloni and Chernozhukov}{Belloni and Chernozhukov}{2011}]{belloni2011}
\textsc{Belloni, A. and V.~Chernozhukov} (2011): \enquote{l1-penalized quantile regression in high-dimensional sparse models,} .

\bibitem[\protect\citeauthoryear{Belloni and Chernozhukov}{Belloni and Chernozhukov}{2013}]{belloni2013least}
---\hspace{-.1pt}---\hspace{-.1pt}--- (2013): \enquote{Least squares after model selection in high-dimensional sparse models,} .

\bibitem[\protect\citeauthoryear{Bernardo, Durlauf, Kourtellos, and Tan}{Bernardo et~al.}{2025}]{bernardo2025model}
\textsc{Bernardo, G., S.~N. Durlauf, A.~Kourtellos, and C.~M. Tan} (2025): \enquote{Model Uncertainty and Measures of Inequality of Opportunity,} Tech. rep., National Bureau of Economic Research.

\bibitem[\protect\citeauthoryear{Bickel}{Bickel}{1982}]{bickel1982adaptive}
\textsc{Bickel, P.~J.} (1982): \enquote{On adaptive estimation,} \emph{The Annals of Statistics}, 647--671.

\bibitem[\protect\citeauthoryear{Bradley}{Bradley}{1997}]{bradley1997use}
\textsc{Bradley, A.~P.} (1997): \enquote{The use of the area under the ROC curve in the evaluation of machine learning algorithms,} \emph{Pattern recognition}, 30, 1145--1159.

\bibitem[\protect\citeauthoryear{Breheny and Huang}{Breheny and Huang}{2015}]{breheny2015group}
\textsc{Breheny, P. and J.~Huang} (2015): \enquote{Group descent algorithms for nonconvex penalized linear and logistic regression models with grouped predictors,} \emph{Statistics and computing}, 25, 173--187.

\bibitem[\protect\citeauthoryear{Broner, Clancy, Erce, and Martin}{Broner et~al.}{2022}]{broner2022fiscal}
\textsc{Broner, F., D.~Clancy, A.~Erce, and A.~Martin} (2022): \enquote{Fiscal multipliers and foreign holdings of public debt,} \emph{The Review of Economic Studies}, 89, 1155--1204.

\bibitem[\protect\citeauthoryear{Brunori, Hufe, and Mahler}{Brunori et~al.}{2021}]{brunori2021roots}
\textsc{Brunori, P., P.~Hufe, and D.~Mahler} (2021): \enquote{The roots of inequality: Estimating inequality of opportunity from regression trees and forests,} \emph{IZA Discussion Paper No. 14689}.

\bibitem[\protect\citeauthoryear{Brunori and Neidh{\"o}fer}{Brunori and Neidh{\"o}fer}{2021}]{brunori2021evolution}
\textsc{Brunori, P. and G.~Neidh{\"o}fer} (2021): \enquote{The evolution of inequality of opportunity in Germany: A machine learning approach,} \emph{Review of Income and Wealth}, 67, 900--927.

\bibitem[\protect\citeauthoryear{Brunori, Palmisano, and Peragine}{Brunori et~al.}{2019}]{brunori2019inequality}
\textsc{Brunori, P., F.~Palmisano, and V.~Peragine} (2019): \enquote{Inequality of opportunity in sub-Saharan Africa,} \emph{Applied Economics}, 51, 6428--6458.

\bibitem[\protect\citeauthoryear{Carranza}{Carranza}{2022}]{carranza2022}
\textsc{Carranza, R.} (2022): \enquote{Upper and Lower Bound Estimates of Inequality of Opportunity: A Cross-National Comparison for Europe,} \emph{Review of Income and Wealth}.

\bibitem[\protect\citeauthoryear{Chan, Gentzkow, and Yu}{Chan et~al.}{2022}]{chan2022selection}
\textsc{Chan, D.~C., M.~Gentzkow, and C.~Yu} (2022): \enquote{Selection with variation in diagnostic skill: Evidence from radiologists,} \emph{The Quarterly Journal of Economics}, 137, 729--783.

\bibitem[\protect\citeauthoryear{Chen}{Chen}{2007}]{chen2007large}
\textsc{Chen, X.} (2007): \enquote{Large sample sieve estimation of semi-nonparametric models,} \emph{Handbook of econometrics}, 6, 5549--5632.

\bibitem[\protect\citeauthoryear{Chernozhukov, Chetverikov, Demirer, Duflo, Hansen, Newey, and Robins}{Chernozhukov et~al.}{2018}]{chernozhukov2018double}
\textsc{Chernozhukov, V., D.~Chetverikov, M.~Demirer, E.~Duflo, C.~Hansen, W.~Newey, and J.~Robins} (2018): \enquote{{Double/debiased machine learning for treatment and structural parameters},} \emph{The Econometrics Journal}, 21, C1--C68.

\bibitem[\protect\citeauthoryear{Chernozhukov, Escanciano, Ichimura, Newey, and Robins}{Chernozhukov et~al.}{2022}]{chernozhukov2022locally}
\textsc{Chernozhukov, V., J.~C. Escanciano, H.~Ichimura, W.~K. Newey, and J.~M. Robins} (2022): \enquote{Locally robust semiparametric estimation,} \emph{Econometrica}, 90, 1501--1535.

\bibitem[\protect\citeauthoryear{Chernozhukov, Hansen, and Spindler}{Chernozhukov et~al.}{2015}]{chernozhukov2015valid}
\textsc{Chernozhukov, V., C.~Hansen, and M.~Spindler} (2015): \enquote{Valid post-selection and post-regularization inference: An elementary, general approach,} \emph{Annual Review of Economics}, 7, 649--688.

\bibitem[\protect\citeauthoryear{Cl{\'e}men{\c{c}}on and Robbiano}{Cl{\'e}men{\c{c}}on and Robbiano}{2011}]{clemenccon2011minimax}
\textsc{Cl{\'e}men{\c{c}}on, S. and S.~Robbiano} (2011): \enquote{Minimax learning rates for bipartite ranking and plug-in rules,} in \emph{International Conference on Machine Learning}, 441--448.

\bibitem[\protect\citeauthoryear{Cooper and Haltiwanger}{Cooper and Haltiwanger}{2006}]{cooper2006nature}
\textsc{Cooper, R.~W. and J.~C. Haltiwanger} (2006): \enquote{On the nature of capital adjustment costs,} \emph{The Review of Economic Studies}, 73, 611--633.

\bibitem[\protect\citeauthoryear{DellaVigna, Kim, and Linos}{DellaVigna et~al.}{2024}]{dellavigna2024bottlenecks}
\textsc{DellaVigna, S., W.~Kim, and E.~Linos} (2024): \enquote{Bottlenecks for evidence adoption,} \emph{Journal of Political Economy}, 132, 2748--2789.

\bibitem[\protect\citeauthoryear{Dom{\'\i}nguez and Lobato}{Dom{\'\i}nguez and Lobato}{2004}]{dominguez2004consistent}
\textsc{Dom{\'\i}nguez, M.~A. and I.~N. Lobato} (2004): \enquote{Consistent estimation of models defined by conditional moment restrictions,} \emph{Econometrica}, 72, 1601--1615.

\bibitem[\protect\citeauthoryear{Einav, Finkelstein, Mullainathan, and Obermeyer}{Einav et~al.}{2018}]{einav2018predictive}
\textsc{Einav, L., A.~Finkelstein, S.~Mullainathan, and Z.~Obermeyer} (2018): \enquote{Predictive modeling of US health care spending in late life,} \emph{Science}, 360, 1462--1465.

\bibitem[\protect\citeauthoryear{El~Hanchi and Erdogdu}{El~Hanchi and Erdogdu}{2023}]{el_hanchi2023optimal}
\textsc{El~Hanchi, A. and M.~A. Erdogdu} (2023): \enquote{Optimal Excess Risk Bounds for Empirical Risk Minimization on $p$-Norm Linear Regression,} in \emph{Advances in Neural Information Processing Systems (NeurIPS) 2023}.

\bibitem[\protect\citeauthoryear{Escanciano}{Escanciano}{2006}]{escanciano2006consistent}
\textsc{Escanciano, J.~C.} (2006): \enquote{A consistent diagnostic test for regression models using projections,} \emph{Econometric Theory}, 22, 1030--1051.

\bibitem[\protect\citeauthoryear{Escanciano}{Escanciano}{2018}]{escanciano2018simple}
---\hspace{-.1pt}---\hspace{-.1pt}--- (2018): \enquote{A simple and robust estimator for linear regression models with strictly exogenous instruments,} \emph{The Econometrics Journal}, 21, 36--54.

\bibitem[\protect\citeauthoryear{Escanciano, Jacho-Ch{\'a}vez, and Lewbel}{Escanciano et~al.}{2014}]{escanciano2014uniform}
\textsc{Escanciano, J.~C., D.~T. Jacho-Ch{\'a}vez, and A.~Lewbel} (2014): \enquote{Uniform convergence of weighted sums of non and semiparametric residuals for estimation and testing,} \emph{Journal of Econometrics}, 178, 426--443.

\bibitem[\protect\citeauthoryear{Escanciano and Terschuur}{Escanciano and Terschuur}{2023}]{escanciano_terschuur_2023}
\textsc{Escanciano, J.~C. and J.~R. Terschuur} (2023): \enquote{Machine Learning Inference on Inequality of Opportunity,} ArXiv:2303.11418v3.

\bibitem[\protect\citeauthoryear{Farrell, Liang, and Misra}{Farrell et~al.}{2021}]{farrell2021deep}
\textsc{Farrell, M.~H., T.~Liang, and S.~Misra} (2021): \enquote{Deep neural networks for estimation and inference,} \emph{Econometrica}, 89, 181--213.

\bibitem[\protect\citeauthoryear{Feng, Hong, Li, Tang, and Wang}{Feng et~al.}{2026}]{feng2026maximum}
\textsc{Feng, K., H.~Hong, J.~Li, K.~Tang, and J.~Wang} (2026): \enquote{Maximum AUC and MRC Estimation,} .

\bibitem[\protect\citeauthoryear{Ferreira and Gignoux}{Ferreira and Gignoux}{2011}]{ferreira2011measurement}
\textsc{Ferreira, F.~H. and J.~Gignoux} (2011): \enquote{The measurement of inequality of opportunity: Theory and an application to Latin America,} \emph{Review of income and wealth}, 57, 622--657.

\bibitem[\protect\citeauthoryear{Hanley and McNeil}{Hanley and McNeil}{1982}]{hanley1982meaning}
\textsc{Hanley, J.~A. and B.~J. McNeil} (1982): \enquote{The meaning and use of the area under a receiver operating characteristic (ROC) curve.} \emph{Radiology}, 143, 29--36.

\bibitem[\protect\citeauthoryear{Herrera, Ordonez, and Trebesch}{Herrera et~al.}{2020}]{herrera2020political}
\textsc{Herrera, H., G.~Ordonez, and C.~Trebesch} (2020): \enquote{Political booms, financial crises,} \emph{Journal of Political Economy}, 128, 507--543.

\bibitem[\protect\citeauthoryear{Heuchenne and Jacquemain}{Heuchenne and Jacquemain}{2022}]{heuchenne2022inference}
\textsc{Heuchenne, C. and A.~Jacquemain} (2022): \enquote{Inference for monotone single-index conditional means: A Lorenz regression approach,} \emph{Computational Statistics \& Data Analysis}, 167, 107347.

\bibitem[\protect\citeauthoryear{Hsu and Lieli}{Hsu and Lieli}{2021}]{hsu2021inference}
\textsc{Hsu, Y.-C. and R.~P. Lieli} (2021): \enquote{Inference for ROC curves based on estimated predictive indices,} \emph{arXiv preprint arXiv:2112.01772}.

\bibitem[\protect\citeauthoryear{Hufe, Peichl, and Weishaar}{Hufe et~al.}{2022}]{hufe2022lower}
\textsc{Hufe, P., A.~Peichl, and D.~Weishaar} (2022): \enquote{Lower and upper bound estimates of inequality of opportunity for emerging economies,} \emph{Social Choice and Welfare}, 58, 395--427.

\bibitem[\protect\citeauthoryear{Ichimura and Newey}{Ichimura and Newey}{2022}]{ichimura2022influence}
\textsc{Ichimura, H. and W.~K. Newey} (2022): \enquote{The influence function of semiparametric estimators,} \emph{Quantitative Economics}, 13, 29--61.

\bibitem[\protect\citeauthoryear{Jord{\`a}, Richter, Schularick, and Taylor}{Jord{\`a} et~al.}{2021}]{jorda2021bank}
\textsc{Jord{\`a}, {\`O}., B.~Richter, M.~Schularick, and A.~M. Taylor} (2021): \enquote{Bank capital redux: Solvency, liquidity, and crisis,} \emph{The Review of Economic Studies}, 88, 260--286.

\bibitem[\protect\citeauthoryear{Klaassen}{Klaassen}{1987}]{klaassen1987consistent}
\textsc{Klaassen, C.~A.} (1987): \enquote{Consistent estimation of the influence function of locally asymptotically linear estimators,} \emph{The Annals of Statistics}, 15, 1548--1562.

\bibitem[\protect\citeauthoryear{Kleinberg, Lakkaraju, Leskovec, Ludwig, and Mullainathan}{Kleinberg et~al.}{2018}]{kleinberg2018human}
\textsc{Kleinberg, J., H.~Lakkaraju, J.~Leskovec, J.~Ludwig, and S.~Mullainathan} (2018): \enquote{Human decisions and machine predictions,} \emph{The Quarterly Journal of Economics}, 133, 237--293.

\bibitem[\protect\citeauthoryear{Lahiri and Yang}{Lahiri and Yang}{2018}]{lahiri2018confidence}
\textsc{Lahiri, K. and L.~Yang} (2018): \enquote{Confidence bands for ROC curves with serially dependent data,} \emph{Journal of Business \& Economic Statistics}, 36, 115--130.

\bibitem[\protect\citeauthoryear{Lee}{Lee}{2019}]{lee2019u}
\textsc{Lee, A.~J.} (2019): \emph{U-statistics: Theory and Practice}, Routledge.

\bibitem[\protect\citeauthoryear{Levinsohn and Petrin}{Levinsohn and Petrin}{2003}]{levinsohn2003estimating}
\textsc{Levinsohn, J. and A.~Petrin} (2003): \enquote{Estimating Production Functions Using Inputs to Control for Unobservables,} \emph{Review of Economic Studies}, 70, 317--341.

\bibitem[\protect\citeauthoryear{Ludwig and Mullainathan}{Ludwig and Mullainathan}{2024}]{ludwig2024machine}
\textsc{Ludwig, J. and S.~Mullainathan} (2024): \enquote{Machine learning as a tool for hypothesis generation,} \emph{The Quarterly Journal of Economics}, 139, 751--827.

\bibitem[\protect\citeauthoryear{Massart}{Massart}{2000}]{massart2000some}
\textsc{Massart, P.} (2000): \enquote{Some applications of concentration inequalities to statistics,} in \emph{Annales de la Facult{\'e} des sciences de Toulouse: Math{\'e}matiques}, vol.~9, 245--303.

\bibitem[\protect\citeauthoryear{M{\"u}ller and Verner}{M{\"u}ller and Verner}{2024}]{muller2024credit}
\textsc{M{\"u}ller, K. and E.~Verner} (2024): \enquote{Credit allocation and macroeconomic fluctuations,} \emph{Review of Economic Studies}, 91, 3645--3676.

\bibitem[\protect\citeauthoryear{Newey}{Newey}{1991}]{newey1991uniform}
\textsc{Newey, W.~K.} (1991): \enquote{Uniform convergence in probability and stochastic equicontinuity,} \emph{Econometrica: Journal of the Econometric Society}, 1161--1167.

\bibitem[\protect\citeauthoryear{Newey}{Newey}{1994}]{newey1994asymptotic}
---\hspace{-.1pt}---\hspace{-.1pt}--- (1994): \enquote{"The asymptotic variance of semiparametric estimators,",} \emph{Econometrica: Journal of the Econometric Society}, 1349--1382.

\bibitem[\protect\citeauthoryear{Newey and McFadden}{Newey and McFadden}{1994}]{newey1994large}
\textsc{Newey, W.~K. and D.~McFadden} (1994): \enquote{Large sample estimation and hypothesis testing,} \emph{Handbook of econometrics}, 4, 2111--2245.

\bibitem[\protect\citeauthoryear{Newson}{Newson}{2006}]{newson2006confidence}
\textsc{Newson, R.} (2006): \enquote{Confidence intervals for rank statistics: Somers' D and extensions,} \emph{The Stata Journal}, 6, 309--334.

\bibitem[\protect\citeauthoryear{Olley and Pakes}{Olley and Pakes}{1996}]{olley1996dynamics}
\textsc{Olley, G.~S. and A.~Pakes} (1996): \enquote{The Dynamics of Productivity in the Telecommunications Equipment Industry,} \emph{Econometrica}, 64, 1263--1297.

\bibitem[\protect\citeauthoryear{Peng and Yang}{Peng and Yang}{2025}]{peng2025adversarial}
\textsc{Peng, J. and Y.~Yang} (2025): \enquote{Adversarial Learning for Nonparametric Regression: Minimax Rate and Adaptive Estimation,} \emph{arXiv preprint arXiv:2506.01267}.

\bibitem[\protect\citeauthoryear{Peter and Ruane}{Peter and Ruane}{2023}]{peter2023aggregate}
\textsc{Peter, A. and C.~Ruane} (2023): \enquote{The Aggregate Importance of Intermediate Input Substitutability,} Nber working paper, National Bureau of Economic Research, conditionally accepted at Econometrica.

\bibitem[\protect\citeauthoryear{Ritzwoller and Romano}{Ritzwoller and Romano}{2023}]{ritzwoller2023reproducible}
\textsc{Ritzwoller, D.~M. and J.~P. Romano} (2023): \enquote{Reproducible aggregation of sample-split statistics,} \emph{arXiv preprint arXiv:2311.14204}.

\bibitem[\protect\citeauthoryear{Salas-Rojo and Rodr{\'\i}guez}{Salas-Rojo and Rodr{\'\i}guez}{2022}]{salas2022inheritances}
\textsc{Salas-Rojo, P. and J.~G. Rodr{\'\i}guez} (2022): \enquote{Inheritances and wealth inequality: a machine learning approach,} \emph{The Journal of Economic Inequality}, 1--25.

\bibitem[\protect\citeauthoryear{Schick}{Schick}{1986}]{schick1986asymptotically}
\textsc{Schick, A.} (1986): \enquote{On asymptotically efficient estimation in semiparametric models,} \emph{The Annals of Statistics}, 1139--1151.

\bibitem[\protect\citeauthoryear{Serfling}{Serfling}{1980}]{serfling1980approximation}
\textsc{Serfling, R.~J.} (1980): \emph{Approximation Theorems of Mathematical Statistics}, Wiley Series in Probability and Mathematical Statistics, New York: Wiley.

\bibitem[\protect\citeauthoryear{Shin}{Shin}{2008}]{shin2008semiparametric}
\textsc{Shin, Y.} (2008): \enquote{Semiparametric estimation of the Box--Cox transformation model,} \emph{The Econometrics Journal}, 11, 517--537.

\bibitem[\protect\citeauthoryear{{\v{S}}id{\'a}k}{{\v{S}}id{\'a}k}{1967}]{vsidak1967rectangular}
\textsc{{\v{S}}id{\'a}k, Z.} (1967): \enquote{Rectangular confidence regions for the means of multivariate normal distributions,} \emph{Journal of the American statistical association}, 62, 626--633.

\bibitem[\protect\citeauthoryear{Sriperumbudur, Fukumizu, and Lanckriet}{Sriperumbudur et~al.}{2011}]{sriperumbudur2011universality}
\textsc{Sriperumbudur, B.~K., K.~Fukumizu, and G.~R. Lanckriet} (2011): \enquote{Universality, characteristic kernels and RKHS embedding of measures,} \emph{Journal of Machine Learning Research}, 12, 2389--2410.

\bibitem[\protect\citeauthoryear{Tsybakov}{Tsybakov}{2009}]{tsybakov2009introduction}
\textsc{Tsybakov, A.~B.} (2009): \emph{Introduction to Nonparametric Estimation}, Springer Series in Statistics, Springer.

\bibitem[\protect\citeauthoryear{Van~der Vaart}{Van~der Vaart}{2000}]{van2000asymptotic}
\textsc{Van~der Vaart, A.~W.} (2000): \emph{Asymptotic statistics}, vol.~3, Cambridge university press.

\bibitem[\protect\citeauthoryear{Wager and Walther}{Wager and Walther}{2015}]{wager2015adaptive}
\textsc{Wager, S. and G.~Walther} (2015): \enquote{Adaptive concentration of regression trees, with application to random forests,} \emph{arXiv preprint arXiv:1503.06388}.

\end{thebibliography}


\begin{thebibliography}{8}
\newcommand{\enquote}[1]{``#1''}
\expandafter\ifx\csname natexlab\endcsname\relax\def\natexlab#1{#1}\fi

\bibitem[\protect\citeauthoryear{Atkinson et~al.}{Atkinson et~al.}{1970}]{atkinson1970measurement}
\textsc{Atkinson, A.~B. et~al.} (1970): \enquote{On the measurement of inequality,} \emph{Journal of economic theory}, 2, 244--263.

\bibitem[\protect\citeauthoryear{Lee}{Lee}{2019}]{lee2019u}
\textsc{Lee, A.~J.} (2019): \emph{U-statistics: Theory and Practice}, Routledge.

\bibitem[\protect\citeauthoryear{Mao}{Mao}{2018}]{mao2018causal}
\textsc{Mao, L.} (2018): \enquote{On causal estimation using U-statistics,} \emph{Biometrika}, 105, 215--220.

\bibitem[\protect\citeauthoryear{Newey}{Newey}{1994}]{newey1994asymptotic}
\textsc{Newey, W.~K.} (1994): \enquote{"The asymptotic variance of semiparametric estimators,",} \emph{Econometrica: Journal of the Econometric Society}, 1349--1382.

\bibitem[\protect\citeauthoryear{Shorrocks}{Shorrocks}{1980}]{shorrocks1980class}
\textsc{Shorrocks, A.~F.} (1980): \enquote{The class of additively decomposable inequality measures,} \emph{Econometrica: Journal of the Econometric Society}, 613--625.

\bibitem[\protect\citeauthoryear{Terschuur}{Terschuur}{2023}]{terschuur2023}
\textsc{Terschuur, J.} (2023): \enquote{Locally Robust Policy Learning: Inequality, Inequality of Opportunity and Intergenerational mobility,} \emph{Working Paper}.

\bibitem[\protect\citeauthoryear{Theil}{Theil}{1967}]{theil1967economics}
\textsc{Theil, H.} (1967): \emph{Economics and information theory}, Amsterdam: Nort-Holland,.

\bibitem[\protect\citeauthoryear{Yin, Yuan, and Tan}{Yin et~al.}{2024}]{yin2024highly}
\textsc{Yin, A., A.~Yuan, and M.~T. Tan} (2024): \enquote{Highly robust causal semiparametric U-statistic with applications in biomedical studies,} \emph{The international journal of biostatistics}, 20, 69--91.

\end{thebibliography}


\begin{thebibliography}{0}
\newcommand{\enquote}[1]{``#1''}
\expandafter\ifx\csname natexlab\endcsname\relax\def\natexlab#1{#1}\fi

\end{thebibliography}
\end{document}